\definecolor{dark-red}{rgb}{0.4,0.15,0.15}
\definecolor{dark-blue}{rgb}{0.15,0.15,0.4}
\definecolor{medium-blue}{rgb}{0,0,0.5}
\definecolor{gray}{rgb}{0.5,0.5,0.5}
\title{Fine-Grained Complexity of $k$-OPT in Bounded-Degree Graphs \\ for Solving TSP\footnote{This research was initiated at the Shonan Meeting {\em Parameterized Graph Algorithms \& Data Reduction: Theory Meets Practice} held  during March 4-8, 2019 in Shonan Village Center, Japan.
Yoichi Iwata thanks the Kaggle Traveling Santa 2018 Competition for motivating him to study practical TSP heuristics. He also thanks Shogo Murai for valuable discussion about the possibility of faster $k$-OPT algorithms.
}}
\newtheorem{theorem}{Theorem}[]
\newtheorem{lemma}[theorem]{Lemma}
\newtheorem{corollary}[theorem]{Corollary}
\newtheorem{observation}[theorem]{Observation}
\newtheorem{claim}{Claim}
\let\plainqed\qedsymbol
\newcommand{\claimqed}{$\lrcorner$}
\newenvironment{claimproof}{\begin{proof}\renewcommand{\qedsymbol}{\claimqed}}{\end{proof}\renewcommand{\qedsymbol}{\plainqed}}
\renewcommand{\ALG@name}{Pseudocode}
\newtheorem{assumption}{Assumption}
\newcommand{\ignore}[1]{}%
\newcommand{\ProblemFormat}[1]{{\sc #1}}
\newcommand{\ProblemName}[1]{\ProblemFormat{#1}\xspace}
\newcommand{\kOPT}{\textsc{$k$-opt}\xspace}
\newcommand{\kprimeOPT}{\textsc{$k'$-opt}\xspace}
\newcommand{\gain}{{\rm gain}}
\newcommand{\tw}{{\rm tw}}
\newcommand{\pw}{{\rm pw}}
\newcommand{\probKOPTOpt}{\ProblemName{$k$-opt Optimization}}
\newcommand{\probKOPTDec}{\ProblemName{$k$-opt Detection}}
\newcommand{\probSubIso}{\textsc{$k$-Partitioned Subgraph Isomorphism}\xspace}
\newcommand{\probOPTDec}[1]{\ProblemName{$#1$-opt Detection}}
\newcommand{\probTriangle}{\ProblemName{Triangle Detection}}
\newcommand{\heading}[1]{\medskip\noindent{\bf #1.\ }}%
\newcommand{\Oh}{{\ensuremath{\mathcal{O}}}}
\newcommand{\Cc}{{\ensuremath{\mathcal{C}}}}
\newcommand{\Tt}{{\ensuremath{\mathcal{T}}}}
\tikzstyle{vp}=[circle,draw,inner sep=0.08pt, minimum size=0.15cm]
\def\bendedge{50}
\def\colo{red}
\def\colob{blue}
\DeclareMathOperator{\polylog}{polylog}
\newcommand{\defparproblem}[4]{
 \vspace{1mm}
\noindent\fbox{
 \begin{minipage}{0.96\textwidth}
 \begin{tabular*}{\textwidth}{@{\extracolsep{\fill}}lr} #1 & {\bf{Parameter:}} #3 \\ \end{tabular*}
 {\bf{Input:}} #2 \\
 {\bf{Question:}} #4
 \end{minipage}
 }
 \vspace{1mm}
}
\begin{document}

\date{}

\author{
  {\'E}douard Bonnet\thanks{ENS Lyon, LIP, Lyon, France, \texttt{edouard.bonnet@ens-lyon.fr}}
\and
  Yoichi Iwata\thanks{National Institute of Informatics, Japan, \texttt{yiwata@nii.ac.jp}. Supported by JSPS KAKENHI Grant Number JP17K12643.}
\and
  Bart M. P. Jansen\thanks{Eindhoven University of Technology, Eindhoven, The Netherlands, \texttt{b.m.p.jansen@tue.nl}. Supported by ERC Starting Grant ReduceSearch (Grant Agreement No 803421).}
\and
  \L ukasz Kowalik\thanks{Institute of Informatics, University of Warsaw, Poland, \texttt{kowalik@mimuw.edu.pl}. Supported by ERC Starting Grant TOTAL (Grant Agreement No 677651).}
}

\maketitle

\begin{abstract}
The \textsc{Traveling Salesman Problem} asks to find a minimum-weight Hamiltonian cycle in an edge-weighted complete graph. Local search is a widely-employed strategy for finding good solutions to TSP. A popular neighborhood operator for local search is $k$-opt, which turns a Hamiltonian cycle~$\mathcal{C}$ into a new Hamiltonian cycle~$\mathcal{C}'$ by replacing~$k$ edges. We analyze the problem of determining whether the weight of a given cycle can be decreased by a $k$-opt move. Earlier work has shown that (i) assuming the Exponential Time Hypothesis, there is no algorithm that can detect whether or not a given Hamiltonian cycle~$\mathcal{C}$ in an $n$-vertex input can be improved by a $k$-opt move in time~$f(k) n^{o(k / \log k)}$ for any function~$f$, while (ii) it is possible to improve on the brute-force running time of~$\Oh(n^k)$ and save linear factors in the exponent. Modern TSP heuristics are very successful at identifying the \emph{most promising} edges to be used in $k$-opt moves, and experiments show that very good global solutions can already be reached using only the top-$\mathcal{O}(1)$ most promising edges incident to each vertex. This leads to the following question: can improving $k$-opt moves be found efficiently in graphs of bounded degree? We answer this question in various regimes, presenting new algorithms and conditional lower bounds. We show that the aforementioned ETH lower bound also holds for graphs of maximum degree three, but that in bounded-degree graphs the best improving $k$-move can be found in time $\Oh(n^{(23/135+\epsilon_k)k})$, where $\lim_{k\rightarrow\infty} \epsilon_k = 0$. This improves upon the best-known bounds for general graphs. Due to its practical importance, we devote special attention to the range of~$k$ in which improving $k$-moves in bounded-degree graphs can be found in \emph{quasi-linear} time. For~$k \leq 7$, we give quasi-linear time algorithms for general weights. For~$k=8$ we obtain a quasi-linear time algorithm when the weights are bounded by $\Oh(\polylog n)$. On the other hand, based on established fine-grained complexity hypotheses about the impossibility of detecting a triangle in edge-linear time, we prove that the~$k = 9$ case does not admit quasi-linear time algorithms. Hence we fully characterize the values of~$k$ for which quasi-linear time algorithms exist for polylogarithmic weights on bounded-degree graphs.

In order to get the time bound of our algorithm for arbitrary values of $k$, we show that every $k$-edge even multigraph has pathwidth at most $23k/135+o(k)$. 
This result may be of independent interest.
Indeed, we observe that it yields improved running time bounds for counting $k$-vertex paths and cycles.
\end{abstract}

\clearpage

\section{Introduction}
\subsection{Motivation}
The \textsc{Traveling Salesman Problem} (TSP) hardly needs an introduction; it is one of the most important problems in combinatorial optimization, which asks to find a Hamiltonian cycle of minimum weight in an edge-weighted complete graph. Local search is widely used in practical TSP solvers~\cite{JohnsonM02,JohnsonMc97}.
The most commonly used neighborhood is a \emph{$k$-move} (or $k$-opt move).
A $k$-move on a Hamiltonian cycle $\mathcal{C}$ is a pair $(E^-,E^+)$ of  edge sets such that $E^-\subseteq E(\mathcal{C})$, $|E^-|=|E^+|=k$ and $(\mathcal{C}\setminus E^-)\cup E^+$ is also a Hamiltonian cycle.
Marx~\cite{Marx08} showed that finding an improving $k$-move (i.e., a $k$-move that results in a lighter Hamiltonian cycle) is W[1]-hard parameterized by $k$,
and this result was refined by Guo~et~al.~\cite{GuoHNS13} to obtain an $f(k)n^{\Omega(k/\log k)}$ lower bound under the Exponential Time Hypothesis (ETH).
For small values of $k$, the current fastest running time is $\Oh(n^k)$ for $k=2,3$ (by exhaustive search), $\Oh(n^3)$ for $k=4$~\cite{BergBJW16}, and $\Oh(n^{3.4})$ for $k=5$~\cite{DBLP:conf/esa/CyganKS17}.
Moreover, de~Berg~et~al.~\cite{BergBJW16} and Cygan~et~al.~\cite{DBLP:conf/esa/CyganKS17} showed that improving the running time to $\Oh(n^{3-\epsilon})$ for $k=3$ or $k=4$ implies a breakthrough result of an $\Oh(n^{3-\delta})$-time algorithm for \ProblemName{All-Pairs Shortest Paths}.

From the hardness shown by the theoretical studies, it seems that local search can be applied only to small graphs.
Nevertheless, state-of-the-art local search TSP solvers can deal with large graphs with tens of thousands of vertices.
This is mainly due to the following two heuristics.
\begin{enumerate}
	\item They sparsify the input graph by picking the top-$d$ important incident edges for each vertex according to an appropriate importance measure.
		  For example, Lin-Kernighan~\cite{LinK73} picks the top-5 nearest neighbors,
		  and its extension LKH~\cite{Helsgaun00} picks the top-5 $\alpha$-nearest neighbors, where the $\alpha$-distance of an edge is the increase of the Held-Karp lower bound~\cite{HeldK71} by including the edge.
		  The empirical evaluation by Helsgaun~\cite{Helsgaun00} showed that the sparsification by the $\alpha$-nearest neighbors can preserve almost optimal solutions.
	\item They mainly focus on \emph{sequential} $k$-moves. In general, $E^-\cup E^+$ is a set of edge-disjoint closed walks, each of which alternately uses edges in $E^-$ and $E^+$.
		  If it consists of a single closed walk, the move is called sequential.
		  Graphs of maximum degree $d$ with~$n$ vertices have at most $n(2(d-2))^{k-1}$ sequential $k$-moves
		  ($n$ choices for the starting point, 2 choices for the next edge in $E^-$, and at most $d-2$ choices for the next edge in $E^+$),
		  which is linear in $n$ when considering $d$ and $k$ as constants.
		  On the other hand, linear-time computation of non-sequential $k$-moves appears non-trivial.
		  Lin-Kernighan does not search for non-sequential moves at all, and after it finds a local optimum, it applies special non-sequential 4-moves called \emph{double bridges} to get out of the local optimum.
		  LKH-2~\cite{Helsgaun09} improves Lin-Kernighan by heuristically searching for non-sequential moves during the local search.
\end{enumerate}

This state of affairs raises the following questions: what is the complexity of finding improving $k$-moves in bounded-degree graphs? How does the complexity scale with~$k$, and can it be done efficiently for small values of~$k$? Since improving \emph{sequential moves} can be found in linear time for fixed~$k$ and~$d$, to answer these questions we have to investigate non-sequential $k$-moves in bounded-degree graphs.

\subsection{Our contributions}

We classify the complexity of finding improving $k$-moves in bounded-degree graphs in various regimes. We present improved algorithms that exploit the degree restrictions using the structure of $k$-moves, treewidth bounds, color-coding, and suitable data structures. We also give new lower bounds based on the Exponential Time Hypothesis (ETH) and hypotheses from fine-grained complexity concerning the complexity of detecting triangles. To state our results in more detail, we first introduce the two problem variants we consider; a weak variant to which our lower bounds already apply, and a harder variant which can be solved by our algorithms.

\defparproblem{\probKOPTDec}
{An undirected graph~$G$, a weight function~$w \colon E(G) \to \mathbb{Z}$, an integer~$k$, and a Hamiltonian cycle~$\mathcal{C} \subseteq E(G)$.}
{$k.$}
{Can~$\mathcal{C}$ be changed into a Hamiltonian cycle of strictly smaller weight by a $k$-move?}

The related optimization problem \probKOPTOpt is to compute, given a Hamiltonian cycle in the graph, a $k$-move that gives the largest cost improvement, or report that no improving $k$-move exists. With this terminology, we describe our results.

We show that \probKOPTDec is unlikely to be fixed-parameter tractable on bounded-degree graphs: we give a new constant-degree lower-bound construction to show that there is no function~$f$ for which \probKOPTDec on subcubic graphs with weights~$\{1,2\}$ can be solved in time~$f(k) \cdot n^{o(k / \log k)}$, unless ETH fails. Hence the running time lower bound for general graphs by Guo et al.~\cite{GuoHNS13} continues to hold in this very restricted setting. While the degree restriction does not make the problem fixed-parameter tractable, it is possible to obtain faster algorithms. By adapting the approach of Cygan et al.~\cite{DBLP:conf/esa/CyganKS17}, exploiting the fact that the number of sequential moves is linear in~$n$ in bounded-degree graphs, and proving a new upper bound on the pathwidth of an $k$-edge even graph, we show that \probKOPTOpt in $n$-vertex graphs of maximum degree $\Oh(1)$ can be solved in time $\Oh(n^{(23/135+\epsilon_k)k})=\Oh(n^{(0.1704+\epsilon_k)k})$, where $\lim_{k\rightarrow\infty} \epsilon_k = 0$. This improves on the behavior for general graphs, where the current-best running time~\cite{DBLP:conf/esa/CyganKS17} is~$\Oh(n^{(1/4 + \epsilon_k)k})$.

Since quasi-linear running times are most useful for dealing with large inputs, we perform a fine-grained analysis of the range of~$k$ for which improving $k$-moves can be found in time~$\Oh(n \polylog n)$ on $n$-vertex graphs. Observe that in the bounded-degree setting, the number of edges~$m$ is~$\Oh(n)$. We prove lower bounds using the hypothesis that detecting a triangle in an unweighted graph cannot be done in nearly-linear time in the number of edges~$m$, which was formulated in several ways by Abboud and Vassilevska Williams~\cite[Conjectures 2--3]{AbboudW14}. By an efficient reduction from \probTriangle, we show that an algorithm with running time~$\Oh(n \polylog n)$ for \probOPTDec{9} in subcubic graphs with weights~$\{1,2\}$ implies that a triangle in an $m$-edge graph can be found in time~$\Oh(m \polylog m)$, contradicting popular conjectures. We complement these lower bounds by quasi-linear algorithms for all~$k \leq 8$ to obtain a complete dichotomy for the case of integer weights bounded by $\Oh(\polylog n)$. When the weights are not bounded, we obtain quasi-linear time algorithms for all~$k \leq 7$, leaving open only the case~$k=8$.

In order to get the time bound of our algorithm for arbitrary values of $k$, we show that every $k$-edge even multigraph has pathwidth at most $23k/135+o(k)$. 
This result may be of independent interest.
Indeed, we observe that it improves the running time of the $k^{O(k)} n^{0.174k+o(k)}$-time algorithm of Curticapean, Dell and Marx~\cite{CurticapeanHomomorphsims} for counting $k$-vertex paths or $k$-vertex cycles in a given $n$-vertex graph. 
The new bound is $k^{O(k)} n^{23k/135+o(k)} = k^{O(k)} n^{0.1704k+o(k)}$ and it stays the same for counting subgraphs isomorphic to any fixed graph $H$ with all but $O(1)$ vertices of even degree.

\subsection{Organization}
Preliminaries are presented in Section~\ref{sec:preliminaries}. In Section~\ref{sec:xp-algorithms} we give faster XP algorithms for varying~$k$, and in particular we show the new bound for pathwidth of even multigraphs. Next, in Section~\ref{sec:cnt-even} we apply the bound to counting paths, cycles and other graphs that are close to having all vertices of even degree. By refining some ideas of  Section~\ref{sec:xp-algorithms}, we give quasi-linear-time algorithms for~$k \leq 8$ in Section~\ref{sec:small-k-algorithms}. Section~\ref{sec:small-k-lowerbound} gives the reduction from \probTriangle to establish a superlinear lower bound on subcubic graphs for~$k=9$. In Section~\ref{sec:xp-lowerbound} we describe the lower bound for varying~$k$.

\section{Preliminaries} \label{sec:preliminaries}
Given a graph $G$ edge-weighted by $w \colon E(G) \rightarrow \mathbb Z$, and a subset $F \subseteq E(G)$ of its edges, $w(F) := \sum_{e \in F} w(e)$.  
A \emph{$k$-move} on a Hamiltonian cycle $\mathcal C$ is pair $(E^-,E^+)$ of edge sets such that $|E^-|=|E^+|=k$ and $(\mathcal C \setminus E^-) \cup E^+$ is also a Hamiltonian cycle.
A $k$-move is called \emph{improving} if $w((\mathcal C \setminus E^-) \cup E^+) < w(\mathcal C)$, or equivalently and more simply $w(E^+) < w(E^-)$.  
A necessary condition for a pair $(E^-,E^+)$ to be a $k$-move is that the multiset of endpoints of $E^-$ is equal to the multiset of endpoints of $E^+$.
An exchange $(E^-,E^+)$ that satisfies this condition is called a \emph{$k$-swap}.
We say that a $k$-swap \emph{results} in the graph $(\mathcal C \setminus E^-) \cup E^+$.
Note that a $k$-swap always results in a spanning disjoint union of cycles.
A $k$-swap resulting in a graph with a single connected component is therefore a $k$-move.
An \emph{infeasible} $k$-swap is a $k$-swap which is not a $k$-move.

We say that a $k$-swap $(E^-,E^+)$ \emph{induces} the graph $E^- \cup E^+$.
As a slight abuse of notation, a $k$-swap will sometimes directly refer to this graph.
A $k$-swap $(E^-,E^+)$ such that all edges $E^- \cup E^+$ are visited by a single closed walk alternating between $E^-$ and $E^+$ is called \emph{sequential}.
In particular, in a simple graph, every $2$-swap is sequential.
One can notice that an infeasible (sequential) $2$-swap results in a disjoint union of exactly two cycles.
A $k$-move can always be decomposed into sequential $k_i$-swaps (with $\sum k_i = k$) but some $k$-moves cannot be decomposed into sequential $k_i$-moves. 
The quantity $w(E^-) - w(E^+)$ is called the \emph{gain} of the swap $(E^-,E^+)$.
We distinguish \emph{neutral} swaps, with gain 0, \emph{improving} swaps, with strictly positive gain, and \emph{worsening} swaps, with strictly negative gain. 

For an integer $n$, we denote $[n]=\{1,\ldots, n\}$.
A {\em $k$-embedding} (or shortly: {\em embedding}) is an increasing function $f\colon [k]\rightarrow [n]$. 
A {\em connection $k$-pattern} (or shortly: {\em connection pattern}) is a perfect matching in the complete graph on the vertex set $[2k]$.
A pair $(f,M)$ where $f$ is a $k$-embedding and $M$ is a connection $k$-pattern, is an alternative description of a $k$-swap.
Indeed, let $e_1,\ldots,e_n$ be subsequent edges of $\mathcal{C}$.
Then, $E^-=\{e_{f(i)} \colon i\in[k]\}$.
Vertices of the connection pattern correspond to endpoints of $E^-$, i.e., vertices $2i-1,2i\in[2k]$ correspond to the left and right (in the clockwise order) endpoint of $e_{f(i)}$, respectively.
Thus, edges of the connection pattern correspond to a set $E^+$ of $|M|$ edges in $G$.
We say that a $k$-swap $(E^-,E^+)$ \emph{fits into $M$} if there is an embedding $f$ such that $(f,M)$ describes $(E^-,E^+)$.
Note that every pair of an embedding and a connection pattern $(f, M)$ describes exactly one swap $(E^-,E^+)$. 
Conversely, for a swap $(E^-,E^+)$ the corresponding embedding $f$ is also unique (and determined by $E^-$).
However, in case $E^-$ contains incident edges, the swap fits into more than one matching $M$ (see Fig.~\ref{fig:sequential}).  
See~\cite{DBLP:conf/esa/CyganKS17} for a more formal description of the equivalence.
The main advantage of using connection patterns is that the feasibility and sequentiality of a $k$-swap $(f,M)$ can be determined only from $M$.

The notion of a connection pattern can be extended to represent $k'$-swaps, for $k'<k$, as follows.
Note that a matching $N$ in the complete graph on the vertex set $[2k]$ corresponds to an $|N|$-swap if and only if there is a set $\iota(N)\subseteq[k]$ such that $V(N)=\{2i-1,2i \colon i \in \iota(N)\}$.
For a set $X\subseteq[k]$, by $M[X]$ we denote the swap $N$ such that $\iota(N)=X$.
We say that a connection pattern $M$ {\em decomposes} into swaps $N_1,\ldots,N_t$ when $M=\biguplus_{i=1}^t N_i$ and each $N_i$ is a connection pattern of a swap.
The notion of fitting extends to $k'$-swaps in the natural way.

\begin{figure}[t]
\begin{center}
\begin{tabular}{c@{\quad\quad}c@{\quad\quad}c}
\begin{tikzpicture}[baseline,scale=0.55,line width=0.3mm]
\begin{scope}[>=latex]
 \tikzstyle{point}=[circle,fill=white,draw,minimum size=0.18cm,inner sep=0pt]

 \def\vert{10}
 \foreach \i in {1,...,\vert}
 {
   \node[point] (v\i) at (-90 + \i * 360 / \vert :25mm) {};
 }  

 {
 \foreach \i/\l in {1,...,\vert}
 {
   \node at ({-90 - (\i + 3.5) * 360 / \vert} :30mm) {$e_{\i}$};
 }  
 }

 {
 \foreach \i/\j in {1/2,2/3,3/4,6/7,7/8,8/9}
 {
       \draw [color=black, thick] (v\i) -- (v\j);
 }
 }

 {
  \foreach \i/\j in {1/10,9/10,5/6,4/5}
 {
       \draw [color=red,ultra thick] (v\i) -- (v\j);
 }
 }
 
 \foreach \i/\j in {4/10,5/1,5/9,6/10}
 {
       \draw [dashed,color=green!60!black,ultra thick] (v\i) -- (v\j);
 }
 \end{scope}
\end{tikzpicture}
&
\begin{tikzpicture}[baseline,scale=0.55,line width=0.3mm]
\begin{scope}[>=latex]
 \tikzstyle{point}=[circle,fill=white,draw,minimum size=0.18cm,inner sep=0pt]

 \def\vert{10}
 \foreach \i in {1,2,3,4,6,7,8,9}
 {
   \node[point] (v\i) at ({180- \i * 360 / 10} :25mm) {};
 }  

 {
 \foreach \i/\l in {1/1,2/2,3/3,4/4,6/5,7/6,8/7,9/8}
 {
   \node at ({180 - \i * 360 / 10} :30mm) {$\l$};
 }  
 }

 {
 \foreach \i/\j in {1/2,3/4,6/7,8/9}
 {
       \draw [color=red,ultra thick] (v\i) -- (v\j);
 }
 }

 \foreach \i/\j in {1/8,2/9,3/6,4/7}
 {
       \draw [dashed,color=green!60!black,ultra thick] (v\i) -- (v\j);
 }
 \end{scope}
\end{tikzpicture}
&
\begin{tikzpicture}[baseline,scale=0.55,line width=0.3mm]
\begin{scope}[>=latex]
 \tikzstyle{point}=[circle,fill=white,draw,minimum size=0.18cm,inner sep=0pt]

 \def\vert{10}
 \foreach \i in {1,2,3,4,6,7,8,9}
 {
   \node[point] (v\i) at ({180- \i * 360 / 10} :25mm) {};
 }  

 {
 \foreach \i/\l in {1/1,2/2,3/3,4/4,6/5,7/6,8/7,9/8}
 {
   \node at ({180 - \i * 360 / 10} :30mm) {$\l$};
 }  
 }

 {
 \foreach \i/\j in {1/2,3/4,6/7,8/9}
 {
       \draw [color=red,ultra thick] (v\i) -- (v\j);
 }
 }

 \foreach \i/\j in {1/7,2/9,3/6,4/8}
 {
       \draw [dashed,color=green!60!black,ultra thick] (v\i) -- (v\j);
 }
 \end{scope}
\end{tikzpicture}
\end{tabular}
\end{center}
\caption{\label{fig:sequential}A sequential swap (left) which fits two connection patterns (center, right). The pattern in the center is not sequential, while the pattern on the right is sequential. In the left picture the solid red edges are in $E^-$ , the dashed green edges are in $E^+$, and the thin black edges are the remaining edges of the Hamiltonian cycle $\mathcal{C}$.
In the central and right pictures, the dashed green edges form some connection patterns.}
\end{figure}
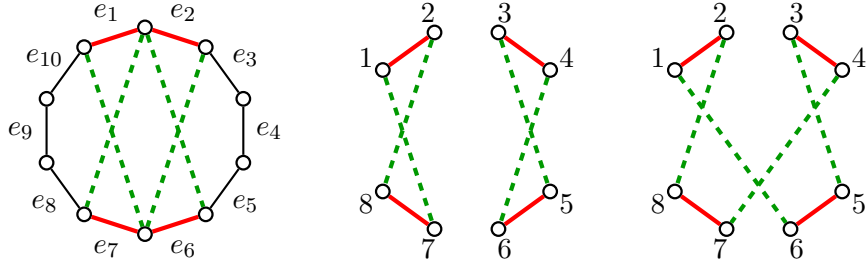

Consider a connection pattern $N$ of a swap, for $V(N)\subseteq [2k]$. We call $N$ {\em sequential} if $N \cup \{\{2i-1,2i\} \colon i \in \iota(N)\}$ forms a simple cycle. 
In particular, every connection pattern can be decomposed into sequential connection patterns of (possibly shorter) swaps.
The correspondence between sequential swaps and sequential connection patterns is somewhat delicate, so let us explain it in detail.

Let $N$ be a sequential connection pattern, $V(N)\subseteq [2k]$.
Recall that for every embedding $f$ there is exactly one $|N|$-swap $(E^-,E^+)$ that fits into $N$. Clearly, this swap is sequential, since every edge in $\{\{2i-1,2i\} \colon i \in \iota(N)\}$ corresponds to an edge of $E^-$ and every edge in $N$ corresponds to an edge in $E^+$.
Thus the resulting set of edges $E^-\cup E^+$ forms a single closed walk.
In particular, if the image of $f$ contains two neighboring indices $i, i+1\in[n]$, the closed walk is not a simple cycle.

Conversely, it is possible that a sequential swap fits into a connection pattern which is not sequential, see Fig.~\ref{fig:sequential} for an example.  
However, every sequential $\ell$-swap $(E^-,E^+)$ fits at least one sequential connection pattern.
This sequential connection pattern is determined by the closed walk which certifies the sequentiality of the swap.
Indeed, let $E^-=\{e_{i_1},\ldots,e_{i_{\ell}}\}$, where $i_1,\ldots,i_{\ell}$ is an increasing sequence.
Let $v_0,\ldots,v_{2\ell-1}$ be the closed walk alternating between $E^-$ and $E^+$, in particular assume that $E^-=\{v_iv_{i+1} \colon \text{$i$ is even}\}$. 
Consider any $i=0,\ldots,\ell-1$ and the corresponding edge $e_{i_j}=v_{2i}v_{2i+1}$ in $E^-$, for some $j\in[\ell]$. 
If $v_{2i}$ is the left endpoint of $e_{i_j}$, we put $w_{2i}=2j-1$ and $w_{2i+1}=2j$, otherwise $w_{2i}=2j$ and $w_{2i+1}=2j-1$.
Then $w_0,\ldots,w_{2\ell-1}$ is a simple cycle and $N=\{w_iw_{i+1} \colon \text{$i$ is odd}\}$ is a sequential connection pattern.
By construction, $(E^-,E^+)$ fits $N$, as required.
Keeping in mind the nuances in the notions of sequential swaps and corresponding sequential connection patterns, for simplicity, we will often just say `a sequential swap $M$' for a matching $M$, instead of the more formal `a sequential connection pattern $M$ of a swap'.

Fix a connection pattern $M$ and let $f\colon S\rightarrow[n]$ be a partial embedding, for some $S\subseteq [k]$.
For every $j\in S$, let $v_{2j-1}$ and $v_{2j}$ be the left and right endpoint of $e_{f(j)}$, respectively.
We define
\begin{align*}
E^-_f &= \{e_{f(i)} \mid i\in S\},\\
E^+_f &= \{\{v_{i'},v_{j'}\} \mid i,j\in S, i'\in \{2i-1,2i\}, j'\in \{2j-1,2j\}, \{i',j'\}\in M\}.
\end{align*}
Then, $\gain_M(f)=w(E^-_f)-w(E^+_f)$.

\section{Fast XP algorithms} \label{sec:xp-algorithms}
For all fixed integers $k$ and $d$, the number of sequential $k$-swaps in a graph of maximum degree $d$ is $\Oh(n)$, and we can enumerate all of them in the same running time.
Therefore, we can find the best improving $k$-move that can be decomposed into at most $c$ sequential $k$-swaps in $\Oh(n^c)$ time.
Because $c$ is at most $\lfloor\frac{k}{2}\rfloor$, we obtain an $\Oh(n^{\lfloor\frac{k}{2}\rfloor})$-time algorithm for \probKOPTOpt.
In what follows, we will improve this naive algorithm.
Below we present a relatively simple algorithm which exploits the range tree data structure~\cite{PreparataS85} and achieves running time roughly the same as the more sophisticated algorithm of Cygan et al.~\cite{DBLP:conf/esa/CyganKS17} for general graphs.

\begin{theorem}\label{thm:small-k-c}
For all fixed integers $k$, $c$, and $d$, there is an $\Oh(n^{\lceil\frac{c}{2}\rceil}\polylog{n})$-time algorithm to
compute the best improving $k$-move that can be decomposed into $c$ sequential swaps in graphs of maximum degree $d$.
\end{theorem}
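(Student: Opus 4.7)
The plan is a meet-in-the-middle combined with an orthogonal range-maximum tree. First, fix a connection pattern $M$ on $[2k]$ that (i) corresponds to a $k$-move (so that the multigraph $M \cup \{\{2i-1,2i\} \colon i \in [k]\}$ is a single cycle) and (ii) decomposes, via its cycle decomposition, into exactly $c$ sequential sub-patterns $M = N_1 \uplus \cdots \uplus N_c$. Since $k$ and $c$ are constants, the number of such tuples $(M, N_1, \ldots, N_c)$ is bounded by a function of $k$, so we iterate over all of them. For each fixed choice, we compute the best $k$-swap fitting $M$ whose sequential decomposition matches $\{N_1, \ldots, N_c\}$; the overall answer is the maximum of these, and improving means gain $>0$.

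Next I would enumerate, for every sub-pattern $N_i$, all its placements, i.e., partial embeddings $f_i : \iota(N_i) \to [n]$ for which $(f_i, N_i)$ describes a valid swap in $G$, together with their gains $\gain_{N_i}(f_i)$. By the alternating-walk argument recalled in the introduction and the bounded-degree assumption, each sub-pattern admits $\Oh(n)$ placements and they can all be listed in $\Oh(n)$ time. Split the sub-patterns into two groups $A$ of size $\lceil c/2 \rceil$ and $B$ of size $\lfloor c/2 \rfloor$. Enumerate all $\Oh(n^{|B|})$ joint placements of the sub-patterns in $B$ that are internally consistent (no position used twice) and insert each as a point into a range tree: the point's $|\iota(B)|$ coordinates are its positions listed in the order of $\iota(B) \subseteq [k]$, and the label is the combined gain. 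The range tree supports orthogonal range-maximum queries in $\Oh(\polylog n)$ time.

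The search step enumerates all $\Oh(n^{|A|})$ joint placements of $A$; each fixes positions $p_j \in [n]$ for every $j \in \iota(A)$. For $j \in \iota(B)$, let $j^-$ (resp.\ $j^+$) be the largest (resp.\ smallest) element of $\iota(A)$ below (resp.\ above) $j$, setting $p_{j^-} = 0$ or $p_{j^+} = n+1$ when these do not exist. The condition that the combined assignment extends to an increasing embedding $f \colon [k] \to [n]$ is precisely that the $B$-position at coordinate $j$ lies in the open interval $(p_{j^-}, p_{j^+})$ for every $j \in \iota(B)$; because within a single $B$-placement the coordinates are already sorted by index in $[k]$, this compatibility condition decouples into a product of intervals, i.e.\ an axis-aligned box in $|\iota(B)|$ dimensions. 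A single range-max query returns the best compatible $B$-placement; adding the $A$-gain yields one candidate, and we keep the maximum over all $A$-placements and all tuples $(M, N_1, \ldots, N_c)$.

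Since the range tree stores $\Oh(n^{|B|})$ points in $\Oh(k)$ dimensions, it is built in $\Oh(n^{|B|}\polylog n)$ time, and the $\Oh(n^{|A|})$ queries each take $\Oh(\polylog n)$ time, giving the claimed $\Oh(n^{\lceil c/2 \rceil}\polylog n)$ bound. The main obstacle I expect is cleanly verifying that the interleaving constraints between $A$-positions and $B$-positions reduce to axis-aligned box queries; this hinges on the fact that each $B$-index $j$ sees a single ``gap'' $(p_{j^-}, p_{j^+})$ determined only by the nearest $A$-indices in $[k]$. A secondary care point is the enumeration of connection patterns: a sequential swap can fit several connection patterns (cf.\ Fig.~\ref{fig:sequential}), but since we take a maximum and every valid $k$-move's decomposition is enumerated at least once as some $(M, N_1, \ldots, N_c)$, correctness is preserved.
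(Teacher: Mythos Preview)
Your proposal is correct and follows essentially the same approach as the paper: iterate over connection patterns, split the $c$ sequential swaps into two halves, enumerate one half into an orthogonal range-maximum structure, and query with the other half. The only cosmetic difference is that the paper stores, for each $L$-placement, only the positions at the $L/R$ ``boundary'' indices (those $i$ with $i\in L,\ i{+}1\in R$ or vice versa), giving a $(p+q)$-dimensional range tree, whereas you store all $|\iota(B)|$ positions; since both are $\Oh(k)=\Oh(1)$ dimensions, the asymptotics are identical, and your verification that the interleaving constraints reduce to an axis-aligned box is exactly the content of the paper's two bullet points.
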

\begin{proof}
When $c=1$, we can use the naive algorithm.
Suppose $c\geq 2$ and let $h:=\lceil\frac{c}{2}\rceil$.

For each possible connection pattern $M$ consisting of $c$ sequential swaps, we find the best embedding as follows.
Let $M=\bigcup_{i=1}^cN_i$, where each $N_i$ corresponds to a sequential swap.
We split $M$ into two parts $M_L=\bigcup_{i=1}^hN_i$ and $M_R=\bigcup_{i=h+1}^cN_i$ and we define $L=\bigcup_{i=1}^h \iota(N_i)$ and $R=\bigcup_{i=h+1}^c \iota(N_i)$.
Note that $L\uplus R=[k]$.
Let $f_L \colon L\to [n]$ and $f_R \colon R\to [n]$ be embeddings of $L$ and $R$, respectively.
The union of these two embeddings results in an embedding of $[k]$ if and only if the following conditions hold.
\begin{itemize}
	\item For each $i\in[k-1]$ with $i\in L$ and $i+1\in R$, $f_L(i)<f_R(i+1)$ holds.
	\item For each $i\in[k-1]$ with $i\in R$ and $i+1\in L$, $f_R(i)<f_L(i+1)$ holds.
\end{itemize}

We can efficiently compute a pair of embeddings satisfying these conditions using an orthogonal range maximum data structure as follows.
Let $\{l_1,\ldots,l_p\}=\{i \colon  \text{$l_i\in L$ and $l_i+1\in R$}\}$ and let $\{r_1,\ldots,r_q\}=\{i \colon  \text{$r_i-1\in R$ and $r_i\in L$}\}$.
We first enumerate all the $|L|$-swaps that fit into $M_L$ and all the $|R|$-swaps that fit into $M_R$, in $\Oh(n^h)$ time.
For each such $|L|$-swap $(f_L,M_L)$, we create a $(p+q)$-dimensional point $(f_L(l_1),\ldots,f_L(l_p),f_L(r_1),\ldots,f_L(r_q))$ with a priority $\gain_{M_L}(f_L)$, and we collect these points into a data structure. It stores $\Oh(n^h)$ points.
For each $|R|$-swap $(f_R,M_R)$, we query for the embedding $f_L$ of maximum priority satisfying $f_L(l_i)<f_R(l_i+1)$ for every $i\in[p]$ and $f_R(r_i-1)<f_L(r_i)$ for every $i\in[q]$,
and we answer the pair maximizing the total gain, i.e., the sum $\gain_{M_L}(f_L)+\gain_{M_R}(f_R)$.
Using the range tree data structure~\cite{PreparataS85}, each query takes $\Oh(\log^{p+q}{n^h})=\Oh(\polylog{n})$ time, so the total running time is $\Oh(n^h\polylog{n})$.
\end{proof}

Since $c\le \lfloor\frac{k}{2}\rfloor$ we get the following corollary.

\begin{corollary}
\label{cor:k/4}
  For all fixed integers $k$ and $d$, \probKOPTOpt in graphs of maximum degree $d$ can be solved in time $\Oh(n^{\lceil\frac{k-1}{4}\rceil}\polylog{n})$.
\end{corollary}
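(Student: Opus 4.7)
The plan is to deduce the corollary by feeding the right value of $c$ into \Cref{thm:small-k-c} and then iterating over all admissible values of $c$, so that the bottleneck is controlled by the largest one.

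First I would recall from the preliminaries that every $k$-move decomposes into sequential $k_i$-swaps with $\sum_i k_i = k$. The key observation to extract is that every sequential swap has size at least $2$: in a simple graph a $1$-swap would force removing and adding the same edge (since the endpoints of $E^-$ and $E^+$ must coincide as multisets), so it is trivial; more conceptually, a sequential swap corresponds to a closed alternating walk, which must use at least two $E^-$-edges. Consequently, if a $k$-move decomposes into $c$ sequential swaps, then $c \leq \lfloor k/2 \rfloor$.

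Next, the algorithm simply runs the procedure of \Cref{thm:small-k-c} for every integer $c \in \{1, 2, \ldots, \lfloor k/2 \rfloor\}$, obtaining for each $c$ the best improving $k$-move that decomposes into exactly $c$ sequential swaps, and then returns the best one overall (or reports none if all outputs are empty). Since $k$ and $d$ are fixed, the number of values of $c$ to try is a constant, so the total running time is dominated by the largest $c$, i.e.\ $c = \lfloor k/2 \rfloor$, giving
\[
 \Oh\!\left(n^{\lceil \lfloor k/2 \rfloor / 2 \rceil}\polylog n\right).
\]

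The only remaining step, which is really the only place arithmetic is needed, is to verify the identity $\lceil \lfloor k/2 \rfloor / 2 \rceil = \lceil (k-1)/4 \rceil$ for every integer $k \geq 2$. This is a routine case analysis on $k \bmod 4$: writing $k = 4q + r$ with $r \in \{0,1,2,3\}$ one checks that both expressions evaluate to $q$ when $r \in \{0,1\}$ and to $q+1$ when $r \in \{2,3\}$. I do not expect any real obstacle here; the statement is purely a packaging of \Cref{thm:small-k-c} with the structural fact that sequential swaps have size $\geq 2$.
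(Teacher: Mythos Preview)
Your proposal is correct and follows exactly the same route as the paper: the paper's proof is the single sentence ``Since $c\le \lfloor\frac{k}{2}\rfloor$ we get the following corollary,'' and you have simply spelled out the implicit steps (iterate \Cref{thm:small-k-c} over all $c$, observe that sequential swaps have size at least $2$ so $c\le\lfloor k/2\rfloor$, and check the arithmetic identity $\lceil \lfloor k/2\rfloor /2\rceil = \lceil (k-1)/4\rceil$). One minor quibble: your second justification ``a closed alternating walk must use at least two $E^-$-edges'' is not literally true (a length-$2$ closed walk uses one of each), but your first justification---that in a simple graph a $1$-swap is forced to remove and reinsert the same edge and is therefore trivial---is the correct one and suffices.
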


Let us take another look at the proof of Theorem~\ref{thm:small-k-c}.
Recall that for merging embeddings $f_L$ and $f_R$, we were interested only in values $f_L(i)$ for $i\in L$ such that $i+1\in R$ or $i-1\in R$. 
The embeddings of the remaining elements of $L$ were forgotten at that stage, but we knew that it is possible to embed them and we stored the gain of embedding them.
This suggests the following, different approach.

We decompose the connection pattern into sequential swaps and we scan the swaps in a carefully chosen order.
Assume we scanned $t$ swaps already and there are $c-t$ swaps ahead. 
Assume that only $p \ll t$ of the $t$ `boundary' swaps interact with the remaining $c-t$ swaps, where two swaps $N_1$ and $N_2$ interact when there is $i\in\iota(N_1)$ such that $i-1\in\iota(N_2)$ or $i+1\in\iota(N_2)$. 
Then it suffices to compute, for every embedding $f_L$ of the $p$ swaps, the gain of the best (i.e., giving the highest gain) embedding $g_L$ of the $t$ swaps, such that $f_L$ matches $g_L$ on the boundary swaps. This amounts to $\Oh(n^p)$ values to compute, since each sequential swap can be embedded in $\Oh(n)$ ways, if $k$ and the maximum degree are $\Oh(1)$.
The idea is to (1) compute these values quickly (in time linear in their number) using analogous values computed for the prefix of $t-1$ swaps, (2) find an order of swaps so that $p$ is always small, namely $p\le(23/135+\epsilon_k)k$.
The readers familiar with the notion of pathwidth recognize that $p$ here is just the pathwidth of the graph obtained from the path $1,2,\ldots,k$ by identifying vertices in the set $\iota(N)$ for every sequential swap $N$ in $M$, and that (2) is just dynamic programming over the path decomposition.
The resulting algorithm is summarized in Theorem~\ref{thm:fast-xp}.

\begin{theorem}
\label{thm:fast-xp}
  For all fixed integers $k$ and $d$, \probKOPTOpt in graphs of maximum degree $d$ can be solved in time $\Oh(n^{(23/135+\epsilon_k)k})=\Oh(n^{(0.1704+\epsilon_k)k})$, where $\lim_{k\rightarrow\infty} \epsilon_k = 0$.
\end{theorem}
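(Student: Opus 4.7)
The plan is to formalize the dynamic programming approach sketched immediately before the theorem statement. First I would reduce to a single connection pattern: enumerate all $(2k-1)!!$ perfect matchings $M$ on $[2k]$, and for each $M$ find the best embedding $f\colon[k]\to[n]$ whose associated $k$-swap $(E^-_f,E^+_f)$ is an improving $k$-move of $\mathcal{C}$. This outer enumeration costs only $k^{O(k)}$, which is absorbed into the $n^{\epsilon_k k}$ slack.

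Fix $M$ and decompose it into sequential subpatterns $M=N_1\uplus\cdots\uplus N_c$. I would build the multigraph $G_M$ on vertex set $\{N_1,\ldots,N_c\}$ whose edges encode the adjacencies along $[k]$: for each $i\in[k-1]$, place an edge (or loop) between the vertex whose index set $\iota(\cdot)$ contains $i$ and the one whose index set contains $i+1$. A direct count shows that every vertex of $G_M$ has even degree except possibly the (at most two) vertices containing $1$ and $k$, so adding one auxiliary edge turns $G_M$ into a $k$-edge Eulerian multigraph. Invoking the pathwidth bound for $k$-edge even multigraphs (the main technical contribution of the paper, proved separately) yields a nice path decomposition $(X_1,\ldots,X_r)$ of $G_M$ whose largest bag has size at most $23k/135+o(k)$.

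Next I would run a DP along this decomposition. The state at a bag $X_t$ specifies, for each swap $N_j\in X_t$, one of the $O(n)$ sequential embeddings of $\iota(N_j)$ in $\mathcal{C}$ (bounded degree together with sequentiality keeps this number linear in $n$, as noted in the introduction), so the number of states per bag is $O(n^{|X_t|})\le O(n^{23k/135+o(k)})$. For each state, the table entry records the maximum of $\gain_M$ on the already-processed swaps, taken over all extensions consistent with the current bag assignment. An introduce transition iterates over the $O(n)$ candidate embeddings of the new swap and charges its $\gain$ contribution; a forget transition projects the table. Monotonicity of $f$ and agreement of the $E^+$-edges are enforced exactly at the edges of $G_M$ that connect processed swaps to unprocessed ones; as in the proof of Theorem~\ref{thm:small-k-c}, these constraints are encoded as orthogonal range-maximum queries and answered in $\polylog n$ time by a range tree, so every transition costs $\polylog n$ per state.

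The hard part is establishing the pathwidth bound $\pw(H)\le 23|E(H)|/135+o(|E(H)|)$ for Eulerian multigraphs $H$; this is handled separately in the paper and is what drives the $23/135$ constant. Given that bound, the running time per connection pattern is $O(n^{23k/135+o(k)}\polylog n)$, and multiplying by the $k^{O(k)}$ connection patterns and sequential decompositions gives the overall $O(n^{(23/135+\epsilon_k)k})$ with $\epsilon_k\to 0$. A remaining subtlety is that the DP optimises over $k$-swaps, whereas we want $k$-moves; feasibility (a single Hamiltonian cycle instead of a disjoint union) can be enforced either by augmenting the bag state with the partition of partial-cycle components touching the bag (multiplying the state space by $k^{O(k)}$) or by a Möbius-style inclusion-exclusion over finer sequential decompositions in the spirit of~\cite{DBLP:conf/esa/CyganKS17}.
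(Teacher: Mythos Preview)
Your proposal is essentially the paper's proof, but you have manufactured a difficulty that is not there. As recorded in the preliminaries, whether a $k$-swap $(f,M)$ is a $k$-move is determined \emph{by $M$ alone}; the embedding $f$ plays no role in feasibility. Hence you simply restrict the outer loop to feasible connection patterns, and every embedding returned by the DP is automatically a $k$-move. No bag augmentation with partial-cycle partitions and no inclusion--exclusion is required.

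The range-tree machinery in the DP is also unnecessary. The paper indexes the table $T_t$ directly by all admissible increasing functions $f\colon\widehat{X_t}\to[n]$; because each sequential swap in the bag has $O(n)$ admissible embeddings in a degree-$d$ graph, the table has $O(n^{|X_t|})$ entries, and the introduce/forget/join transitions are plain lookups and max-projections in time linear in the table size. Monotonicity between two interacting swaps is checked in any bag that contains both (such a bag exists precisely because interaction is an edge of the interaction graph), so orthogonal range queries are not needed. There is also nothing to ``agree on'' for $E^+$: since $M=\biguplus_j N_j$, every edge of $E^+$ lies entirely inside one sequential swap, and admissibility is a per-swap condition.

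Finally, the paper avoids your auxiliary-edge step by contracting each $\iota(N_j)$ in the \emph{cycle} $1,2,\ldots,k$ rather than the path; the resulting multigraph $\overline{C_M}$ is then even with exactly $k$ edges, and Lemma~\ref{lem:pw:even} applies directly. Your path-plus-one-edge variant arrives at the same bound.
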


Before proving Theorem~\ref{thm:fast-xp}, we first show the desired bound on the pathwidth.
We assume that the reader is familiar with the standard notions of treewidth, pathwidth and nice tree decompositions (see e.g.~\cite{fpttextbook}).

The following result is a more precise version of Lemma 1 from the work of Fomin et al.~\cite{fomin-algorithmica-2009}, and can be proved by iterating their arguments for larger degrees.

\begin{theorem}[Fomin et al.~\cite{fomin-algorithmica-2009}]
\label{thm:bound-tw-fomin}
For any $\epsilon > 0$, there exists an integer $n_\epsilon$ such that for every multigraph $G$ with $n>n_\epsilon$ vertices,
\[\pw(G) \le \frac{1}6 n_3 + \frac{1}3 n_4 + \frac{13}{30} n_5 + \frac {23}{45}n_6 + \frac {359}{630}n_7 + \frac{1553}{2520}n_8 + \frac{14827}{22680}n_9 + \frac{155273}{226800}n_{10} + n_{\ge 11} + \epsilon n,\]
where $n_i$ is the number of vertices of degree $i$ in $G$ for any $i \in \{3,\ldots, 10\}$ and $n_{\ge 11}$
is the number of vertices of degree at least 11. 
\end{theorem}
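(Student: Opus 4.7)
The plan is to follow the Fomin et al.\ strategy of combining a balanced vertex-separator estimate with a recursive construction of a path decomposition, and to push the internal degree-by-degree iteration they perform up to degree $10$. I would set up strong induction on $n$: for $n \le n_\epsilon$ the trivial bound $\pw(G) \le n - 1 \le \epsilon n$ suffices, while the inductive step produces a balanced partition $V(G) = A \uplus S \uplus B$ with $|A|, |B| \le (1-\delta)n$ for some fixed $\delta = \delta(\epsilon) > 0$, together with
\[|S| \le \tfrac{1}{6} n_3(G) + \tfrac{1}{3} n_4(G) + \cdots + \tfrac{155273}{226800} n_{10}(G) + n_{\ge 11}(G) + \tfrac{\epsilon}{2} n.\]
The standard inequality $\pw(G) \le |S| + \max(\pw(G[A\cup S]), \pw(G[B\cup S]))$ combined with the induction hypothesis then yields the stated bound, provided the coefficients $c_d$ satisfy the linear recurrence imposed by the bisection step.

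The bisection itself is built degree by degree. I would process degrees $d = 3, 4, \ldots, 10$ in turn, maintaining a current triple $(A, S, B)$ and using Monien-Preis / Fomin-H\o ie-style local swap arguments to place each not-yet-assigned vertex of current degree $d$, in such a way that (i) the balance $|A| \approx |B|$ is preserved and (ii) the vertex contributes at most the amortized amount $c_d$ to $|S|$. Vertices of initial degree at least $11$ are placed directly in $S$, contributing $1$ each. The key feature being iterated is that when a vertex already has some neighbors assigned to $A$, $S$, or $B$, its \emph{effective} degree in the remaining argument drops, and it is then handled with a smaller $c_i$ coefficient. This is exactly what forces the $c_d$'s to form the increasing sequence given in the statement, and solving the linear recurrence that arises from this bookkeeping yields the closed-form fractions, whose denominators $6, 3, 30, 45, 630, 2520, 22680, 226800$ are the successive least common multiples that appear when $c_d$ is expressed in terms of $c_3, \ldots, c_{d-1}$.

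The main obstacle is the bisection step at high degrees. For degrees $3$ and $4$ the original Fomin-H\o ie arguments directly give both the balance and the separator bound, but at degrees $5$--$10$ one has to verify that enough local swap operations remain available to keep $|A|$ and $|B|$ bounded away from $n$ by a constant fraction, even after many earlier rounds of vertex placement have constrained the structure. This is the part of Fomin et al.'s proof that must be rewritten in full detail for each new degree; conceptually nothing new is required beyond their template, but the case analysis is lengthy and the coefficient computation has to be carried out carefully in order to match the exact fractions asserted.
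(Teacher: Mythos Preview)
The paper does not prove this theorem. It is stated with attribution to Fomin et al.\ and accompanied only by the one-line remark that it ``is a more precise version of Lemma~1 from the work of Fomin et al.~[\dots], and can be proved by iterating their arguments for larger degrees.'' There is therefore no in-paper proof to compare your proposal against.

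Your sketch is consistent with that remark: you take the Fomin--H{\o}ie/Fomin et al.\ separator-and-recurse template and propose to push the degree-by-degree bookkeeping out to degree~$10$, which is precisely what ``iterating their arguments for larger degrees'' means. Whether the concrete mechanism you describe (Monien--Preis-style local swaps, processing degrees sequentially while maintaining a balanced bipartition) matches the structure of the original Fomin et al.\ argument, and whether the recurrence you allude to in fact produces the stated fractions, is something you would have to check against the cited source; the present paper commits to nothing beyond the single sentence quoted above.
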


A graph is called {\em even} when all its vertices have even degrees.

\begin{lemma}
\label{lem:pw:even}
For any $\epsilon > 0$, there exists an integer $m_\epsilon$ such that for every even multigraph $G$ with $m>m_\epsilon$ edges, 
\[\pw(G) \le \frac{23}{135} m  + \epsilon m.\]
\end{lemma}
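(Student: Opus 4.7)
The plan is to apply Theorem~\ref{thm:bound-tw-fomin} to $G$ and then prove that the resulting degree-sequence bound is maximized, per edge, at degree $6$, which gives exactly the ratio $23/135$. First I would observe that because $G$ is even, every odd-degree count vanishes, so $n_3 = n_5 = n_7 = n_9 = 0$, and moreover every vertex counted by $n_{\ge 11}$ actually has degree at least $12$. Vertices of degree $0, 1, 2$ do not appear in Fomin's bound at all. Consequently the inequality from Theorem~\ref{thm:bound-tw-fomin} specializes to
\[
\pw(G) \;\le\; \tfrac{1}{3} n_4 + \tfrac{23}{45} n_6 + \tfrac{1553}{2520} n_8 + \tfrac{155273}{226800} n_{10} + n_{\ge 12} + \epsilon' n,
\]
for any $\epsilon'>0$, provided $n$ is sufficiently large.

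The heart of the proof is the following per-edge comparison. Writing $\alpha_d$ for the coefficient of $n_d$ above, I claim that for every even integer $d\ge 4$,
\[
\frac{\alpha_d}{d/2} \;\le\; \frac{23}{135},
\]
with equality only at $d=6$. For $d=6$ this is a direct computation: $(23/45)/3 = 23/135$. For $d=4$ one gets $1/6 = 22.5/135 < 23/135$. For $d=8$, the inequality $1553/10080 < 23/135$ reduces to $1553\cdot 135 = 209655 < 231840 = 23\cdot 10080$, and the case $d=10$ reduces similarly to $155273 \cdot 135 < 23\cdot 1134000$. For $d\ge 12$, the coefficient is $1$ and $1/(d/2) \le 1/6 < 23/135$. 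Multiplying each term by $d/2$ and summing, and using $\sum_{d\ge 1}\tfrac{d}{2}\,n_d = m$, I obtain
\[
\pw(G) \;\le\; \frac{23}{135} \sum_{d\ge 4 \text{ even}} \tfrac{d}{2}\, n_d + \epsilon' n \;\le\; \frac{23}{135}\, m + \epsilon' n.
\]

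To finish, I would convert the error $\epsilon' n$ into $\epsilon' m$. Isolated vertices can be deleted without affecting pathwidth, so I may assume every vertex of $G$ has degree at least~$2$; then $2m = \sum d_i \ge 2n$, hence $n\le m$. Choosing $\epsilon' := \epsilon$ and applying Theorem~\ref{thm:bound-tw-fomin} gives $\pw(G)\le (23/135+\epsilon)\,m$ whenever $n > n_\epsilon$. The remaining case $n\le n_\epsilon$ is handled by the trivial bound $\pw(G)\le n-1\le n_\epsilon$, which is at most $\epsilon m$ once we set $m_\epsilon := n_\epsilon/\epsilon$. The only non-routine step is the five numerical checks confirming that the maximum of $\alpha_d/(d/2)$ over even $d\ge 4$ is attained at $d=6$ — this is where the precise constant $23/135$ originates; given the coefficients in Theorem~\ref{thm:bound-tw-fomin}, everything else is bookkeeping.
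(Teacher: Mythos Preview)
Your proposal is correct and takes essentially the same approach as the paper: both specialize Fomin's bound to even degrees and then maximize the resulting expression per edge. The paper phrases this as a one-constraint linear program, while you compute the maximum ratio $\alpha_d/(d/2)$ directly; these are the same computation, since an LP of the form $\max\sum c_i x_i$ subject to $\sum a_i x_i\le 1$, $x_i\ge 0$ is solved precisely by $\max_i c_i/a_i$. Your explicit handling of the error term $\epsilon' n\to\epsilon m$ and of the small-$n$ regime is more detailed than the paper's, but routine.
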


\begin{proof}
The bound follows from Theorem~\ref{thm:bound-tw-fomin} by solving the following linear program.
 \begin{equation}
\begin{aligned}
& \text{maximize}
& & \tfrac{1}{3}x_4 + \tfrac {23}{45}x_6 + \tfrac{1553}{2520}x_8 + \tfrac{155273}{226800}x_{10} + x_{12} & \\
& \text{subject to} & & 2x_4 + 3x_6 + 4x_8 + 5x_{10} + 6x_{12} \le 1 & \\
& & & x_i \ge 0 &\text{for $i=4,6,8,10,12$.}
\end{aligned}
\end{equation}
Indeed, let $n_i$ be the number of degree $i$ vertices for every  $i=2,4,6,8,10$ and let $n_{12}$ be the number of vertices of degree at least 12.
In the LP above variable $x_i$ corresponds to $n_i/m$ and the constraint is equivalent to $\sum_{i\ge 0} i \cdot n_i \le 2m$.
\end{proof}

We are now ready to prove Theorem~\ref{thm:fast-xp}.

\begin{proof}[Proof of Theorem~\ref{thm:fast-xp}]
The proof is inspired by the work of Cygan et al.~\cite{DBLP:conf/esa/CyganKS17}.
Our algorithm iterates over all at most $(2k)!$ connection patterns, so let us fix a connection pattern $M$, and let $M=\biguplus_{i=1}^cN_i$ be a decomposition of $M$ into sequential swaps.
Recall from Section~\ref{sec:small-k-algorithms} that we say that swaps $N_p$ and $N_q$ {\em interact} when there is $i\in [k]$ such that $i\in\iota(N_p)$ and $i-1\in\iota(N_q)$ or $i+1\in\iota(N_q)$.
We create a graph $C_M$ which describes interactions between the swaps.
The vertex set of $C_M$ is $[c]$, i.e., vertex $i\in [c]$ corresponds to the swap $N_i$.
We put an edge $ij$ whenever $N_i$ and $N_j$ interact.
Note that $C_M$ is a subgraph of the multigraph $\overline{C_M}$, obtained from the cycle $1,2,\ldots,k$ by identifying verties of $\iota(N_i)$, for every $i\in[c]$.
Multigraph $\overline{C_M}$ is even and has exactly $k$ edges.
Hence, by Lemma~\ref{lem:pw:even}, $\pw(C_M) \le \pw(\overline{C_M}) \le (\frac{23}{135} +  \epsilon_k) k$, where $\lim \epsilon_k = 0$.
We claim that an optimal embedding $f:[k]\rightarrow[n]$ can be computed by dynamic programming (DP) in time $\Oh(n^{\tw(C_M)+1})$ for every fixed $k,d$.

For $S\subseteq V(C_M)$, denote $\widehat{S} = \bigcup_{j\in S} \iota(N_j)$.
Let $G$ be the input graph of maximum degree $d$.
We say that a partial embedding $f:X\rightarrow[n]$ is {\em admissible} when we have $E^+_f \subseteq E(G)$.

Let $\Tt=(T,\{X_t\}_{t\in V(T)})$ be a nice tree decomposition of $C_M$ with minimum width.
For every $t\in V(T)$ we denote by $V_t$ the union of all the bags in the subtree of $T$ rooted in $t$.
For every node $t\in V(T)$, and for every increasing and admissible function $f:\widehat{X_t}\rightarrow[n]$, we compute the following value.
\[T_t[f] = \max_{\substack{g:\widehat{V_t} \rightarrow [n]\\g|_{\widehat{X_t}}=f\\\text{$g$ is increasing}\\\text{$g$ is admissible}}}\gain_M(g).\]
Algorithms for processing introduce, forget and join nodes of $\Tt$ are essentially the same as in~\cite{DBLP:conf/esa/CyganKS17}, because the new DP is basically the old DP truncated only to admissible embeddings with a domain of the form $\widehat{S}$ for a set $S$. Therefore here we just state the formulas and skip their formal correctness proofs.

\heading{Leaf node}
When $t$ is a leaf of $T$, we know that $X_t=V_t=\emptyset$, and we just put $T_t[\emptyset]=0$.

\heading{Introduce node}
Assume $X_t = X_{t'}\cup\{i\}$, for some $i\not\in X_{t'}$ where node $t'$ is the only child of $t$.
Denote $\Delta E^+_f = E^+_f\setminus E^+_{f|_{\widehat{X_{t'}}}}$.
Then, for every admissible increasing function $f:\widehat{X_t}\rightarrow[n]$, 
\begin{equation}
\label{eq:intro}
T_t[f] = T_{t'}[f|_{\widehat{X_{t'}}}] + \sum_{j\in\iota(N_i)}w(e_{f(j)}) - \sum_{\{u,v\}\in \Delta E^+_f}w(\{u,v\}). 
\end{equation}

\heading{Forget node}
Assume $X_t = X_{t'}\setminus\{i\}$, for some $i\in X_{t'}$ where node $t'$ is the only child of $t$.
Then the definition of $T_t[f]$ implies that
\begin{equation}
\label{eq:forget}
T_t[f] = \max_{\substack{f':\widehat{X_{t'}} \rightarrow [n]\\f'|_{\widehat{X_t}}=f\\\text{$f'$ is increasing}\\\text{$f'$ is admissible}}} T_{t'}[f']. 
\end{equation}

\heading{Join node}
Assume $X_t = X_{t_1} = X_{t_2}$, for some nodes $t$, $t_1$ and $t_2$, where $t_1$ and $t_2$ are the only children of $t$.
Then, for every admissible increasing function $f:X_t\rightarrow[n]$, 
\begin{equation}
\label{eq:join}
T_t[f] = T_{t_1}[f] + T_{t_2}[f] - \left(w(E^-_f) - w(E^+_f)\right). 
\end{equation}

Clearly, in each kind of node we use time linear in the size of the DP tables.
The number of admissible increasing functions $f:\widehat{X_t}\rightarrow[n]$ is
just the number of embeddings of the sequential swaps $N_i$, $i\in X_t$, which amounts to $\Oh(n^{|X_t|})$, since a single sequential swap can be embedded in $\Oh(n)$ ways.
As $|X_t|\le \tw(C_M)+1$, the claimed running time bound of the whole algorithm follows.
\end{proof}

\section{A byproduct: counting even subgraphs} \label{sec:cnt-even}
Lemma~\ref{lem:pw:even} has an immediate consequence in improved running times of algorithms 
for counting paths and cycles in graphs and, more generally, other subgraphs that are {\em even} 
(i.e., with all vertices of even degree) or almost even. Let us recall the following theorem by 
Curticapean, Dell and Marx~\cite{CurticapeanHomomorphsims}.

\begin{theorem}{\cite{CurticapeanHomomorphsims}}
\label{thm:cnt}
   Given as input a $k$-edge graph~$H$ and an~$n$-vertex graph~$G$,
  we can compute the number of subgraphs in $G$ that are isomorphic to $H$ in time $k^{O(k)}\cdot n^{t+1}$, 
  where $t$ is the maximum treewidth in the spasm of $H$.
\end{theorem}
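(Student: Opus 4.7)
The plan is to reduce counting subgraph-copies of $H$ in $G$ to counting homomorphisms from a bounded family of graphs into $G$. The starting point is the classical identity
\[\text{Sub}(H,G) \;=\; \sum_{F}\, c_{H,F} \cdot \text{Hom}(F,G),\]
where $F$ ranges over the spasm of $H$ (the graphs obtained from $H$ by quotienting $V(H)$ by a partition into independent sets) and the coefficients $c_{H,F}$ are rational numbers that depend only on $H$. This identity comes from Möbius inversion on the partition lattice of $V(H)$: every homomorphism $\phi\colon H \to G$ factors uniquely as a surjective quotient $H \to H/{\sim_\phi}$ followed by an injective homomorphism into $G$, and since identified vertices must be nonadjacent, only partitions into independent sets contribute. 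Inverting this relation expresses $\text{Sub}(H,G)$ as a signed combination of $\text{Hom}(F,G)$ over $F$ in the spasm, with coefficients that can be computed in time depending only on $k$.

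Next, for each $F$ in the spasm I would compute $\text{Hom}(F,G)$ by dynamic programming on an optimal tree decomposition $\mathcal T_F$ of $F$. The standard Diaz--Serna--Thilikos style DP maintains, for every bag $X$ and every map $\phi\colon X \to V(G)$, the number of homomorphisms from the subgraph of $F$ processed so far to $G$ that extend $\phi$. Introduce, forget and join nodes are handled in time proportional to the table size, which is $n^{|X|} \le n^{\tw(F)+1}$, giving a total of $|V(F)|^{O(1)} \cdot n^{\tw(F)+1}$ per graph $F$.

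Two final ingredients glue everything together. First, the spasm of $H$ has size at most $k^{O(k)}$: it is indexed by partitions of $V(H)$ into independent sets, and $|V(H)|\le 2k$, so the Bell number bound applies. Second, because $|V(F)|\le 2k$ for every $F$ in the spasm, the per-graph factor $|V(F)|^{O(1)}$ is absorbed into $k^{O(k)}$. Summing over all $F$ in the spasm then gives the claimed running time $k^{O(k)} \cdot n^{t+1}$, where $t = \max_{F \in \mathrm{spasm}(H)} \tw(F)$.

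The main technical obstacle is establishing the spasm identity with the correct support, namely showing that only partitions into independent sets appear and that the coefficients admit a closed form computable in time depending only on $k$. This is the heart of the Curticapean--Dell--Marx framework and relies on a careful analysis of the Möbius function on the lattice of independent-set partitions together with an automorphism-count normalization to pass from homomorphisms to subgraphs. Once the identity is in place, the dynamic programming step and the bookkeeping over the spasm are standard and yield the stated bound.
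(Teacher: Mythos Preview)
The paper does not prove this theorem at all: it is quoted verbatim from Curticapean, Dell and Marx and used as a black box in Section~\ref{sec:cnt-even}. Your sketch is a faithful outline of the original CDM argument (the spasm identity via M\"obius inversion on the partition lattice, followed by the Diaz--Serna--Thilikos DP on an optimal tree decomposition of each quotient), so in that sense your proposal agrees with ``the paper's proof'' by proxy.

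Two minor remarks on your write-up. First, you should mention explicitly that an optimal tree decomposition of each $F$ in the spasm can be computed in time depending only on $k$ (since $|V(F)|\le 2k$); otherwise the $n^{t+1}$ bound is not justified. Second, the restriction to partitions into independent sets is not something one needs to enforce in the M\"obius inversion: it falls out automatically, because a quotient $H/\rho$ with a non-independent block has a self-loop and hence $\mathrm{Hom}(H/\rho,G)=0$ for simple $G$. So the ``lattice of independent-set partitions'' phrasing is slightly misleading; the inversion is on the full partition lattice and the spasm is just the support of the resulting linear combination.
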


Here, a \emph{spasm of~$H$} is the set of all homomorphic images of~$H$,
that is, the set of ``all possible non-edge contractions'' of~$H$.
By applying the bound of Lemma~\ref{lem:pw:even} in Theorem~\ref{thm:cnt} we get the following result.

\begin{theorem}
   Given as input a $k$-edge even graph~$H$ and an~$n$-vertex graph~$G$,
  we can compute the number of subgraphs in $G$ that are isomorphic to $H$ in time $k^{O(k)}\cdot n^{23k/135 + o(k)} = k^{O(k)}\cdot n^{0.1704k + o(k)}$.
\end{theorem}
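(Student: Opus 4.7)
The plan is to combine Theorem~\ref{thm:cnt} with the new pathwidth bound of Lemma~\ref{lem:pw:even} in a direct way. Theorem~\ref{thm:cnt} already delivers running time $k^{O(k)}\cdot n^{t+1}$, where $t$ is the maximum treewidth over the spasm of $H$, so the whole task reduces to showing $\tw(H') \le 23k/135 + o(k)$ for every $H'$ in the spasm.

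First I would verify that the spasm of an even $k$-edge graph consists exclusively of even multigraphs with exactly $k$ edges. Every element $H'$ of the spasm is obtained from $H$ by iteratively identifying pairs of non-adjacent vertices (``non-edge contractions''), which keeps the multiset of edges unchanged, so $|E(H')| = k$ counting multiplicities. Evenness is preserved too: when two non-adjacent vertices $u,v$ of even degree are merged, the new vertex has degree $\deg(u) + \deg(v)$, which is even, while the degrees of all other vertices are untouched. A straightforward induction starting from $H$ establishes the claim.

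Next I would apply Lemma~\ref{lem:pw:even} to each such multigraph $H'$, obtaining $\pw(H') \le 23k/135 + \epsilon k$ once $k$ exceeds the threshold furnished by the lemma, and hence $\tw(H') \le \pw(H') \le 23k/135 + o(k)$. A mild subtlety worth acknowledging is that Lemma~\ref{lem:pw:even} is stated for even multigraphs while the spasm is commonly defined as a set of simple graphs: this is harmless because pathwidth depends only on which pairs of vertices are joined by some edge rather than on edge multiplicities, so the pathwidth of $H'$ viewed as a multigraph coincides with the pathwidth of its underlying simple graph. This is the only point where one must be careful with definitions, and it is not really an obstacle.

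Plugging $t \le 23k/135 + o(k)$ into Theorem~\ref{thm:cnt} produces total running time
\[
k^{O(k)}\cdot n^{23k/135 + o(k) + 1} \;=\; k^{O(k)}\cdot n^{23k/135 + o(k)},
\]
since the additive constant $1$ in the exponent is absorbed into the $o(k)$ slack. Beyond the bookkeeping above, no further work is required; all of the heavy lifting is already encapsulated in Lemma~\ref{lem:pw:even}. The same argument extends verbatim to any $H$ with at most $O(1)$ vertices of odd degree, by first applying the trivial bound $\pw(H') \le |V(H')| - 1$ to those vertices and Lemma~\ref{lem:pw:even} to the remaining even part, which explains the strengthening mentioned in the abstract.
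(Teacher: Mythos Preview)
Your proposal is correct and follows essentially the same route as the paper: both arguments observe that every homomorphic image of $H$, viewed as a multigraph with all $k$ edges retained, is even, apply Lemma~\ref{lem:pw:even} to that multigraph, and then pass to the underlying simple graph (the paper phrases this last step as ``$\hat{H}$ is a subgraph of $\tilde{H}$'', you phrase it as ``pathwidth ignores multiplicities''). The only cosmetic difference is that the paper constructs the auxiliary multigraph $\tilde{H}$ directly from the homomorphism, whereas you argue inductively along the sequence of non-edge contractions; your final aside about $O(1)$ odd-degree vertices is a bit loose but is not part of the theorem being proved.
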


\begin{proof}
Consider a homomorphic image $\hat{H}$ of $G$ and the corresponding homomorphism $h:V(H)\rightarrow V(\hat{H})$.
We can define a multigraph $\tilde{H}$ on $V(\hat{H})$ with $k$ edges, where every edge $uv$ of $H$ corresponds to an edge $h(u)h(v)$ in $\tilde{H}$.
Then, $\tilde{H}$ is an even multigraph. 
Hence, by Lemma~\ref{lem:pw:even}, $\tilde{H}$ has pathwidth at most $23k/135 + o(k)$.
Since $\hat{H}$ is a subgraph of $\tilde{H}$, we get $\pw(\hat{H})\le 23k/135 + o(k)$.
The claim follows by Theorem~\ref{thm:cnt}.
\end{proof}

The following corollary is immediate (for paths just note that graphs in a spasm of a $k$-vertex path are subgraphs of graphs in the spasm of a $(k+1)$-vertex cycle).

\begin{corollary}
  We can compute the number of $k$-vertex paths or $k$-vertex cycles in a given $n$-vertex graph in time $k^{O(k)}\cdot n^{23k/135 + o(k)}= k^{O(k)}\cdot n^{0.1704k + o(k)}$.
\end{corollary}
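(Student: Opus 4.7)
The plan is to derive both cases from the preceding theorem by bounding the pathwidth (and hence treewidth) of every graph in the relevant spasm by $23k/135 + o(k)$. For $k$-vertex cycles the reduction is immediate: $C_k$ has exactly $k$ edges and every vertex has degree $2$, so $C_k$ is itself a $k$-edge even graph and the preceding theorem yields the desired bound directly, with no additional work.

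The path case is the one that requires the hinted detour through cycles. Label the vertices of $P_k$ as $v_1,\dots,v_k$, and form $C_{k+1}$ by adjoining a fresh vertex $v_0$ together with the edges $v_0v_1$ and $v_0v_k$; the resulting graph is a $(k+1)$-edge even graph. Given any $\hat H$ in the spasm of $P_k$, presented by a partition $\Pi$ of $V(P_k)$ that identifies no two endpoints of a common edge, I would extend $\Pi$ to a partition $\Pi'$ of $V(C_{k+1})$ by adding $\{v_0\}$ as its own singleton class. Since $v_0$ sits alone, $\Pi'$ still identifies no adjacent pair, so the quotient $\hat H' := C_{k+1}/\Pi'$ is a legitimate element of the spasm of $C_{k+1}$. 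By construction $\hat H$ is obtained from $\hat H'$ by deleting the class of $v_0$ and its at most two incident edges, so $\hat H$ is a subgraph of $\hat H'$.

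Let $\tilde H'$ be the even multigraph on $V(\hat H')$ in which each of the $k+1$ edges of $C_{k+1}$ gives an edge between the classes containing its endpoints, exactly as in the proof of the preceding theorem. Then $\hat H \subseteq \hat H' \subseteq \tilde H'$, and since pathwidth is monotone under taking subgraphs, Lemma~\ref{lem:pw:even} applied to $\tilde H'$ gives
\[
\tw(\hat H)\ \le\ \pw(\hat H)\ \le\ \pw(\tilde H')\ \le\ \tfrac{23(k+1)}{135} + o(k)\ =\ \tfrac{23k}{135} + o(k).
\]
Feeding this uniform bound on the maximum treewidth inside the spasm into Theorem~\ref{thm:cnt} delivers the claimed $k^{O(k)}\cdot n^{23k/135 + o(k)}$ running time for counting $k$-vertex paths, completing the argument together with the cycle case. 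I do not anticipate any genuine obstacle; the only subtlety is checking that leaving $v_0$ in its own class preserves the spasm-admissibility condition, which is immediate from the fact that $v_0$'s two neighbors in $C_{k+1}$ are non-identified vertices of $\hat H'$.
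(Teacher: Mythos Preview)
Your proof is correct and follows essentially the same approach as the paper, which only gives the one-line hint ``for paths just note that graphs in a spasm of a $k$-vertex path are subgraphs of graphs in the spasm of a $(k+1)$-vertex cycle.'' You have fleshed out precisely this hint: add a new vertex $v_0$ in its own partition class to embed each quotient of $P_k$ into a quotient of $C_{k+1}$, then invoke Lemma~\ref{lem:pw:even} on the resulting even multigraph and feed the treewidth bound into Theorem~\ref{thm:cnt}.
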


\section{Fast algorithms for small $k$} \label{sec:small-k-algorithms}
Note that the algorithm for \probKOPTOpt from Corollary~\ref{cor:k/4} is quasi-linear for $k\leq 5$.
In this section we extend the quasi-linear-time solvability to $k\leq 7$ for \probKOPTDec.
Under an additional assumption of bounded weights, we are able to reach quasi-linear time for $k=8$ as well.
To be precise, we prove the following stronger statements than just finding an arbitrary improving $k$-move.

\begin{theorem}\label{thm:at-most-7}
For $k\leq 7$,
there is a quasi-linear-time algorithm to compute the best improving $k$-move in bounded-degree graphs under the assumption that there are no improving $k'$-moves for $k'<k$.
\end{theorem}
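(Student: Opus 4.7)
The plan is to enumerate candidate $k$-moves by the number $c$ of sequential sub-swaps in their decomposition. Since each sequential sub-swap has size at least $2$, we have $c \le \lfloor k/2 \rfloor \le 3$ whenever $k \le 7$. For $c \le 2$, Theorem~\ref{thm:small-k-c} immediately provides an $\Oh(n \polylog n)$-time algorithm (as $\lceil c/2 \rceil = 1$), which together with Corollary~\ref{cor:k/4} already settles $k \le 5$. The non-trivial remaining case is $k \in \{6, 7\}$ with $c = 3$: three sequential sub-swaps $N_1, N_2, N_3$ whose sizes sum to $k$ and each have size at least $2$.

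For $c = 3$, I would iterate over the $\Oh(1)$ connection-pattern shapes $M = N_1 \uplus N_2 \uplus N_3$ (up to symmetry). For each pattern, the key is to exploit the no-improving-$k'$-move assumption via the following observation: if a subset $S \subseteq \{N_1, N_2, N_3\}$ is such that $\bigcup_{N \in S} N$ is a valid $k'$-move on $\mathcal{C}$ for some $k' < k$, then by assumption $\gain_M(S) \le 0$. In particular, if $N_1 \cup N_2$ is a valid $4$-move on $\mathcal{C}$ it has nonpositive gain, so $N_3$ alone must supply strictly positive gain; since an individually improving valid $2$- or $3$-move is ruled out by assumption, such an $N_3$ must be an \emph{infeasible} sequential swap (one that splits $\mathcal{C}$ into several cycles), and analogous constraints apply to every other pairing.

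The algorithm then proceeds by case analysis on which subsets of $\{N_1, N_2, N_3\}$ yield valid sub-moves on $\mathcal{C}$. In each case, I would fix one sub-swap as a ``pivot'' and enumerate it in $\Oh(n)$ time (there are only $\Oh(n)$ sequential $k'$-swaps for each fixed $k'$ in a bounded-degree graph, as observed in the introduction); for each pivot, an orthogonal range-tree data structure built exactly as in the proof of Theorem~\ref{thm:small-k-c} provides the best completion of the remaining two sub-swaps in $\polylog n$ time per query. Summing over pivots, cases, and the $\Oh(1)$ connection patterns yields the claimed $\Oh(n \polylog n)$ running time.

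The hard part will be the configuration in which no single sub-swap and no pair of sub-swaps from $\{N_1, N_2, N_3\}$ is a valid sub-move on $\mathcal{C}$, because then the assumption gives no direct bound on any partial gain. Handling it needs a combinatorial argument about how three sequential swaps — each infeasible individually, and infeasible in every pair — can combine into a Hamiltonian-cycle-preserving $k$-move: the cancellation of their splits should be rigid enough (forcing the endpoints along $\mathcal{C}$ to interlock in only $\Oh(1)$ many topological patterns) that after fixing a single anchor sub-swap the other two are determined up to $\polylog n$-many range-tree queries, preserving the quasi-linear bound.
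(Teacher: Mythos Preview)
Your high-level framework matches the paper: reduce to $c \le 3$ sequential sub-swaps, dispatch $c \le 2$ to Theorem~\ref{thm:small-k-c}, and for $c = 3$ pivot on one sub-swap while range-querying for the other two. But the crucial technical content is missing, and your proposed substitute does not work.

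The real obstacle when $c=3$ is not which subsets of $\{N_1,N_2,N_3\}$ are feasible as moves --- feasibility is a property of the connection pattern alone and, for irreducible $M$, buys you nothing beyond ``skip reducible patterns.'' The obstacle is the \emph{ordering constraints between the two non-pivot swaps}. After enumerating the $\Oh(n)$ embeddings of the pivot swap $Z$ (which the paper takes to be the unique swap of size $k-4$, using that the other two have size exactly $2$ when $k\in\{6,7\}$), you must find the best pair $(f_X,f_Y)$. If $X$ and $Y$ do not interact, two independent range queries suffice. If they do interact --- say $i\in X$, $i{+}1\in Y$ --- then the constraint $f_X(i)<f_Y(i{+}1)$ couples the two queries, and the method of Theorem~\ref{thm:small-k-c} can only handle this by putting both swaps on the same side of the split, costing $\Oh(n^2)$ points in the range tree rather than $\Oh(n)$.

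The paper resolves this with two combinatorial lemmas you do not have. Lemma~\ref{lem:interactions} shows that in any feasible irreducible pattern the two size-$2$ swaps interact at most once, so there is at most one coupling constraint. Lemma~\ref{lem:relax1} then shows that this single constraint $f_X(i)<f_Y(i{+}1)$ can be relaxed to $f_X(i)\neq f_Y(i{+}1)$: if the optimum for the relaxed problem violates the original order, it fits the pattern $M'$ obtained by swapping $i$ and $i{+}1$, and $M'$ is either feasible (so we found a legitimate $k$-move at least as good) or reducible (impossible under the no-smaller-improving-move assumption). The $\neq$ constraint is then removed by color-coding, after which $f_X$ and $f_Y$ are genuinely independent given $f_Z$. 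This relaxation argument is where the assumption is actually used; your gain-accounting on feasible subsets never addresses the coupling problem, and your ``hard part'' paragraph gestures in the right direction but lacks the mechanism that makes it go through.
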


\begin{theorem}\label{thm:8opt}
When all the weights are integers from $\{1,2,\cdots,W\}$, there is an $\Oh(W^2 n \polylog n)$-time algorithm to compute the best improving $8$-move under the assumption that there are no improving $k'$-moves for $k'<8$.
\end{theorem}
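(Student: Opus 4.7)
The plan is to follow the schema of Theorem~\ref{thm:at-most-7}: iterate over each of the $\Oh(1)$ connection patterns $M$ of $8$-moves and over every decomposition of $M$ into $c\le\lfloor 8/2\rfloor=4$ sequential sub-swaps $N_1,\dots,N_c$. The cases $c\le 2$ are handled by Theorem~\ref{thm:small-k-c} in $\Oh(n\polylog n)$ time, independently of $W$. The novel work lies in the cases $c\in\{3,4\}$, for which Theorem~\ref{thm:small-k-c} only gives $\Oh(n^2\polylog n)$.

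I first treat the central case $c=4$ (four sequential $2$-swaps); the case $c=3$ is analogous. Split the pattern as $M=M_L\uplus M_R$ with $M_L=N_1\cup N_2$ and $M_R=N_3\cup N_4$, so that each half is itself a $4$-swap pattern. Since $M_L,M_R$ are edge-disjoint, $\gain_M=\gain_{M_L}+\gain_{M_R}$, and whether each half is a feasible $4$-move depends only on its pattern, not on its embedding. The hypothesis that no improving $k'$-move exists for $k'<8$ forces every feasible $4$-move to have gain $\le 0$; hence if both $M_L$ and $M_R$ were feasible $4$-moves, the total gain would be non-positive, contradicting that $M$ is improving. Therefore, for each $M$ we may preselect at least one half, say $M_L$, that is an \emph{infeasible} $4$-swap, and this is a purely combinatorial property of $M$ decided once per pattern.

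Given this preselection, the algorithmic plan combines the range-tree machinery of Theorem~\ref{thm:small-k-c} with weight-based bucketing. Each half has gain in $\{-4W,\dots,4W\}$, giving $\Oh(W)$ gain values per side and $\Oh(W^2)$ pairs $(g_L,g_R)$ with $g_L+g_R>0$. On each side, enumerate the $\Oh(n)$ embeddings of a single sub-swap and, for every gain class of its partner, insert the corresponding points into an orthogonal range tree keyed by the boundary positions required for compatibility across the $L/R$ cut, with the gain itself as one additional coordinate. The preprocessing cost is $\Oh(nW\polylog n)$. For each of the $\Oh(W^2)$ gain pairs, a single orthogonal range-maximum query restricted to the appropriate gain slice (and enforcing the order and disjointness constraints dictated by $M$) returns the best compatible pair of embeddings in $\Oh(\polylog n)$ time, yielding $\Oh(W^2 n\polylog n)$ overall.

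The main obstacle will be the range-tree construction: every embedding in a gain class contributes a distinct boundary configuration, so all of them must be stored (not merely one representative per class) while ensuring that each query returns the globally best compatible pair rather than a projection. My plan is to resolve this by making the gain an ordinary coordinate of a constant-dimensional range tree, so that the polylogarithmic range-maximum query automatically maximizes over embeddings within the desired gain slice subject to the compatibility constraints. A secondary step is to verify that the analogous split $M=M_L\uplus M_R$ for $c=3$ (with sub-swap sizes summing to $8$ but not $4+4$) triggers the same non-improving-$k'$-move obstruction and fits into the same data structure with the same asymptotic budget.
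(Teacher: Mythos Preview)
Your plan has a structural gap in the algorithmic step. When you split $M=M_L\uplus M_R$ with $M_L=N_1\cup N_2$, the side $M_L$ has $\Theta(n^2)$ embeddings, not $\Theta(n)$. The sentence ``enumerate the $\Oh(n)$ embeddings of a single sub-swap and, for every gain class of its partner, insert the corresponding points'' cannot yield a range tree that records the boundary positions of both $N_1$ and $N_2$: the gain class of $N_2$ does not determine where $N_2$ is embedded, yet both the compatibility constraints across the $L/R$ cut and the intra-side order constraint between $N_1$ and $N_2$ depend on those positions. Bucketing by gain collapses the \emph{value} dimension to $\Oh(W)$ but not the \emph{position} dimensions, so the preprocessing is $\Omega(n^2)$ and the claimed $\Oh(nW\polylog n)$ does not follow. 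The same obstruction hits $c=3$, since any two-part split again leaves one part containing two sub-swaps. Your infeasibility-of-one-half observation is correct but buys nothing algorithmically.

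The paper proceeds differently. It first observes that every feasible irreducible connection $8$-pattern decomposing into four $2$-swaps would contain two $2$-swaps interacting at least twice, contradicting Lemma~\ref{lem:interactions}; hence the $c=4$ case vanishes entirely under the hypothesis. For $c=3$ with sizes $(2,2,4)$ the proof of Theorem~\ref{thm:at-most-7} carries over unchanged. The genuinely new case is sizes $(2,3,3)$: here one \emph{fixes} the embedding $f_Z$ of one $3$-swap ($\Oh(n)$ choices) and seeks the best compatible pair $(f_X,f_Y)$ with $|X|=2$, $|Y|=3$. If the interactions of $X$ with $Y$ and $Z$ have a particular shape, a computer-verified structural lemma (Lemma~\ref{lem:relax2}) lets one relax the order constraint between $f_X$ and $f_Y$ and handle them independently as in Theorem~\ref{thm:at-most-7}. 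Otherwise one guesses the \emph{individual} gains $g_X\in[-2W,2W]$ and $g_Y\in[-3W,3W]$ (this is the source of the $W^2$ factor), which turns the optimisation over $(f_X,f_Y)$ into an \emph{existence} query; a tailored two-population range-tree data structure then answers each such query in $\Oh(\polylog n)$ time after $\Oh(n\polylog n)$ preprocessing. The key difference from your plan is that one entire swap is fixed first, so only two swaps remain to be coupled, and gain-guessing is used to decouple those two rather than to summarise a pair of swaps on one side.
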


We say that a connection pattern $M$ of $k$-swaps is \emph{reducible} if it can be decomposed into two moves.
Note that if $M$ is improving, then at least one of the two moves is improving, contradicting the assumption of the theorems.

\begin{observation}
If there are no improving $k'$-moves for $k'<k$, then no improving $k$-swap fits into a reducible connection pattern.
\end{observation}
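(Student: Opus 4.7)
The plan is to argue by contradiction using the additive decomposition of the gain function. Suppose there exists an improving $k$-swap $(E^-, E^+)$, described by a pair $(f, M)$, such that $M$ is reducible. By definition of reducibility, we can write $M = M_1 \uplus M_2$ where each $M_i$ is a connection pattern of a $k_i$-move, with $k_1, k_2 \geq 1$ and $k_1 + k_2 = k$. In particular, each $M_i$ is feasible, and $\iota(M_1) \uplus \iota(M_2) = [k]$.

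Next I would restrict the embedding: let $f_i := f|_{\iota(M_i)}$, so that $(f_i, M_i)$ describes a $k_i$-swap. Because $M_i$ is a connection pattern of a move, the pair $(f_i, M_i)$ is in fact a $k_i$-move (feasibility depends only on $M_i$, as emphasized in the preliminaries).

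The key observation is that the gain is additive across the decomposition. From the definitions in Section~\ref{sec:preliminaries},
\[
E^-_f = E^-_{f_1} \uplus E^-_{f_2} \qquad \text{and} \qquad E^+_f = E^+_{f_1} \uplus E^+_{f_2},
\]
since the indices defining $E^-_{f_i}$ partition $[k]$, and every edge in $E^+_f$ comes from a matching edge of $M$ which lies in exactly one of $M_1, M_2$. Therefore
\[
\gain_M(f) = \gain_{M_1}(f_1) + \gain_{M_2}(f_2).
\]
Since $\gain_M(f) > 0$, at least one of the summands is strictly positive, yielding an improving $k_i$-move with $k_i < k$ (as $k_1, k_2 \geq 1$ and $k_1+k_2=k$). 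This contradicts the hypothesis that no improving $k'$-move exists for $k' < k$.

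There is no real obstacle here; the only thing to be careful about is that both parts $M_1, M_2$ are genuine moves (not merely swaps), so the restricted pairs $(f_i, M_i)$ inherit feasibility and thus serve as witnesses against the inductive hypothesis. Since feasibility of a swap depends only on its connection pattern, this is immediate from the definition of reducibility.
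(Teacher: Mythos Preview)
Your proof is correct and follows exactly the approach the paper indicates. The paper treats this observation as essentially immediate, justifying it only by the remark preceding it: ``Note that if $M$ is improving, then at least one of the two moves is improving, contradicting the assumption of the theorems.'' Your write-up is simply a careful expansion of this one-sentence argument, making explicit the additivity of the gain across the decomposition and the fact that feasibility is determined by the connection pattern alone.
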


%

Before we formulate our algorithms, we need two lemmas.
Let $M[X]$ and $M[Y]$ be two swaps in a connection pattern $M$, for some disjoint $X,Y\subseteq[k]$.
{\em Interaction} between $M[X]$ and $M[Y]$ is any $i\in[k-1]$ such that $i\in X$ and $i+1\in Y$ or $i\in Y$ and $i+1\in X$.


\begin{lemma}\label{lem:interactions}
For any $k\ge 6$, there is no feasible and irreducible connection $k$-pattern that contains two 2-swaps that interact at least twice.
\end{lemma}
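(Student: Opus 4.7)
The plan is to prove the contrapositive: assuming $M$ is a feasible connection $k$-pattern (with $k\ge 6$) containing disjoint $2$-swaps $M[X], M[Y]$ that interact at least twice, I will exhibit a reducing decomposition $M = M_1 \uplus M_2$ with both $M_i$ standalone-feasible. The argument proceeds in three stages: (i) classify the possible relative positions of $X,Y \subseteq [k]$; (ii) analyse the local structure of the auxiliary 2-regular multigraph $H_M$ near $V_{X\cup Y}$; and (iii) select and verify a working split in each surviving case.

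For the classification, any two interactions between $X$ and $Y$ force the four indices of $X\cup Y$ into one of three templates up to $X\leftrightarrow Y$ symmetry: (A) four consecutive indices of $[k]$; (B) two disjoint blocks of two consecutive indices each; or (C) three consecutive indices plus one distant fourth (so that the two interactions share the middle index). Each $2$-swap $M[Z]$ with $Z=\{p,q\}$ has only two non-trivial matchings on its four vertices $\{2p-1,2p,2q-1,2q\}$: the \emph{crossing} one $\{\{2p-1,2q-1\},\{2p,2q\}\}$, which is standalone-feasible as a $2$-move, and the \emph{parallel} one $\{\{2p-1,2q\},\{2p,2q-1\}\}$, which is not; thus each template yields only a constant number of sub-cases.

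For the local analysis, recall that $H_M$ is the $2$-regular multigraph on $[2k]$ whose edges are the matching edges of $M$ together with the cyclic path arcs $(2i,2i+1)$, $i\in[k]$, and $M$ is feasible iff $H_M$ is a single cycle. The restriction $H_M[V_{X\cup Y}]$ assigns degree $2$ to vertices incident to an internal path arc and degree $1$ to the remaining ``boundary'' vertices, hence decomposes into paths with boundary endpoints and possibly some internal cycles. A case-by-case inspection yields the dichotomy: either $H_M[V_{X\cup Y}]$ contains a cycle---forcing $H_M$ to be disconnected and $M$ infeasible, contradicting the assumption---or it is a disjoint union of paths covering all eight vertices.

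In each surviving sub-case I would produce an explicit reducing decomposition by one of the three splits $(X,[k]\setminus X)$, $(Y,[k]\setminus Y)$, or $(X\cup Y,[k]\setminus(X\cup Y))$. When at least one of $M[X], M[Y]$ is crossing, say $M[X]$, I use the corresponding single-swap split; feasibility of $M[[k]\setminus X]$ is verified by a contraction argument showing that the crossing pattern of $M[X]$ forces the single cycle of $H_M$ to enter and exit $V_X$ along a ``horizontally-paired'' pair of super-edges on the four boundary vertices of $V_X$, which in turn forces the path-endpoint pairing in $V_{[k]\setminus X}$'s substructure to be ``diagonal'', so that the bridging arcs of $M[[k]\setminus X]$ standalone cross the two paths and produce a single cycle. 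When both $M[X]$ and $M[Y]$ are parallel I instead use the joint split, verifying standalone-feasibility of $M[X\cup Y]$ directly in each template and of $M[[k]\setminus(X\cup Y)]$ by an analogous contraction. The bound $k\ge 6$ ensures $|[k]\setminus(X\cup Y)|\ge 2$, so every candidate split is a genuine decomposition into two non-trivial sub-swaps. The main obstacle is the bookkeeping of how the three perfect matchings on the four boundary vertices interact with the path-endpoint pairing on the complementary side, and checking exhaustively that, in each sub-case, at least one of the three splits succeeds.
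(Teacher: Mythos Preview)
Your overall plan---classify the relative positions of $X,Y$, rule out infeasible $M$ via local cycle detection, then exhibit a reducing split---is reasonable, but the heart of stage (iii) is wrong. The claimed ``contraction argument'' that \emph{$M[X]$ crossing and $M$ feasible force $M[[k]\setminus X]$ feasible} is false. The way $H_M$ threads through $V_X$ via the crossing edges of $M[X]$ generally induces a \emph{different} pairing of the four boundary vertices than the pairing you get by keeping the original edges $\{2p-1,2p\}$ for $p\in X$; those are two of the three possible matchings on four points, not the same one. A concrete counterexample: take $k=6$, $X=\{1,3\}$, $Y=\{2,5\}$, $Z=\{4,6\}$ (your template (C)), with all three of $M[X],M[Y],M[Z]$ crossing, i.e.\ $M=\{\{1,5\},\{2,6\},\{3,9\},\{4,10\},\{7,11\},\{8,12\}\}$. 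One checks directly that $H_M$ is a single $12$-cycle, so $M$ is feasible; yet $M[Y\cup Z]$ yields two $6$-cycles, so your split $(X,[k]\setminus X)$ fails. (The split $(Y,[k]\setminus Y)$ \emph{does} work here, but your symmetric crossing/parallel rule has no way to pick it over the other.)

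The paper's proof takes a different route that explains this asymmetry. Instead of casing on the crossing/parallel status of $M[X],M[Y]$, it cases on the number of cycles in $\mathcal{C}\oplus M[Z]$ and uses the observation that a $2$-swap whose two deleted edges lie in distinct cycles always merges them, regardless of whether it is crossing or parallel. The crucial asymmetric fact in type~1 is that $X=\{i,i+2\}$ lies entirely on one uninterrupted arc after applying $M[Z]$ (since $i+1\in Y$, not $Z$), so the local cyclic order of the four endpoints of $M[X]$ is unaffected by $M[Z]$; this is why $M[X\cup Z]$ inherits feasibility from $M[X]$ while $M[Y\cup Z]$ need not. Your argument never isolates this point, so it cannot distinguish $X$ from $Y$ in template (C). To salvage your approach you would need to replace the crossing/parallel criterion by something sensitive to how $Z$ sits relative to each of $X,Y$---at which point you are essentially reconstructing the paper's case split.
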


Let $M$ be a connection pattern, i.e., a perfect matching on vertices $[2k]$.
We say that $M'$ is obtained from $M$ by swapping $i$ and $i+1$, for $i\in [k]$, when $M'$ is obtained from $M$ by swapping the mates of $2i-1$ and $2i+1$ and swapping the mates of $2i$ and $2i+2$.

\begin{lemma}\label{lem:relax1}
Let $M$ be a feasible irreducible connection $k$-pattern.
Assume that $M$ decomposes into three sequential swaps $M[X]$, $M[Y]$, and $M[Z]$, such that $|X|=|Y|=2$.
If there is exactly one index $i\in [k-1]$ with $i\in X$ and $i+1\in Y$ or $i\in Y$ and $i+1\in X$,
the connection pattern $M'$ obtained from $M$ by swapping $i$ and $i+1$ is either feasible or reducible.
\end{lemma}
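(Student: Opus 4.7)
The plan is to track how the cycle structure induced by $M$ on $[2k]$ (a single $2k$-cycle by feasibility) transforms into that of $M'$ under the swap. Writing $X = \{i, j\}$ and $Y = \{i+1, \ell\}$, sequentiality pins down the four $M$-edges affected by the swap to $(2i-1, a), (2i, b), (2i+1, c), (2i+2, d)$ with $\{a, b\} = \{2j-1, 2j\}$ and $\{c, d\} = \{2\ell-1, 2\ell\}$, and swapping $i$ and $i+1$ replaces them by $(2i-1, c), (2i, d), (2i+1, a), (2i+2, b)$. A direct check shows that $M'$ still decomposes as three sequential swaps $M'[X'], M'[Y'], M'[Z]$ with $X' = \{i, \ell\}$, $Y' = \{i+1, j\}$, $M'[Z] = M[Z]$, and that the parity (parallel vs.\ crossed) of $M'[X']$ equals that of $M[Y]$, while the parity of $M'[Y']$ equals that of $M[X]$. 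Consequently, standalone feasibility of $M'[X']$ (resp.\ $M'[Y']$) as a $2$-swap is determined by whether $M[Y]$ (resp.\ $M[X]$) is parallel.

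The core of the argument is an analysis of the cyclic order of the eight boundary vertices $\pi_1, \ldots, \pi_8 := 2i-1, 2i, 2i+1, 2i+2, a, b, c, d$ along the cycle induced by $M$. The path edge $\{2i, 2i+1\}$ forces $\pi_2$ and $\pi_3$ to be consecutive, and the four old matching edges place $\pi_1\pi_5$, $\pi_2\pi_6$, $\pi_3\pi_7$, $\pi_4\pi_8$ as four consecutive pairs. The single-interaction hypothesis rules out further direct $\pi$-adjacencies via path edges that would correspond to a forbidden secondary $X$--$Y$ interaction, and the already-proved Lemma~\ref{lem:interactions}, applied to the pairs $X$--$Z$ and $Y$--$Z$ when $Z$ is itself a $2$-swap, rules out additional parallel cell-to-cell adjacencies. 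Up to reflection, this leaves only a handful of essentially distinct cyclic layouts. Removing the four old matching edges cuts the cycle into four arcs---one being the trivial edge $\{\pi_2, \pi_3\}$, and the other three pairing up the remaining six $\pi$'s---and adding the four new matching edges produces a $2$-regular auxiliary multigraph $\Gamma$ on the four arcs whose connected components are in bijection with
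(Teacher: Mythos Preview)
Your proposal is incomplete: it literally ends mid-sentence, just as the announced ``core of the argument''---the enumeration of cyclic layouts of the eight boundary vertices---is about to begin. The setup you give is correct: $M'$ does decompose into sequential swaps $M'[X'], M'[Y'], M'[Z]$ with $X'=\{i,\ell\}$ and $Y'=\{i+1,j\}$, and the standalone feasibility of $M'[X']$ (resp.\ $M'[Y']$) coincides with that of $M[Y]$ (resp.\ $M[X]$). But the claim that ``up to reflection, this leaves only a handful of essentially distinct cyclic layouts'' is not substantiated, and your appeal to Lemma~\ref{lem:interactions} for the pairs $X$--$Z$ and $Y$--$Z$ explicitly only applies when $|Z|=2$, i.e.\ $k=6$; for $k\geq 7$ you offer no replacement constraint. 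As written, there is no proof here---only a plan.

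More importantly, you are making this much harder than it needs to be. The paper never works on the full $2k$-cycle. Instead, it first applies $M[Z]$ to $\mathcal{C}$ and does a three-way case split on the number of cycles in $\mathcal{C}\oplus M[Z]$. The key (and trivial) observation that collapses the analysis is that no element of $Z$ lies strictly between $i$ and $i+1$, so the edges at indices $i$ and $i+1$ are in the \emph{same} cycle of $\mathcal{C}\oplus M[Z]$. Swapping $i$ and $i+1$ therefore just permutes which $2$-swap is paired with index $j$ and which with index $\ell$, without changing the multiset of cycles being touched. Three cycles: both $2$-swaps must connect distinct pairs, and that is preserved. Two cycles: one swap connects and the other is feasible on the merged cycle, and the roles persist after the exchange. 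One cycle: $M[Z]$ is a move, irreducibility forces $M[X\cup Y]$ infeasible; the swap of $i$ and $i+1$ flips the crossed/parallel relation between the two $2$-swaps while preserving their individual feasibilities, so $M'[X'\cup Y']$ becomes feasible and $M'$ is reducible. Each case is a couple of lines. Your direct approach on $[2k]$ could in principle be completed, but it is tangled precisely because it does not factor out $M[Z]$ first.
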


We prove Lemmas~\ref{lem:interactions} and~\ref{lem:relax1} by case analysis.
Alternatively, one can check the lemmas against a fixed $k$ by writing a program to generate all the connection $k$-patterns as we will do later for Lemma~\ref{lem:relax2}.
Proving the lemmas against $k\leq 8$ suffices for our purpose.
We say that a swap $S=(E^-,E^+)$ {\em connects} cycles $C_1$, $C_2$ when $E^-$ intersects both $C_1$ and $C_2$.

\begin{observation}
\label{obs:connect}
If a 2-swap $S$ connects two cycles, then after applying $S$ the two cycles are replaced by a single cycle, regardless of whether $S$ was feasible or not.
\end{observation}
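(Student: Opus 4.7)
The plan is a direct case analysis on the structure of $E^+$. Write $E^-=\{e_1,e_2\}$ with $e_1=u_1v_1\in E(C_1)$ and $e_2=u_2v_2\in E(C_2)$. Because $C_1$ and $C_2$ are vertex-disjoint cycles, the four endpoints $u_1,v_1,u_2,v_2$ are pairwise distinct. Removing $E^-$ converts $C_1$ into a Hamiltonian path $P_1$ of $V(C_1)$ with endpoints $\{u_1,v_1\}$ and converts $C_2$ into a Hamiltonian path $P_2$ of $V(C_2)$ with endpoints $\{u_2,v_2\}$; every vertex of $C_1\cup C_2$ other than these four still has degree two.

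Next I would invoke the defining property of a 2-swap: the multiset of endpoints of $E^+$ equals that of $E^-$, which here is the four-element set $\{u_1,v_1,u_2,v_2\}$. Hence $E^+$ is one of the three perfect matchings on these vertices, namely $\{u_1v_1,u_2v_2\}=E^-$, $\{u_1u_2,v_1v_2\}$, or $\{u_1v_2,v_1u_2\}$. In each of the two nontrivial options, the two new edges attach the endpoints of $P_1$ to the endpoints of $P_2$, and tracing $P_1$ from one of its endpoints to the other, then one of the new edges, then $P_2$, then the remaining new edge, yields a simple cycle on $V(C_1)\cup V(C_2)$. Thus $C_1$ and $C_2$ are replaced by a single cycle. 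The argument is entirely local to $C_1\cup C_2$ and uses only the two edges of $E^-$ and the two new edges of $E^+$ incident to their endpoints, so it is indifferent to what $S$ does to other cycles in the current graph and in particular to feasibility of $S$ with respect to $\mathcal{C}$, which is a global property.

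The only subtle point, which I would dispatch in a preliminary sentence, is the degenerate matching $E^+=E^-$. A short check shows that the endpoint-matching condition forces $E^+$ and $E^-$ to be either equal or edge-disjoint, so excluding $E^+=E^-$ is the same as assuming $E^+\cap E^-=\emptyset$. This degenerate case would indeed leave $C_1$ and $C_2$ unchanged, but it does not arise in the applications of Observation~\ref{obs:connect} (for example, in the proof of Lemma~\ref{lem:relax1} the relevant 2-swap is obtained by decomposing a genuine, non-identity connection pattern). I do not anticipate any deeper obstacle; once this edge case is dealt with, the three-matching enumeration finishes the proof mechanically.
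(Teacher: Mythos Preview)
Your argument is correct. The paper states this as an Observation with no proof at all---it is meant to be self-evident---so there is nothing to compare your approach against; your three-matching enumeration is exactly the natural way to make the observation explicit, and your handling of the degenerate case $E^+=E^-$ is appropriate (the paper implicitly works only with nontrivial swaps throughout).
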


\begin{observation}
Let~$M$ be a connection $k$-pattern that contains a 2-swap~$M[X]$ for some~$X \subseteq [k]$, such that~$M[X]$ is a matching on the vertex set~$\{2i-1, 2i \mid i \in X\}$. Let~$a<b<c<d$ be the endpoints of the two edges in~$M[X]$. Then the 2-swap $M[X]$ is feasible if and only if~$M[X]= \{ac,bd\}$, i.e., when the edges of~$M[X]$, seen as chords of the Hamiltonian cycle, cross each other.
\end{observation}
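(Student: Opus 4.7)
The plan is to prove the observation by a direct case analysis on the three perfect matchings of the $4$-element vertex set $\{a,b,c,d\}$. First, I would set up notation: since $|X|=2$, write $X=\{j_1,j_2\}$ with $j_1<j_2$, so that $a=2j_1-1$, $b=2j_1$, $c=2j_2-1$, $d=2j_2$. For any admissible embedding $f$, these labels correspond to the cycle vertices $u_a=v_{f(j_1)}$, $u_b=v_{f(j_1)+1}$, $u_c=v_{f(j_2)}$, $u_d=v_{f(j_2)+1}$, encountered in this cyclic order around $\mathcal{C}$, and the two removed edges are $u_au_b$ and $u_cu_d$. Deleting these edges from $\mathcal{C}$ splits it into two paths: $P_1$, running from $u_b$ to $u_c$ along one arc, with endpoint-labels $\{b,c\}$, and $P_2$, running from $u_d$ back around to $u_a$ along the other arc, with endpoint-labels $\{a,d\}$. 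Feasibility of the swap $M[X]$ then reduces to whether the two added edges reconnect $P_1$ and $P_2$ into a single Hamiltonian cycle or close each path off into a separate cycle; as emphasised in the paper, this question depends only on $M[X]$, not on the embedding.

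I would then check each of the three matchings in turn. For $M[X]=\{ac,bd\}$ each new edge has one endpoint in $P_1$ and one in $P_2$, so $P_1$ and $P_2$ merge into a single cycle and the swap is feasible. For $M[X]=\{ad,bc\}$ each new edge has both endpoints inside one path ($ad$ closes $P_2$, $bc$ closes $P_1$), producing two disjoint cycles and hence an infeasible swap. The remaining matching $\{ab,cd\}$ equals $\{\{2j_1-1,2j_1\},\{2j_2-1,2j_2\}\}$, which would simply re-add the two removed edges, giving the trivial identity swap $E^+=E^-$; since $M[X]\cup\{\{2j_1-1,2j_1\},\{2j_2-1,2j_2\}\}$ then has parallel edges rather than forming a simple $4$-cycle, this matching is not a sequential connection pattern, and so by the earlier remark that every $2$-swap in a simple graph is sequential it is ruled out as a genuine $2$-swap.

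The only step I expect to need any real care is the disposal of the degenerate matching $\{ab,cd\}$, which I resolve by invoking sequentiality; the remaining two cases are a one-line check each. The crossing-chord reformulation is then immediate: placing $a<b<c<d$ on a circle in this cyclic order, the matching $\{ac,bd\}$ is the unique one whose two chords cross inside the disk, while the chords of $\{ab,cd\}$ and $\{ad,bc\}$ are respectively disjoint and nested.
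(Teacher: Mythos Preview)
The paper states this as an Observation without proof, so there is no argument in the paper to compare against; your direct case analysis is the natural way to verify it, and your treatment of the two non-degenerate matchings $\{ac,bd\}$ and $\{ad,bc\}$ is correct.

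Your disposal of the degenerate matching $\{ab,cd\}$ is the one weak spot. Invoking the remark that ``every $2$-swap in a simple graph is sequential'' does not cleanly exclude this case: that remark concerns actual swaps $(E^-,E^+)$, not connection patterns, and the identity swap $E^+=E^-$ that $\{ab,cd\}$ describes is in fact not sequential either (its edge set $E^-\cup E^+$ consists of two disjoint edges, not a single alternating closed walk), so the paper's remark is already tacitly ignoring this degenerate situation rather than ruling it out. More to the point, the identity swap \emph{is} feasible---it returns $\mathcal{C}$ itself---so taken literally the ``only if'' direction of the Observation has this trivial exception. In the paper's applications the exception never matters, because the ambient pattern $M$ is always assumed irreducible, and $M[X]=\{ab,cd\}$ would force $M$ to decompose into the identity $2$-move $M[X]$ and the $(k-2)$-move $M[[k]\setminus X]$. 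A cleaner way to finish your argument is simply to note that $\{ab,cd\}$ gives the identity swap and is excluded either by convention or by the irreducibility hypothesis present wherever the Observation is invoked.
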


Recall that $\mathcal{C}$ denotes the initial Hamiltonian cycle.
For a connection $k$-pattern~$M$, an index set~$X \subseteq [k]$ and the corresponding swap $M[X]$, we denote by $\mathcal{C}\oplus M[X]$ the cycle cover obtained by applying the swap $M[X]$ on $\mathcal{C}$.


\begin{proof}[Proof of Lemma~\ref{lem:interactions}]
Suppose that there is an irreducible feasible connection $k$-pattern $M$ such that for a triple $X\uplus Y\uplus Z=[k]$ we have that $M[X]$ and $M[Y]$ are 2-swaps that interact at least twice with~$\min(X) < \min(Y)$.
Note that, when $k\leq 7$, $M[Z]$ is a sequential $(k-4)$-swap, but in this lemma, we do not require that $M[Z]$ is sequential because we want to reuse this lemma later for $k=8$.

There are two ways in which $X$ and $Y$ can interact at least twice: (type 1) there is an index $i$ such that $i,i+2\in X$ and $i+1\in Y$, and (type 2) there are two indices $i$ and $j$ such that $X\cup Y=\{i,i+1,j,j+1\}$. In both types, $X\cup Y$ can be partitioned into two subsets of successive integers. 
It follows that the edges $E^-(M[X])\cup E^-(M[Y])$ intersect at most two cycles of $\Cc\oplus M[Z]$.
Hence, $\Cc\oplus M[Z]$ has at most two cycles, for otherwise $M$ is not feasible.

Assume $\Cc\oplus M[Z]$ has exactly two cycles.
If the interaction is of type 1, $M[X]$ cannot connect the cycles, and if the interaction is of type 2, both or neither $M[X]$ and $M[Y]$ connect the cycles.
At least one of $M[X]$ or $M[Y]$ has to connect the cycles, for otherwise $M$ is not feasible.
$M[Y]$ therefore connects the cycles, and then $M[Y\cup Z]$ is feasible by Observation~\ref{obs:connect}.
Because $M$ is irreducible, $M[X]$ is infeasible.
If the interaction is of type 1, no elements in $Z$ are between $i$ and $i+2$.
$X$ is therefore fully contained in one of the two cycles, let us denote it $\mathcal{C}_1$.
Because applying $M[Z]$ does not affect the edges of the Hamiltonian cycle between the embeddings of $i$ and $i+2$, $M[X]$ remains infeasible on $\mathcal{C}_1$.
Therefore, $M[X\cup Z]$ splits the Hamiltonian cycle into three cycles, and the 2-swap $M[Y]$ cannot connect three cycles into a single cycle, which is a contradiction.
If the interaction is of type 2, $M[X]$ also connects the two cycles, and therefore $M[Y]$ also has to be infeasible.
We then observe that, whether $M[X]$ and $M[Y]$ are parallel or cross, the entire $M$ is infeasible.
In more detail, if they are parallel (i.e., if $X=\{i,j+1\}$ and $Y=\{i+1,j\}$), we have a cycle with four vertices $\{2i,2i+1,2j,2j+1\}$ and a cycle containing the other vertices,
and if they cross (i.e., if $X=\{i,j\}$, and $Y=\{i+1,j+1\}$), we have a cycle containing $\{2i-1,2j,2j+1,2i+2\}$ and a cycle containing $\{2j-1,2i,2i+1,2j+2\}$.
These are two different cycles because if in $\Cc\oplus M$ there is a path from $2i+2$ to $2j-1$ or to $2j+2$, then $M[Y]$ does no connect two different cycles of $\Cc\oplus M[Z]$, a contradiction.

Assume $\Cc\oplus M[Z]$ has exactly one cycle.
Then, $M[Z]$ is feasible, so $M[X\cup Y]$ has to be infeasible, by irreducibility of $M$.
If the interaction is of type 1, $M[X\cup Y]$ is infeasible only when both $M[X]$ and $M[Y]$ are feasible.
Because applying $M[Z]$ does not affect the edges of the Hamiltonian cycle between the embeddings of $i$ and $i+2$, $M[X\cup Z]$ is also feasible.
Hence $M$ decomposes into two moves $M[X\cup Z]$ and $M[Y]$, which is a contradiction with irreducibility of $M$.
If the interaction is of type 2, because applying $M[Z]$ has to change the feasibility of $M[X\cup Y]$, it also changes the feasibility of both $M[X]$ and $M[Y]$.
If $M[X]$ is feasible and $M[Y]$ is infeasible, $M[Y\cup Z]$ becomes feasible, which is a contradiction.
We thus can assume that the feasibilities of $M[X]$ and $M[Y]$ are the same,
and in this case, $M[X\cup Y]$ is feasible if and only if (1) $M[X]$ and $M[Y]$ are parallel and both are feasible or (2) $M[X]$ and $M[Y]$ cross and both are infeasible.
Because applying $M[Z]$ has to flip the direction of exactly one of the two edges in $E^-$ of $M[X]$ (and $M[Y]$), it flips the states (1) and (2).
$M[X\cup Y]$ therefore remains infeasible after applying $M[Z]$, which is a contradiction.
\end{proof}

Let $M$ by be a connection pattern, i.e., a perfect matching on vertices $[2k]$.
We say that $M'$ is obtained from $M$ by swapping $i$ and $i+1$, for $i\in [k]$, when $M'$ is obtained from $M$ by swapping the mates of $2i-1$ and $2i+1$ and swapping the mates of $2i$ and $2i+2$.

\begin{proof}[Proof of Lemma~\ref{lem:relax1}]
Note that, in this lemma, we do not assume $k\leq 7$ because we want to reuse it later for $k=8$.
Because applying $M[X\cup Y]$ can connect at most three cycles, applying $M[Z]$ results in one, two, or three cycles.

Assume $\Cc\oplus M[Z]$ has exactly three cycles.
Then, $M[X]$ and $M[Y]$ connect distinct pairs of cycles.
Because swapping $i$ and $i+1$ does not affect the pairs, $M'$ is also feasible.

Assume $\Cc\oplus M[Z]$ has exactly two cycles.
Then, at least one of $M[X]$ and $M[Y]$ connects the two cycles, and let us assume $M[X]$ does so.
Applying $M[X\cup Z]$ then results in a single cycle and $M[Y]$ is feasible on this cycle.
After swapping $i$ and $i+1$ we again get one swap that connects the two cycles and one which is feasible on the resulting single cycle.

Finally, assume $\Cc\oplus M[Z]$ has just one cycle, i.e., $M[Z]$ is feasible.
Then, $M[X\cup Y]$ has to be infeasible because $M$ is irreducible.
Therefore, we have either (1) $M[X]$ and $M[Y]$ cross and both are feasible, or (2) $M[X]$ and $M[Y]$ are parallel and at least one of them is infeasible.
Because swapping $i$ and $i+1$ flips the states of crossing and parallel but does not change the feasibility of $M[X]$ nor $M[Y]$,
$M'[X\cup Y]$ becomes feasible, and thus $M'$ is reducible.
\end{proof}

\begin{algorithm}[t!]
	\caption{Quasi-linear-time algorithm for $k\leq 7$}
	\label{alg:k7}
	\begin{algorithmic}[1]
	\For{each feasible irreducible connection $k$-pattern $M$}
		\If{$M$ consists of at most two sequential swaps}
			\State Apply the algorithm in Theorem~\ref{thm:small-k-c}.
		\Else
			\State Let $M=M[X]\uplus M[Y]\uplus M[Z]$ where $|X|=|Y|=2$ and $|Z|=k-4$.
			\If{there are no interactions between $X$ and $Y$}
				\For{each embedding $f_Z$ for $Z$}
					\State Independently compute the best embeddings $f_X$ for $X$ and $f_Y$ for $Y$.
				\EndFor
			\Else
				\State Relax the constraint $f_X(i)<f_Y(i+1)$ to $f_X(i)\neq f_Y(i+1)$.
				\For{each embedding $f_Z$ for $Z$}
					\State Compute the best pair $(f_X,f_Y)$ satisfying the relaxed constraints.
				\EndFor
			\EndIf
		\EndIf
	\EndFor
	\end{algorithmic}
\end{algorithm}

Now we are ready to describe the algorithm from Theorem~\ref{thm:at-most-7} (see also Pseudocode~\ref{alg:k7}).
For each feasible and irreducible connection $k$-pattern $M$, we compute the best embedding as follows.
If $M$ consists of at most two sequential swaps, we can use the algorithm in Theorem~\ref{thm:small-k-c}.
Otherwise, $M$ consists of three sequential swaps $M[X]$, $M[Y]$, $M[Z]$ such that $X\uplus Y\uplus Z = [k]$, $|X|=|Y|=2$ and $|Z|=k-4$.
For each embedding $f_X:X\rightarrow[n]$ of $X=\{i,j\}$ we create a 2-dimensional point $(f_X(i), f_X(j))$ with priority $\gain_{X}(f_X)$ and we put all the points in a range tree data structure $D_X$~\cite{PreparataS85}.
We build an analogous data structure for $Y$.
Next, for each embedding $f_Z$ for $Z$, we compute the best pair of embeddings $(f_X,f_Y)$ for $X$ and $Y$ as follows.

If there are no interactions between $X$ and $Y$,
we can find the best pair in $\Oh(\polylog n)$ time by independently picking the best embeddings for $X$ and $Y$ by querying the range trees $D_X$ and $D_Y$.
Indeed, first note that there is no index $i\in [k-1]$ such that $X=\{i,i+1\}$ because in such a case, both the 2-swap and the remaining $(k-2)$-swap have to be feasible (similarly for $Y$).
Since there are no interactions between $X$ and $Y$, we must have $i-1\in Z\cup\{0\}$ and $i+1\in Z\cup \{k+1\}$ for every $i\in X\cup Y$.
To find the best embedding $f_X$ of $X=\{i,j\}$, we query $D_X$ with the constraints $f_Z(i-1)<f_X(i)<f_Z(i+1)$ and $f_Z(j-1)<f_X(j)<f_Z(j+1)$, where we define $f_Z(0):=0$ and $f_Z(k+1):=n+1$.
We proceed analogously for $Y$.

Finally, assume there are interactions between $X$ and $Y$, so from Lemma~\ref{lem:interactions}, there is exactly one interaction.
W.l.o.g.~$i\in X$ and $i+1\in Y$.
Note that $i-1\in Z\cup\{0\}$ and $i+2\in Z\cup\{k+1\}$.
We first relax the constraint $f_Z(i-1)<f_X(i)<f_Y(i+1)<f_Z(i+2)$, where we define $f_Z(0):=0$ and $f_Z(k+1):=n+1$, to
three constraints $f_Z(i-1)<f_X(i)<f_Z(i+2)$, $f_Z(i-1)<f_Y(i+1)<f_Z(i+2)$, and $f_X(i)\neq f_Y(i+1)$.
We then drop the disturbing inequality constraint $f_X(i)\neq f_Y(i+1)$ by color-coding\footnote{Instead of color-coding,
we can adapt the range tree to support orthogonal range maximum queries with an additional constraint of the form $x\neq i$ by keeping one additional point in each node.
With this approach, we can avoid the additional $\log^2 n$ factor.
Because this paper does not focus on optimizing the $\polylog n$ factor, we do not touch on the details.}.
We color each vertex in $[n]$ in red or blue, and we independently pick the best embedding for $X$ (resp. $Y$) that uses only red (resp. blue) vertices.
By using a family of perfect hash functions~\cite{FredmanKS84}, we can construct a set of $\Oh(\log^2 n)$ colorings such that, for every pair of embeddings $f_X$ and $f_Y$,
there is at least one coloring that colors all the vertices in $f_X$ red and all the vertices in $f_Y$ blue.

We now obtain the best pair of embeddings $(f_X,f_Y)$ satisfying the relaxed constraints.
If the obtained $k$-swap is not improving, we immediately know that there are no improving $k$-moves that fit into $M$.
If it is improving and satisfies the original constraint, we are done.
Finally, if it is improving but does not satisfy the original constraint, it fits into the connection pattern $M'$ that is obtained from $M$ by swapping $i$ and $i+1$.
By Lemma~\ref{lem:relax1}, $M'$ is either feasible or reducible.
Because no improving $k$-swaps fit into reducible connection patterns, $M'$ has to be feasible.
We therefore obtain a $k$-move that is as good as the best $k$-move that fits into $M$.
This completes the proof of Theorem~\ref{thm:at-most-7}.

We next consider the case of $k=8$.
Note that, because Lemma~\ref{lem:interactions} and~\ref{lem:relax1} do not assume $k\leq 7$, the above algorithm can also compute the best improving $k$-move that can be decomposed into three sequential swaps of size $(2,2,k-4)$ for any fixed $k$ under the same assumption.
Moreover, any connection patterns of $8$-moves consisting of four 2-swaps are reducible because it always induces a pair of two 2-swaps that interact at least twice.
The remaining case for $k=8$ is only when the $8$-move can be decomposed into three sequential swaps of size $(2,3,3)$.
In order to tackle this case, we exploit the bounded-weight assumption as follows.
For each connection pattern $M=M[X]\uplus M[Y]\uplus M[Z]$ with $|X|=2$ and $|Y|=|Z|=3$, and for each embedding $f_Z$ for $Z$,
we want to compute the best pair of embeddings $f_X$ for $X$ and $f_Y$ for $Y$.
When all the weights are integers from $[W]:=\{1,\cdots,W\}$, the gain of $(f_X,M[X])$ is an integer from $[-2W,2W]:=\{-2W,-2W+1,\cdots,2W\}$,
and the gain $(f_Y,M[Y])$ is an integer from $[-3W,3W]$.
We therefore have only $\Oh(W^2)$ pairs of gains.
By guessing the pair of gains, the query of finding the \emph{best} pair can be reduced to the query of finding an \emph{arbitrary} pair, 
and the latter query can be efficiently answered by adapting the range tree.

\begin{algorithm}[t!]
	\caption{Quasi-linear-time algorithm for $k=8$}
	\label{alg:k8}
	\begin{algorithmic}[1]
	\For{each irreducible connection pattern $M$ of 8-moves}
		\If{$M$ consists of three sequential swaps of size $(2,3,3)$}
			\State Let $M=M[X]\uplus M[Y]\uplus M[Z]$ with $X=\{i,j\}$.
			\If{$\{i-1,i+1\}\subseteq Y$ and $\{j-1,j+1\}\subseteq Z$}
				\State Relax $f_Y(i-1)<f_X(i)<f_Y(i+1)$ to $f_Y(i-1)\neq f_X(i) \neq f_Y(i+1)$.
				\For{each embedding $f_Z$ for $Z$}
					\State Compute the best pair $(f_X,f_Y)$ satisfying the relaxed constraints.
				\EndFor
			\Else
				\State We can assume $|\{i-1,i+1\}\cap Y|\leq 1$ and $|\{j-1,j+1\}\cap Y|\leq 1$.
				\For{each embedding $f_Z$ for $Z$}
					\For{$g_X\in [-2W,2W]$}
						\For{$g_Y\in [-3W,3W]$}
							\State Find a pair of embeddings $(f_X,f_Y)$ of gains $(g_X,g_Y)$.
						\EndFor
					\EndFor
				\EndFor
			\EndIf
		\Else
			\State Apply the algorithm in Theorem~\ref{thm:at-most-7}.
		\EndIf
	\EndFor
	\end{algorithmic}
\end{algorithm}

We now describe details of the algorithm for $k=8$ to prove Theorem~\ref{thm:8opt} (see also Pseudocode~\ref{alg:k8}).
For each irreducible connection pattern $M$ of $8$-moves, we compute the best embedding as follows.
If the size of the sequential moves in $M$ is not $(2,3,3)$, as we have discussed above, we can use the algorithm in Theorem~\ref{thm:at-most-7}.
Let $[k]=X\uplus Y\uplus Z$ with $|X|=2$ and $|Y|=|Z|=3$, and let $X=\{i,j\}$.
We take $Y$ and $Z$ so that the number of interactions between $X$ and $Y$ is not greater than the number of interactions between $X$ and $Z$.
Because the total number of interactions of $X$ is at most four, the number of interactions between $X$ and $Y$ is at most two.
We consider two cases.

(Case 1) If $\{i-1,i+1\}\subseteq Y$ and $\{j-1,j+1\}\subseteq Z$ hold, we will show that we can relax the constraint $f_Y(i-1)<f_X(i)<f_Y(i+1)$ to $f_Y(i-1)\neq f_X(i) \neq f_Y(i+1)$ as we did in Theorem~\ref{thm:at-most-7}.
We use the following lemma that corresponds to Lemma~\ref{lem:relax1} used in Theorem~\ref{thm:at-most-7}.
\begin{lemma}\label{lem:relax2}
Let $M$ be an irreducible connection pattern of $8$-moves that consists of three sequential moves $[k]=X\uplus Y\uplus Z$ with $X=\{i,j\}$ and $|Y|=|Z|=3$ and satisfies
$\{i-1,i+1\}\subseteq Y$ and $\{j-1,j+1\}\subseteq Z$.
Then for at least one of the pair $(a,A)\in\{(i,Y), (j,Z)\}$, all the following conditions hold.
\begin{enumerate}
	\item $a-2,a+2\not\in A$.
	\item The connection pattern obtained from $M$ by swapping $a$ and $a+1$ is either feasible or reducible.
	\item The connection pattern obtained from $M$ by swapping $a$ and $a-1$ is either feasible or reducible.
\end{enumerate}
\end{lemma}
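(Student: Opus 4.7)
The plan is to establish Lemma~\ref{lem:relax2} by a finite, structurally-guided case analysis, which (as the paper already signals) can alternatively be delegated to a short computer enumeration since $k=8$ is fixed. Writing $X=\{i,j\}$ with $i<j$, let $y$ denote the unique element of $Y\setminus\{i-1,i+1\}$ and $z$ the unique element of $Z\setminus\{j-1,j+1\}$. The requirement $X\uplus Y\uplus Z=[8]$ together with $\{i-1,i,i+1\}$ and $\{j-1,j,j+1\}$ disjoint forces $j\ge i+3$, so $(i,j)\in\{(2,5),(2,6),(2,7),(3,6),(3,7),(4,7)\}$, and $\{y,z\}$ is then determined up to a swap. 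This reduces the analysis to a handful of base configurations and, for each, a finite collection of candidate perfect matchings $M$ on $[16]$ to inspect.

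For each candidate $M$ the verification would split into three layers. First, filter by feasibility (tracing $M\cup\{\{2\ell-1,2\ell\}\colon\ell\in[k]\}$ gives a single cycle) and irreducibility (no proper subset of $\iota$-blocks forms a sub-pattern that is itself a move). Second, test condition~(1) for both pairs $(a,A)\in\{(i,Y),(j,Z)\}$; condition~(1) fails for $(i,Y)$ precisely when $y\in\{i-2,i+2\}$, and analogously for $(j,Z)$. Third, for whichever pair passes~(1), test conditions~(2) and~(3) by explicitly performing the two local mate-swaps prescribed in the definition and re-running the feasibility and reducibility tests on each resulting pattern.

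The delicate case, and main potential obstacle, is when condition~(1) would fail simultaneously for both pairs, e.g.\ when $i=2$, $j=6$, $y=4$, $z=8$. The plan here is to show that such patterns are automatically either infeasible or reducible and so do not survive the first filter: the coincidences $y=i\pm 2$ and $z=j\pm 2$ recreate exactly the double-interaction configurations handled in the proof of Lemma~\ref{lem:interactions}, and following the same cycle-count bookkeeping (tracking how many cycles $\mathcal{C}\oplus M[Z]$ and $\mathcal{C}\oplus M[Y]$ produce, combined with Observation~\ref{obs:connect} applied to the 2-swap $M[X]$) we conclude that one of $M[X\cup Y]$, $M[X\cup Z]$, or $M[Y\cup Z]$ is a proper sub-move, contradicting irreducibility.

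Once these pathological configurations are excluded, the verification of conditions~(2) and~(3) in each surviving case is in spirit identical to the end-game of the proof of Lemma~\ref{lem:relax1}: swapping $a\leftrightarrow a\pm 1$ permutes only two mate-pairs, so the cycle structure of $\mathcal{C}\oplus M'$ differs from that of $\mathcal{C}\oplus M$ by a bounded local rewiring whose outcome (a single cycle, versus a decomposition into two moves) is controlled by the same parity and crossing-versus-parallel distinction used there. Because the total number of residual patterns is small, the plan is to finish either by an explicit case table or, as the paper advertises, by a short program that enumerates all perfect matchings on $[16]$ compatible with the hypothesis and certifies the three conditions directly.
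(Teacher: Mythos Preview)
Your proposal is a plan rather than a proof, and the paper's own argument is purely computational: it enumerates all $645{,}120$ connection $8$-patterns, filters to the $136$ that satisfy the hypothesis, and verifies the conclusion for each one by a Python program (reproduced in the appendix). No structural reasoning is given or needed there. Since your plan explicitly concedes it would ``finish either by an explicit case table or \ldots\ by a short program'', at that level you are proposing exactly what the paper does.

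The structural shortcuts you layer on top, however, contain real gaps. The critical one is your handling of the case where condition~(1) fails for both $(i,Y)$ and $(j,Z)$ simultaneously---which does occur for the index configurations $(i,j,y,z)\in\{(2,6,4,8),(2,7,4,5),(3,6,1,8),(3,7,1,5)\}$. You assert these are ``automatically either infeasible or reducible'' because they ``recreate exactly the double-interaction configurations handled in the proof of Lemma~\ref{lem:interactions}''. But Lemma~\ref{lem:interactions} is specifically about two \emph{$2$-swaps} interacting twice; its proof repeatedly uses that each swap removes exactly two edges and hence can connect at most two cycles, and it invokes the feasible/infeasible and crossing/parallel dichotomy that only makes sense for $2$-swaps. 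Here the interacting partner of $M[X]$ is a $3$-swap, and none of that bookkeeping transfers. The conclusion you want happens to be true (the computer check confirms no such $M$ survives), but your justification is not a proof of it.

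The same objection applies to your treatment of conditions~(2) and~(3). Lemma~\ref{lem:relax1} assumes $|X|=|Y|=2$ and a \emph{single} interaction between them; its proof turns on the crossing/parallel dichotomy for two $2$-swaps. In the present lemma one of the two swaps being exchanged is a $3$-swap and, by hypothesis, it interacts with the $2$-swap twice (both $a-1$ and $a+1$ lie in $A$). So saying the verification is ``in spirit identical to the end-game of the proof of Lemma~\ref{lem:relax1}'' does not substitute for an argument; new case analysis (or the enumeration) is genuinely required.
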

\begin{proof}
Because the number of connection $8$-patterns is finite, we can prove the lemma by checking the statement against all the connection patterns.
In Appendix~\ref{sec:app-code}, we attach a source code of a Python program that generates all the connection pattern and checks the statement against each of them.
The same source code is also available at \url{https://github.com/wata-orz/TSP_analysis}.
There are 645120 connection patterns of 8-moves, and among them, 136 patterns meet the precondition in the lemma.
We checked that all of them satisfy the statement in the lemma.
\end{proof}

Assume that $(i,Y)$ satisfies all the three conditions in the lemma.
Then we have $i-2\in Z\cup \{0\}$ and $i+2\in Z\cup \{k+1\}$.
For each embedding $f_Z$ for $Z$, we compute the best pair of embeddings $(f_X,f_Y)$ as follows.
We first relax the constraint $f_Z(i-2)<f_Y(i-1)<f_X(i)<f_Y(i+1)<f_Z(i+2)$, where we define $f_Z(0):=0$ and $f_Z(k+1):=n+1$, to
four constraints $f_Z(i-2)<f_X(i)<f_Z(i+2)$, $f_Z(i-2)<f_Y(i-1)$, $f_Y(i+1)<f_Z(i+2)$, and $f_Y(i-1)\neq f_X(i)\neq f_Y(i+1)$.
We then drop the disturbing inequality constraint $f_Y(i-1)\neq f_X(i)\neq f_Y(i+1)$ by color-coding.
We now obtain the best pair of embeddings $(f_X,f_Y)$ satisfying the relaxed constraints.
If the obtained $k$-swap is not improving, we immediately know that there are no improving $k$-moves fit into $M$.
If it is improving but does not satisfy the original constraint, it fits into the connection pattern $M'$ that is obtained from $M$ by swapping either $i$ and $i+1$ or $i$ and $i-1$.
From Lemma~\ref{lem:relax2}, $M'$ is either feasible or reducible.
Because no improving $k$-swaps fit into reducible connection patterns, $M'$ has to be feasible.
We therefore obtain a $k$-move that is as good as the best $k$-move that fits into $M$.

(Case 2) If the condition of the case 1 is not met, we have $|\{i-1,i+1\}\cap Y|\leq 1$ and $|\{j-1,j+1\}\cap Y|\leq 1$.
For each embedding $f_Z$ for $Z$, we compute the best pair $(f_X,f_Y)$ as follows.
We first reduce the problem of finding the \emph{best} pair to the problem of finding an \emph{arbitrary} pair by guessing the gains.
When all the weights are integers from $[W]$, the gain $g_X$ of $f_X$ is an integer from $[-2W,2W]$, and the gain $g_Y$ of $f_Y$ is an integer from $[-3W,3W]$.
We therefore have only $\Oh(W^2)$ choices.
For each embedding $f_X$ with the guessed gain $g_X$, we create the corresponding 2-dimensional point,
and for each embedding $f_Y$ with the guessed gain $g_Y$, we create the corresponding 3-dimensional point.
Because the number of interactions between $X$ and $Y$ is at most two, and there is no index $i\in X$ such that $\{i-1,i+1\}\subseteq Y$, we can answer the existence of the desired pair of points by using the following data structure.
This completes the proof of Theorem~\ref{thm:8opt}.

\begin{lemma}
Given a set $P$ of three dimensional points and a set $Q$ of two dimensional points,
we can construct, in $\Oh(n \polylog n)$ time for $n=|P|+|Q|$, a data structure to answer each of the following queries in $\Oh(\polylog n)$ time.
\begin{enumerate}
	\item Given four ranges $(R_x, R_y^p, R_y^q, R_z)$, find $p=(p_x,p_y,p_z)\in P$ and $q=(q_x,q_y)\in Q$ satisfying $p\in R_x\times R_y^p\times R_z$, $q\in R_x\times R_y^q$, and $p_x<q_x$.
	\item Given three ranges $(R_x, R_y, R_z)$, find $p=(p_x,p_y,p_z)\in P$ and $q=(q_x,q_y)\in Q$ satisfying $p\in R_x\times R_y\times R_z$, $q\in R_x\times R_y$, $p_x<q_x$, and $p_y<q_y$.
\end{enumerate}
\end{lemma}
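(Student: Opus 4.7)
The plan is to build a multi-level augmented range tree that exploits the following geometric observation: if two point sets are separated by an $x$-axis-orthogonal line, the cross inequality $p_x<q_x$ is automatic for any $p$ on the left and any $q$ on the right, and an analogous statement holds for $y$. I would build a primary range tree $T_x$ on the $x$-coordinates of $P\cup Q$ and, at each node $v\in T_x$, hang two secondary structures describing the points of $P$ and of $Q$ whose $x$-coordinates lie in the subtree of $v$. By the usual range-tree accounting each point appears in $O(\log n)$ nodes, so total preprocessing time and space are $\Oh(n\polylog n)$.

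For query~1, the secondary structure on $P$ at $v$ is a two-dimensional range tree on $(p_y,p_z)$ augmented so that a query $R_y^p\times R_z$ returns the admissible $P$-point in $v$'s subtree of smallest $p_x$; the secondary on $Q$ is a one-dimensional augmented range tree on $q_y$ that returns the admissible $Q$-point of largest $q_x$. A query decomposes $R_x$ into canonical nodes $v_1<\cdots<v_\ell$ of $T_x$; for each $v_i$ the secondaries yield candidates $p^*_i$ (minimal $p_x$) and $q^*_i$ (maximal $q_x$). A pair $(p^*_i,q^*_j)$ with $i<j$ already satisfies $p_x<q_x$ because $v_i$ lies entirely left of $v_j$, and for $i=j$ one checks $p^*_i.x<q^*_i.x$ explicitly. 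The extremal-$x$ augmentation is complete: any admissible pair $(p,q)\in P\cap v_i\times Q\cap v_j$ with $p_x<q_x$ also witnesses $p^*_i.x\le p_x<q_x\le q^*_j.x$. Each secondary query costs $\Oh(\polylog n)$, and with $\Oh(\log n)$ canonical nodes the total query cost is $\Oh(\polylog n)$.

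For query~2, the extra constraint $p_y<q_y$ motivates changing the augmentation so that the $P$-secondary at $v$ returns the admissible point of smallest $p_y$ and the $Q$-secondary the admissible point of largest $q_y$. When the canonical decomposition of $R_x$ places $p$ and $q$ in distinct canonical nodes $v_i,v_j$ with $i<j$, the inequality $p_x<q_x$ is automatic and a single extremal-$y$ comparison decides $p_y<q_y$. The genuinely new case is $i=j$, where both inequalities must be forced simultaneously. I resolve it by recursing through $T_x$ inside $v_i$: at every internal node $v$ with children $v_L,v_R$, the ``split'' possibility $p\in P\cap v_L$, $q\in Q\cap v_R$ makes $p_x<q_x$ free and reduces to an extremal-$y$ check, while the two ``same-side'' possibilities are handled recursively. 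The recursion has depth $\Oh(\log n)$ and does $\Oh(\polylog n)$ work per level, so the overall query cost remains $\Oh(\polylog n)$.

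The main obstacle is precisely this simultaneous handling of the two cross inequalities in query~2: a naive plan based on a two-dimensional canonical decomposition in $(x,y)$ runs into trouble because the extremal $P$-point for the $x$-coordinate is in general not the extremal one for $y$, so the two inequalities do not decouple. The $x$-tree recursion overcomes this by ensuring that at every recursion level at least one inequality is forced by geometry, leaving a single strict inequality that the extremal-$y$ augmentation can answer. Correctness of the extremal trick at each step is the monotonicity observation that replacing $p$ by the admissible $P$-point of smallest $p_y$ and $q$ by the admissible $Q$-point of largest $q_y$ preserves both admissibility and the strict inequality $p_y<q_y$.
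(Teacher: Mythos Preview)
Your handling of query~2 has a genuine gap in the running-time analysis. You write that the recursion inside a canonical node $v_i$ ``has depth $\Oh(\log n)$ and does $\Oh(\polylog n)$ work per level'', but this miscounts the work: at every internal node you branch into \emph{both} same-side subproblems, so the number of nodes visited at depth $d$ below $v_i$ is $2^d$, and the total over all depths is $\Theta(|v_i|)$. Summed over the $\Oh(\log n)$ canonical nodes of the $R_x$-decomposition, the recursion may touch $\Theta(n)$ nodes in the worst case, each incurring a secondary-structure query, so a single query costs $\Theta(n\,\polylog n)$ rather than $\Oh(\polylog n)$. The extremal-$y$ trick is sound for the split case, but it gives no way to prune either same-side branch, and without pruning the recursion degenerates into a full subtree traversal.

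The paper sidesteps this by pushing the expensive work into preprocessing. It builds synchronized range trees $T_P$ (three layers, on $x,y,z$) and $T_Q$ (two layers, on $x,y$) whose first two layers share the same node structure, and for every node in the \emph{third} layer of $T_P$ it precomputes a witnessing pair $(p,q)$ with $p_x<q_x$ and $p_y<q_y$ lying in that node's box. This is done bottom-up: at a third-layer leaf there is a single $p$, and one range-existence query into $T_Q$ finds a matching $q$; at internal third-layer nodes one inherits a pair from a child. At query time the standard range-tree traversal for $R_x\times R_y\times R_z$ visits only $\Oh(\log^3 n)$ nodes; whenever it reaches a canonical third-layer node it reads off the precomputed pair instead of recursing further, and at first- and second-layer branching nodes it performs exactly your split check using extremal $x$- or $y$-coordinates propagated from the children. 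The precomputation is what allows the query to stop at canonical nodes; your scheme lacks any analogue of it, which is why your recursion cannot terminate early. (Your treatment of query~1 is correct, though the canonical decomposition of $R_x$ is unnecessary there: a single range-minimum query on $P$ for $p_x$ and a single range-maximum query on $Q$ for $q_x$ already suffice.)
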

\begin{proof}
The first query is easy.
We construct the orthogonal range minimum data structure for $P$ with priority $p_x$ and the orthogonal range maximum data structure for $Q$ with priority $q_x$.
Given the query, we compute the pair of points $(p,q)$ with the minimum $p_x$ and the maximum $q_x$, and if $p_x<p_x$ holds, we return it.

We then focus on the second query.
We construct two range trees $T_P$ and $T_Q$ for $P$ and $Q$, respectively, so that the first two layers of $T_P$ has the same structure as $T_Q$, i.e.,
$T_P$ has a node (in the first layer) with a range $[L_x,U_x)$ if and only if $T_Q$ has a node with the same range $[L_x,U_x)$,
and $T_P$ has a node (in the second layer) with a range $[L_x,U_x)\times[L_y,U_y)$ if and only if $T_Q$ has a node with the same range $[L_x,U_x)\times[L_y,U_y)$.
For each node in $T_P$ (resp. $T_Q$), we keep two points in the associated range with the minimum (resp. maximum) $x$- or $y$-coordinate.
By the standard construction of the range tree, the total number of nodes is $\Oh(n \log^2 n)$, and the construction time is $\Oh(n \log^3 n)$.
We then, for each node in the third layer of $T_P$, compute a pair of points $p\in P$ and $q\in Q$ both of which are contained in the associated range and satisfy $p_x<q_x$ and $p_y<q_y$ bottom-up as follows.
For leaf node, because there is only one point $p\in P$ in the associated range, we can find such a point $q$ in $\Oh(\log^2 n)$ time by the orthogonal range existance query $(p_x,\infty]\times(p_y,\infty]$ to $T_Q$.
For non-leaf node, we pick a pair contained in children if exists.
This finishes the construction of the data structure.

We then describe how to answer the query.
Starting from the root, we can recursively compute two points in $P$ (resp. $Q$) with the minimum (resp. maximum) $x$- or $y$-coordinate that is contained in the intersection of the given range and the current range
by the standard query processing of the range tree.
We additionally do the following to find the desired pair.

When processing both the children of a first-layer node, we additionally do the following post-process.
We can find the desired pair $(p,q)$ such that both of $p$ and $q$ are contained in the range of one of the children by recursion.
We therefore only need to find the desired pair $(p,q)$ such that one is contained in the range of one child, and the other is contained in the range of the other child.
Because $p_x<q_x$, $p$ has to be in the left child and $q$ has to be in the right child.
Let $p$ (resp. $q$) be the point with the minimum $p_y$ (resp. maximum $q_y$) obtained from the left (resp. right) child.
Because $p_x<q_x$ already holds, there exists the desired pair if and only if $p_y<q_y$ holds.

When processing both the children of a second-layer node, we similarly do the following post-process.
We want to find the desired pair $(p,q)$ such that $p$ is in the left child and $q$ is in the right child.
Let $p$ (resp. $q$) be the point with the minimum $p_x$ (resp. maximum $q_x$) obtained from the left (resp. right) child.
Because $p_y<q_y$ already holds, there exists the desired pair if and only if $p_x<q_x$ holds.

When processing a third-layer node whose associated range is contained in the given range,
we need to find the desired pair $(p,q)$ both of which are contained in the associated range.
This can be done by checking the precomputed pair.

The number of nodes processed by each query is $\Oh(\log^3 n)$, and we can process each node in constant time.
Therefore the query time is $\Oh(\log^3 n)$.
\end{proof}

\section{Lower bound for $k=9$} \label{sec:small-k-lowerbound}
The starting point for our reduction is the following problem (see Fig.~\ref{fig:instance} for an exemplary instance).

\begin{figure}
\centering      
\includegraphics[scale=1]{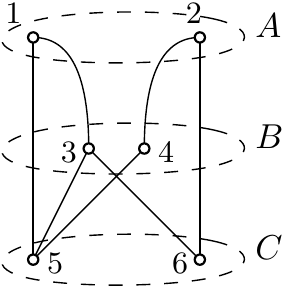}
\caption{\label{fig:instance}An instance of {\sc Triangle Detection}}
\end{figure}

\defparproblem{\probTriangle}
{An undirected graph~$H$ whose vertex set $V(H)$ is partitioned into $A \cup B \cup C$.}
{$m := |E(H)|$.}
{Is there a triple $(a,b,c) \in A \times B \times C$ such that $\{ab,ac,bc\} \subseteq E(H)$?}

We assume without loss of generality that $A$, $B$, and $C$ are three independent sets, so that finding such a triple is equivalent to finding a triangle in the graph $H$. By simple reductions that incur only a constant blow-up in the number of vertices and edges, this problem is equivalent to determining whether a graph has a triangle or not.

\begin{assumption}[Triangle hypothesis \cite{AbboudW14}]
There is a fixed $\delta > 0$ such that, in the Word RAM model with words of~$\Oh(\log n)$ bits, any algorithm requires~$m^{1+\delta - o(1)}$ time in expectation to detect whether an~$m$-edge graph contains a triangle.
\end{assumption}

It should be noted that one can solve \probTriangle in time $\Oh(n^\omega)$ where $n$ is the number of vertices and $\omega \leq 2.373$ is the best-known exponent for matrix multiplication.
Alon et al. \cite{AlonYZ97} found an elegant win-win argument to solve \probTriangle in time $\Oh(m^{\frac{2\omega}{\omega+1}})$: the 3-vertex paths in which the middle vertex has degree less than $m^{\frac{\omega-1}{\omega+1}}$ can be listed in time $\Oh(m \cdot m^{\frac{\omega-1}{\omega+1}})=\Oh(m^{\frac{2\omega}{\omega+1}})$, and for each, one can check if they form a triangle, whereas the number of vertices of degree greater than $m^{\frac{\omega-1}{\omega+1}}$ is at most $m^{\frac{2}{\omega+1}}$, so one can detect a triangle in time $\Oh(m^{\frac{2\omega}{\omega+1}})$ in the subgraph that they induce.
After more than two decades, this is still the best worst-case running time (when $n^\omega =\Omega(m^{\frac{2\omega}{\omega+1}})$).
This suggests that the triangle hypothesis is likely to hold.
Moreover, if one thinks that the above scheme yields the best possible running time and that $\omega$ will eventually reach $2$, then exponent $4/3$ could be the \emph{right answer} for \probTriangle parameterized by the number of edges. The following is implied by~\cite[Conjecture 2]{AbboudW14} (since~$\omega \geq 2$), in the regime $m = \Theta(n^{3/2})$ (so that $\Oh(n^2)$ and $\Oh(m^{4/3})$ coincide).

\begin{assumption}
In the Word RAM model with words of~$\Oh(\log n)$ bits, any algorithm requires $m^{4/3 - o(1)}$ time in expectation to detect whether an $m$-edge $\Theta(m^{2/3})$-node graph contains a triangle.
\end{assumption}


We show that \textsc{Subcubic} \probOPTDec{9} parameterized by the number of vertices is as hard as \probTriangle parameterized by the number of edges, by providing a linear-time reduction from the latter to the former.
In light of \cref{thm:at-most-7}, this implies that \textsc{Bounded-Degree} \probOPTDec{8} is the only remaining open case where a quasi-linear algorithm is not known but also not ruled out by a standard fine-grained complexity assumption.

\begin{lemma}
There is an $\Oh(m)$-time reduction from \probTriangle on $m$-edge graphs to \textsc{Subcubic} \probOPTDec{9} on $\Oh(m)$-vertex undirected graphs with edge weights in $\{1,2\}$.
\end{lemma}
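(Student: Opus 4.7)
The plan is to build, given an instance $H$ of \probTriangle with tripartition $A \cup B \cup C$, a subcubic graph $G$ on $O(m)$ vertices carrying a Hamiltonian cycle $\mathcal{C}$ and weights in $\{1,2\}$, such that $\mathcal{C}$ admits an improving $9$-move iff $H$ contains a triangle. The weighting scheme will make cycle edges be of weight $2$ wherever a shortcut exists, and off-cycle (chord) edges be of weight $1$; since $G$ is subcubic and $\mathcal{C}$ already occupies two slots at every vertex, each vertex of $G$ offers at most one chord, which is exactly the degree of freedom available to feed $E^+$ of a $9$-move.

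First I would specify the gadgets: one vertex gadget $V_v$ per $v \in A \cup B \cup C$ and one edge gadget $X_e$ per $e \in E(H)$, strung together along a single backbone that forms $\mathcal{C}$. A vertex gadget $V_v$ will be designed so that a local "$3$-swap" replacing three weight-$2$ cycle edges by three weight-$1$ chords is feasible (returns a Hamiltonian structure) precisely when it is matched by a compatible action at the edge gadgets $X_e$ incident to $v$ in $H$. The numerology is that a triangle $\{a,b,c\}$ with $a \in A$, $b \in B$, $c \in C$ and $ab, ac, bc \in E(H)$ yields a $9$-move decomposing into three interacting $3$-swaps (one per triangle vertex, glued at the three edge gadgets $X_{ab}, X_{ac}, X_{bc}$), for a total gain of $9$.

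For the forward direction I would, given a triangle $(a,b,c)$, explicitly describe the $9$ removed and $9$ added edges, check that their union is a closed-walk-cover alternating between $E^-$ and $E^+$ that glues $\mathcal{C} \setminus E^-$ into a single Hamiltonian cycle, and compute the gain. For the reverse direction I would observe that any improving $9$-move must pick at least one chord in $E^+$, and argue structurally that the subcubic layout forces chords to come in "triples" anchored at one vertex gadget; then that each triple selected in $V_v$ must be Hamiltonian-matched by chord triples in some $V_u, V_w$ with $u, w$ adjacent to $v$ in $H$ via the corresponding edge gadgets, so exactly three vertex gadgets $V_a, V_b, V_c$ must be active and the three associated edge gadgets pairwise linking them must exist in $H$, i.e.\ $abc$ is a triangle. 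The bounds $|V(G)|, |E(G)| = O(m)$ and the $O(m)$ construction time follow by inspection.

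The main obstacle will be the reverse direction: ruling out spurious Hamiltonian-feasible $9$-moves. Concretely, one must (i) preclude $9$-moves that activate fewer or more than three vertex gadgets, (ii) preclude two activated vertex gadgets lying on the same side of the tripartition (or a single gadget activated "twice"), and (iii) preclude connection patterns that look Hamiltonian combinatorially but fail because of how chords are wired through the edge gadgets. This requires tuning the gadget topology so that the only feasible, Hamiltonian-preserving pattern using $9$ chords is the triangle pattern described above. Since the target problem is \probOPTDec{9} specifically, we do not need to suppress improving $k$-moves for $k \neq 9$; only the "if and only if" for $k=9$ has to be ensured, which considerably eases the gadget design.
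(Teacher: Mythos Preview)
Your proposal is a plan, not a proof: no concrete gadget is given, and you yourself flag the reverse direction as ``the main obstacle'' without resolving it. Beyond this, two specific design choices would make the argument fail as stated.

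First, your weighting scheme is too generous. If every cycle edge with an incident chord has weight~$2$ and every chord has weight~$1$, then essentially every sequential $3$-swap in your graph is improving on its own, and the reverse direction collapses: you would have to argue that no improving $3$-swap (or $6$-swap) is a \emph{move}, i.e.\ reconnects into one cycle, unless it is part of a triangle pattern. That is exactly the hard combinatorial content you have not supplied. The paper avoids this by making only one of the three $3$-swap families improving (gain~$1$) and the other two \emph{neutral} (gain~$0$); an improving $9$-move must then contain at least one swap $S(a,b)$ from the improving family, and the neutral swaps are forced purely by the need to reconnect the three cycles $S(a,b)$ creates. This asymmetry is what drives the reverse direction.

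Second, your claim that ``we do not need to suppress improving $k$-moves for $k\neq 9$'' is false for $k<9$: any improving $k'$-move with $k'<9$ can be padded to an improving $9$-move by adding $9-k'$ edges to both $E^-$ and $E^+$. The paper's reverse argument accordingly handles all $k\le 9$.

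Finally, your ``one $3$-swap per triangle \emph{vertex}'' architecture differs from the paper's ``one $3$-swap per triangle \emph{edge}'' (the swaps are $S(a,b)$, $S(b,c)$, $S(c,a)$, each encoding an arc of $\vec H$). The edge-based encoding, together with nested chords and degree-$2$ spacer vertices that isolate swaps from one another, is what lets the paper enumerate \emph{all} possible swaps as a single structured family~$\mathcal S$ and then argue that any improving sequence from~$\mathcal S$ reconnecting the cycle must trace out a triangle.
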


\begin{proof}
  From a tripartitioned instance of \probTriangle $H=(A \cup B \cup C, E(H))$ with $m$ edges, we build a subcubic graph $G$ with $\Theta(m)$ vertices, an edge-weight function $w: E(G) \rightarrow \{1,2\}$, and a Hamiltonian cycle $\mathcal C$.
  From $\mathcal C$, there is a swap of up to 9 edges (i.e., up to 9 deletions and the same number of additions) which results in a lighter Hamiltonian cycle if and only if $H$ has a triangle.

  \paragraph*{Overall construction of $G$.}
  We will build $G$ by adding chords to the cycle $\mathcal C$.
  Henceforth, a \emph{chord} is an edge of $G$ which is not in $\mathcal C$.
  It is helpful to think of $\mathcal C$ as a (subdivided) triangle whose three sides correspond to $A$, $B$, and $C$, which we call the $A$-side (left), $B$-side (right), and $C$-side (bottom), respectively.
  We will only name the edges of $G$ (and not the vertices), since the problem is more efficiently described in terms of edges.
  We will define some sequential 3-swaps (we recall that a sequential $i$-swap is a closed walk of length $2i$ alternating edges of $E(\mathcal C)$ and edges of $E(G) \setminus E(\mathcal C)$). 
  Eventually, all the edges that are not in a described sequential 3-swap are incident to a vertex of degree 2, making them undeletable.
  (One can also enforce that by subdividing every irrelevant edge once.)
  
  The improving 9-move, should there be a triangle $abc$ in $H$, will consist of a sequence of three 3-swaps.
  More precisely, it consists of one improving 3-swap, which splits $\mathcal C$ into three cycles respectively containing:
  \begin{enumerate}[(1)]
  \item a part of the vertex gadget of some $a \in A$,
  \item the part of the $B$-side below the vertex gadget of $b$, as well as the $C$-side, and
  \item the part of the $B$-side above the vertex gadget of some $b \in N_H(a) \cap B$.
  \end{enumerate}
  This decreases the total weight by 1.
  Then a neutral 3-swap reconnects (1) and (2) together, but also detaches (4) a part of the vertex gadget of some $c \in N_H(a) \cap C$.
  Finally a neutral 3-swap glues (3), (1)$+$(2), and (4) together, provided $bc \in E(H)$.
  This results in a new Hamiltonian cycle of length $w(\mathcal C)-1$.

  There will be relatively few edges of weight 2.
  To simplify the presentation, every edge is of weight 1, unless specified otherwise.
  Let $\vec{H}$ be the directed graph obtained from $H$ by orienting its edges from $A$ to $B$, from $B$ to $C$, and from $C$ to $A$.
  Note that finding a directed triangle in $\vec{H}$ is equivalent to finding a triangle in $H$.   

  \paragraph*{Vertex scopes, extended scopes, and nested chords.}
  For $(X, Y) \in \{(A, B), (B, C), (C, A)\}$, we set $Z := \{A, B, C\} \setminus \{X, Y\}$ and we do the following as a preparatory step to encode the arcs of $\vec{H}$.
  Each vertex $v \in X$ is given a (pairwise vertex-disjoint) subpath $I_v$ of $\mathcal C$, called the \emph{extended scope} of $v$, with $|I_v| := 6(|N_H(v) \cap Y|) + 3(|N_H(v) \cap Z|) - 1$ vertices.
  We think of $I_v$ as being displayed from left to right with the leftmost vertex of index $1$, and the rightmost one of index $|I_v|$. 
  The extended scopes of the vertices of $A$, $B$, and $C$ occupy respectively the $A$-side, $B$-side, and $C$-side.
  In what follows, it will be more convenient to have a \emph{circular} notion of \emph{left} and \emph{right}.
  Starting from the bottom corner of the $A$-side, and going clockwise to the top corner of the $A$-side, then down to the bottom corner of the $B$-side, the relative left and right within the $A$-side and the $B$-side coincide with the usual notion as displayed in Figure~\ref{fig:overallTriangle}.
  But then closing the loop from the right corner of the $C$-side to its left corner, left and right are switched: the closer to the bottom corner of $A$ (resp.~$B$), the more ``right'' (resp.~``left''). 
  
  Each vertex $v \in X$ has $|N_H(v) \cap Y|$ nested chords spaced out every three vertices.
  More precisely, the second vertex of $I_v$ is adjacent to the penultimate, the fifth to the one of index $|I_v|-4$, the eighth to the one of index $|I_v|-7$, and so on, until $|N_H(v) \cap Y|$ chords are drawn.
  Each of these chords is associated to an edge $vy \in E(\{v\},Y)$, and is denoted by $\underline{v}y$.
  A vertex just to the right of the left endpoint, or just to the left of the right endpoint, of such a chord will remain of degree 2 in $G$.
  This is the case of the vertices of index $3, 6, \ldots$ and $|I_v| - 2, |I_v|-5, \ldots$ in $I_v$.
  We call $l^-(v,y)$ (resp.~$r^-(v,y)$) the edge of $I_v$ incident to both the left endpoint of $\underline{v}y$ and the vertex just to its left (resp.~right endpoint of $\underline{v}y$ and the vertex just to its right).
  Both endpoints of $l^-(v,y)$ and of $r^-(v,y)$ will eventually have degree 3 in $G$.
  
  The chord linking the most distant vertices in $I_v$ is called the \emph{outermost} chord, while the one linking the closest pair is called the \emph{innermost} chord.
  We also say that a chord $e$ is \emph{wider} than a chord $e'$ if $e$ links a farther pair on $I_v$ than $e'$ does. 
  The central path $J_v \subset I_v$ on $|I_v| - (6|N_H(v) \cap Y| - 4) = 3(|N_H(v) \cap Z|+1)$ vertices,
  surrounded by the innermost chord, is called the \emph{scope} of $v$.
  We map in one-to-one correspondence the edges of $E(\{v\},Z)$ to every three edges of $J_v$ starting from the third edge (that is, the third, sixth, and so on).
  Note that we have the exact space to do so, since $|J_v|=3(|N_H(v) \cap Z|+1)$.
  We denote by $z\underline{v}$ the edge in $J_v$ corresponding to the edge $vz \in E(\{v\},Z)$.

  \paragraph*{Encoding the arcs of $\vec{H}$.}
  The last step to encode the arcs of $\vec{H}$, or equivalently the edges of $H$, is the following.
  Keeping the notations of the previous paragraphs, for every edge $xy \in E(X,Y)$, we add two chords (of weight 1): one chord $l^+(x,y)$ between the left endpoint of $l^-(x,y)$ and the right endpoint of $x\underline{y}$ and one chord $r^+(x,y)$ between the right endpoint of $r^-(x,y)$ and the left endpoint of $x\underline{y}$.
  We finish the construction of $G$ (and $\mathcal C$) by subdividing each edge between consecutive extended scopes once, to make the resulting edges undeletable.
  The edges $l^-(a,b)$ for $(a,b) \in A \times B$ get weight 2, while all the other edges of $E(G)$ get weight 1.
  This finishes the construction of $(G,w,\mathcal C)$.
  See Figure~\ref{fig:overallTriangle} for an illustration.

  \paragraph*{Improving and neutral 3-swaps.}
  For each $(x,y) \in E(\vec{H})$, we denote by $S(x,y)$ the 3-swap $(\{x\underline{y}, l^-(x,y), r^-(x,y)\},$ $\{\underline{x}y, l^+(x,y),$ $ r^+(x,y)\})$.
  For $(X,Y) \in \{(A,B), (B,C),$ $(C,A)\}$, we define the set of 3-swaps $S(X,Y) := \underset{xy \in E(X,Y)}{\bigcup} S(x,y)$, and $\mathcal S := S(A,B) \cup S(B,C) \cup S(C,A)$.

  Note that all the 3-swaps of $S(A,B)$ are improving.
  They gain 1 since $l^-(a,b)$ has weight 2 for any $(a,b) \in A \times B$.
  On the other hand, all the 3-swaps of $S(B,C)$ and $S(C,A)$ are neutral.
  The edges added in swaps of $\mathcal S$ partition the chords of $G$, and the open neighborhood of the six vertices involved in every swap are six vertices of degree 2 in $G$.
  Therefore, all the possible swaps are in the set $\mathcal S$, they are on vertex-disjoint sets of vertices, and any move is a sequence of $3$-swaps of $\mathcal S$.

  The vertices of $\mathcal C$ are incident to at most one chord.
  Hence the graph $G$ is subcubic.
  It has $\sum_{v \in V(H)}1+|I_v| \leqslant 9|E(H)| + |V(H)| = \Theta(m)$ vertices and $(G,w,\mathcal C)$ takes $\Theta(m)$-time to build.
  To summarize, we defined a linear reduction from \textsc{Triangle Detection} with parameter~$m$ to \textsc{Subcubic} \probOPTDec{9} with parameter~$n$.
  So a quasi-linear algorithm for the latter would yield an unlikely quasi-linear algorithm for the former.
  We now check that the reduction is correct.

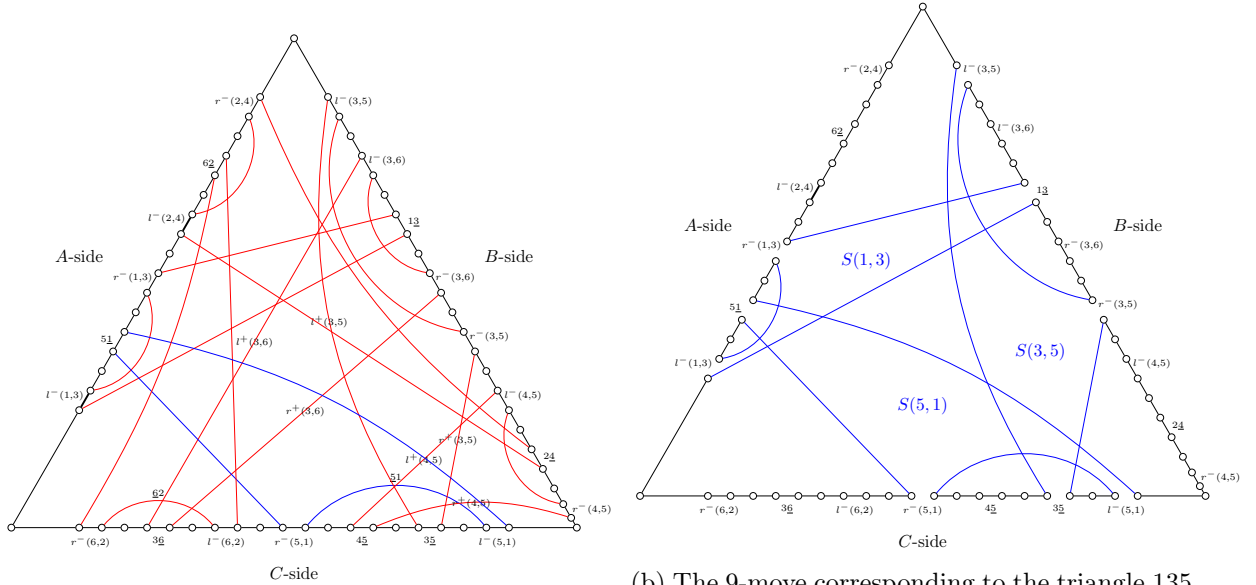
\begin{figure}
\centering      
\begin{minipage}{0.45\textwidth}
\centering
\resizebox{235pt}{!}{
\begin{tikzpicture}
\foreach \l/\x in {CB/0, rr51/1.5,rl51/2,d53/2.5,r53/3,l53/3.5,d54/4,r54/4.5,l54/5,e54/5.5,lr51/6,ll51/6.5, s56/7, rr62/7.5,rl62/8,d63/8.5,r63/9,l63/9.5,e63/10,lr62/10.5,ll62/11, AC/12.5}{
  \node[vp] (\l) at (-\x,0) {} ;
}
\draw (CB) -- (rr51) -- (rl51) -- (d53) -- (r53) -- (l53) -- (d54) -- (r54) -- (l54) -- (e54) -- (lr51) -- (ll51) -- (s56) -- (rr62) -- (rl62) -- (d63) -- (r63) -- (l63) -- (e63) -- (lr62) -- (ll62) -- (AC) ;
\foreach \t/\x in {l^-(5\text{,}1)/1.75,3\underline{5}/3.25,4\underline{5}/4.75,r^-(5\text{,}1)/6.25, l^-(6\text{,}2)/7.75, 3\underline{6}/9.25, r^-(6\text{,}2)/10.75}{
   \node at (-\x,-0.3) {\tiny{$\t$}} ;
}
\foreach \i/\j/\c in {rl51/lr51/\colob,rl62/lr62/\colo}{
  \path[color=\c,-] (\i) edge[bend left=-\bendedge] (\j) ;
}
\node at (-9.25,0.75) {\tiny{\underline{6}2}} ;
\node at (-4,1.1) {\tiny{\underline{5}1}} ;

\begin{scope}[rotate=120]
\foreach \l/\x in {rr45/0.25,rl45/0.6,d42/1,r42/1.5,l42/2,e42/2.5,lr45/3,ll45/3.5,s34/4,rr35/4.5,rl35/5,d356/5.5,rr36/6,rl36/6.5, d31/7, r31/7.5,l31/8,e31/8.5,lr36/9,ll36/9.5,d365/10,lr35/10.5,ll35/11, AB/12.5}{
  \node[vp] (\l) at (\x,0) {} ;
}
\draw (CB) -- (rr45) -- (rl45) -- (d42) -- (r42) -- (l42) -- (e42) -- (lr45) -- (ll45) -- (s34) -- (rr35) -- (rl35) -- (d356) -- (rr36) -- (rl36) -- (d31) -- (r31) -- (l31) -- (e31) -- (lr36) -- (ll36) -- (d365) -- (lr35) -- (ll35) -- (AB) ;
\foreach \t/\x in {l^-(4\text{,}5)/3.25,r^-(3\text{,}5)/4.75,r^-(3\text{,}6)/6.25, l^-(3\text{,}6)/9.25, l^-(3\text{,}5)/10.75}{
   \node at (\x - 0.1,-0.45) {\tiny{$\t$}} ;
}
\node at (0.2,-0.5) {\tiny{$r^-(4\text{,}5)$}} ;       
\foreach \t/\x in {2\underline{4}/1.75, 1\underline{3}/7.75}{
   \node at (\x - 0.1,-0.25) {\tiny{$\t$}} ;
}
\foreach \i/\j in {rl35/lr35,rl36/lr36,rl45/lr45}{
  \path[color=\colo,-] (\i) edge[bend left=\bendedge] (\j) ;
}
\end{scope}

\begin{scope}[xshift=-12.5cm,rotate=60]
\foreach \l/\x in {ll13/3,lr13/3.5,d15/4,l15/4.5,r15/5,e15/5.5,rl13/6,rr13/6.5, s12/7, ll24/7.5,lr24/8,d26/8.5,l26/9,r26/9.5,e26/10,rl24/10.5,rr24/11}{
  \node[vp] (\l) at (\x,0) {} ;
}
\draw (AC) -- (ll13) ;
\draw[very thick] (ll13) -- (lr13) ;
\draw (lr13) -- (d15) -- (l15) -- (r15) -- (e15) -- (rl13) -- (rr13) -- (s12) -- (ll24) ;
\draw[very thick] (ll24) -- (lr24) ;
\draw (lr24) -- (d26) -- (l26) -- (r26) -- (e26) -- (rl24) -- (rr24) -- (AB) ;
\foreach \t/\x in {l^-(1\text{,}3)/3.25,r^-(1\text{,}3)/6.25, l^-(2\text{,}4)/7.75, r^-(2\text{,}4)/10.75}{
   \node at (\x - 0.1,0.45) {\tiny{$\t$}} ;
}
\foreach \t/\x in {5\underline{1}/4.75, 6\underline{2}/9.25}{
   \node at (\x - 0.1,0.25) {\tiny{$\t$}} ;
}
\foreach \i/\j in {lr13/rl13,lr24/rl24}{
  \path[color=\colo,-] (\i) edge[bend left=-\bendedge] (\j) ;
}

\foreach \x/\y/\l in {rr35/r53/r^+(3\text{,}5), ll36/l63/l^+(3\text{,}6),rr36/r63/r^+(3\text{,}6), ll45/l54/l^+(4\text{,}5)}{        
  \draw[color=\colo] (\x) edge node[black] {\tiny{$\l$}} (\y) ;
}
\foreach \x/\y/\l in {ll35/l53/l^+(3\text{,}5), rr45/r54/r^+(4\text{,}5)}{        
  \draw[color=\colo] (\x) edge[bend right=20] node[black] {\tiny{$\l$}} (\y) ;
}
\foreach \x/\y/\c in {ll13/r31/\colo,rr13/l31/\colo, ll24/r42/\colo, ll51/l15/\colob, rr62/r26/\colo}{        
  \draw[color=\c] (\x) -- (\y) ;
}
\foreach \x/\y/\c in {rr24/l42/\colo, rr51/r15/\colob}{        
  \draw[color=\c] (\x) edge[bend right=15] (\y) ;
}
\foreach \x/\y in {ll62/l26}{        
  \draw[color=\colo] (\x) edge[bend right=10] (\y) ;
}
\end{scope}

\node at (-6.25,-1) {$C$-side} ;
\node at (-1.5,6) {$B$-side} ;
\node at (-11,6) {$A$-side} ;
\end{tikzpicture}
}
\subcaption{The construction for the instance of Figure~\ref{fig:instance}. Edges of $\mathcal C$ are in black, chords are in red, bold edges are the ones with weight 2. The three chords in blue are the edges to add to perform the neutral 3-swap $S(5,1)$ of $S(C,A)$.
}
\label{fig:overallTriangle}
\end{minipage}
\qquad
\begin{minipage}{0.45\textwidth}
\centering
\resizebox{235pt}{!}{
\begin{tikzpicture}
\foreach \l/\x in {CB/0, rr51/1.5,rl51/2,d53/2.5,r53/3,l53/3.5,d54/4,r54/4.5,l54/5,e54/5.5,lr51/6,ll51/6.5, s56/7, rr62/7.5,rl62/8,d63/8.5,r63/9,l63/9.5,e63/10,lr62/10.5,ll62/11, AC/12.5}{
  \node[vp] (\l) at (-\x,0) {} ;
}
\draw (CB) -- (rr51) ;
\draw (rl51) -- (d53) -- (r53) ;
\draw (l53) -- (d54) -- (r54) -- (l54) -- (e54) -- (lr51) ;
\draw (ll51) -- (s56) -- (rr62) -- (rl62) -- (d63) -- (r63) -- (l63) -- (e63) -- (lr62) -- (ll62) -- (AC) ;
\foreach \t/\x in {l^-(5\text{,}1)/1.75,3\underline{5}/3.25,4\underline{5}/4.75,r^-(5\text{,}1)/6.25, l^-(6\text{,}2)/7.75, 3\underline{6}/9.25, r^-(6\text{,}2)/10.75}{
   \node at (-\x,-0.3) {\tiny{$\t$}} ;
}
\foreach \i/\j/\c in {rl51/lr51/\colob}{
  \path[color=\c,-] (\i) edge[bend left=-\bendedge] (\j) ;
}

\begin{scope}[rotate=120]
\foreach \l/\x in {rr45/0.25,rl45/0.6,d42/1,r42/1.5,l42/2,e42/2.5,lr45/3,ll45/3.5,s34/4,rr35/4.5,rl35/5,d356/5.5,rr36/6,rl36/6.5, d31/7, r31/7.5,l31/8,e31/8.5,lr36/9,ll36/9.5,d365/10,lr35/10.5,ll35/11, AB/12.5}{
  \node[vp] (\l) at (\x,0) {} ;
}
\draw (CB) -- (rr45) -- (rl45) -- (d42) -- (r42) -- (l42) -- (e42) -- (lr45) -- (ll45) -- (s34) -- (rr35) ;
\draw (rl35) -- (d356) -- (rr36) -- (rl36) -- (d31) -- (r31) ;
\draw (l31) -- (e31) -- (lr36) -- (ll36) -- (d365) -- (lr35) ;
\draw (ll35) -- (AB) ;
\foreach \t/\x in {l^-(4\text{,}5)/3.25,r^-(3\text{,}5)/4.75,r^-(3\text{,}6)/6.25, l^-(3\text{,}6)/9.25, l^-(3\text{,}5)/10.75}{
   \node at (\x - 0.1,-0.45) {\tiny{$\t$}} ;
}
\node at (0.2,-0.5) {\tiny{$r^-(4\text{,}5)$}} ;       
\foreach \t/\x in {2\underline{4}/1.75, 1\underline{3}/7.75}{
   \node at (\x - 0.1,-0.25) {\tiny{$\t$}} ;
}
\foreach \i/\j in {rl35/lr35}{
  \path[color=\colob,-] (\i) edge[bend left=\bendedge] (\j) ;
}
\end{scope}

\begin{scope}[xshift=-12.5cm,rotate=60]
\foreach \l/\x in {ll13/3,lr13/3.5,d15/4,l15/4.5,r15/5,e15/5.5,rl13/6,rr13/6.5, s12/7, ll24/7.5,lr24/8,d26/8.5,l26/9,r26/9.5,e26/10,rl24/10.5,rr24/11}{
  \node[vp] (\l) at (\x,0) {} ;
}
\draw (AC) -- (ll13) ;
\draw (lr13) -- (d15) -- (l15) ;
\draw (r15) -- (e15) -- (rl13) ;
\draw (rr13) -- (s12) -- (ll24) ;
\draw[very thick] (ll24) -- (lr24) ;
\draw (lr24) -- (d26) -- (l26) -- (r26) -- (e26) -- (rl24) -- (rr24) -- (AB) ;
\foreach \t/\x in {l^-(1\text{,}3)/3.25,r^-(1\text{,}3)/6.25, l^-(2\text{,}4)/7.75, r^-(2\text{,}4)/10.75}{
   \node at (\x - 0.1,0.45) {\tiny{$\t$}} ;
}
\foreach \t/\x in {5\underline{1}/4.75, 6\underline{2}/9.25}{
   \node at (\x - 0.1,0.25) {\tiny{$\t$}} ;
}
\foreach \i/\j in {lr13/rl13}{
  \path[color=\colob,-] (\i) edge[bend left=-\bendedge] (\j) ;
}

\foreach \x/\y in {rr35/r53}{        
  \draw[color=\colob] (\x) edge (\y) ;
}
\foreach \x/\y in {ll35/l53}{        
  \draw[color=\colob] (\x) edge[bend right=20] (\y) ;
}
\foreach \x/\y/\c in {ll13/r31/\colob,rr13/l31/\colob}{        
  \draw[color=\c] (\x) -- (\y) ;
}
\foreach \x/\y/\c in {ll51/l15/\colob}{        
  \draw[color=\c] (\x) -- (\y) ;
}
\foreach \x/\y/\c in {rr51/r15/\colob}{        
  \draw[color=\c] (\x) edge[bend right=15] (\y) ;
}
\end{scope}

\node at (-6.25,-1) {$C$-side} ;
\node at (-1.5,6) {$B$-side} ;
\node at (-11,6) {$A$-side} ;

\node[blue] at (-7.5,5.2) {$S(1,3)$} ;
\node[blue] at (-6.25,2) {$S(5,1)$} ;
\node[blue] at (-3.65,3.2) {$S(3,5)$} ;

\end{tikzpicture}
}
\subcaption{The 9-move corresponding to the triangle $135$ results in a Hamiltonian cycle using one less edge of weight 2. Note that after the swaps $S(1,3)$ and $S(5,1)$ are performed, the only 3-swap that can reconnect the three cycles into one, is $S(3,5)$, implying the existence of the edge $35$, and thereby of the triangle $135$.}
\label{fig:foundTriangle}
\end{minipage}
\caption{Illustration of the reduction (left) and of a potential solution (right).}
\label{fig:9opt}
\end{figure}

\paragraph*{A triangle in $H$ implies an improving 9-move for $(G,w,\mathcal C)$.}
Let $abc$ be a triangle in $H$.
In particular, all three swaps $S(a,b)$, $S(b,c)$, and $S(c,a)$ exist.
Performing these three 3-swaps results in a spanning union of (vertex-disjoint) cycles, whose total weight is $w(\mathcal C)-1$.
Indeed $S(a,b)$ is swap of weight $-1$, while $S(b,c)$, and $S(c,a)$ are both neutral.

We thus only need to show that the three swaps result in a connected graph (hence, Hamiltonian cycle of lighter weight).
By performing the 3-swap $S(a,b)$, we create three components: (1) one on a vertex set $K_{a,b}$ such that $J_a \subseteq K_{a,b} \subseteq I_a$, (2) one containing the scopes of vertices of the $B$-side to the right (lower part) of the scope of $b$, and (3) one containing the scopes of vertices of the $B$-side to the left (upper part) of the scope of $b$.
Then the swap $S(c,a)$ glues (1) and (2) together, but also disconnects (4) a cycle on a vertex set $K_{c,a}$ such that $J_c \subseteq K_{c,a} \subseteq I_c$.
At this point, there are three cycles: (3), (1)$+$(2), and (4).
It turns out that the 3-swap $S(b,c)$ deletes exactly one edge in each of these three cycles: $b\underline{c}$ in (4), $l^-(b,c)$ in (3), and $r^-(b,c)$ in (1)$+$(2).
Therefore, $S(b,c)$ reconnects these three components into one Hamiltonian cycle.

\paragraph*{An improving k-move for $(G,w,\mathcal C)$ with $k \leqslant 9$ implies a triangle in $H$.}
We assume that there is an improving $k$-move $\mathcal M = (E^-,E^+)$ for $(G,w,\mathcal C)$ with $k \leqslant 9$.
Being improving, the $k$-move has to contain at least one improving 3-swap of $S(A,B)$.
Let $S(a,b)$ be a 3-swap of $S(A,B)$ in $\mathcal M$ such that for every other (improving) 3-swap $S(a,b')$ in $\mathcal M$, the chord $\underline{a}b'$ is wider than $\underline{a}b$.   
Since $S(a,b)$ exists, it holds in particular that $ab \in E(H)$.
Performing $S(a,b)$ results in the union of three cycles: (1) on a vertex set $K_{a,b}$ with $J_a \subseteq K_{a,b} \subseteq I_a$, and cycles (2) and (3) as described in the previous paragraph.

By the choice of $b$, the only remaining swaps of $\mathcal M$ touching $K_{a,b}$ are in $S(C,A)$.
So $\mathcal M$ has to contain a neutral 3-swap $S(c,a)$ for some $c \in C$.
This implies that $ac \in E(H)$.
Performing this swap results in three cycles: (3), (1)$+$(2), and (4), as described above.
%
%
To reconnect all three components into one Hamiltonian cycle, the 3-swap has to delete exactly one edge in (3), (1)$+$(2), and (4).
The only 3-swap that does so is $S(b,c)$.
This finally implies that $bc \in E(H)$.
Thus $abc$ is a triangle in $H$.
\end{proof}

We obtain the following theorem as a direct consequence of the previous lemma.
\begin{theorem}\label{thm:9opt}
  \textsc{Subcubic} \probOPTDec{9} requires time:
  \begin{enumerate}[(1)]
  \item $n^{1+\delta - o(1)}$ for a fixed $\delta > 0$, under the triangle hypothesis, and 
  \item $n^{4/3 - o(1)}$, under the strong triangle hypothesis,
  \end{enumerate}
  in expectation, even in undirected graphs with edge weights in $\{1,2\}$.
\end{theorem}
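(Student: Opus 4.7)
The plan is to derive \cref{thm:9opt} directly from the linear-time reduction established in the preceding lemma, contrapositive in form: a too-fast algorithm for \textsc{Subcubic} \probOPTDec{9} would be pulled back through the reduction to contradict (respectively) the triangle hypothesis and the strong triangle hypothesis.

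First I would fix the setup. Suppose toward contradiction that \textsc{Subcubic} \probOPTDec{9} on $n$-vertex instances with weights in $\{1,2\}$ can be solved in expected time $T(n)$. Given any tripartitioned instance $H$ of \probTriangle with $m$ edges, apply the reduction from the lemma to obtain, in time $\Oh(m)$, an instance $(G,w,\mathcal{C})$ with $G$ subcubic, $w:E(G)\to\{1,2\}$, and $|V(G)|=\Theta(m)$, such that $(G,w,\mathcal{C})$ admits an improving $k$-move with $k\le 9$ iff $H$ contains a triangle. Running the hypothetical algorithm on this instance solves \probTriangle in expected time $\Oh(m)+T(\Theta(m))$.

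For part (1), under the triangle hypothesis, any algorithm for \probTriangle takes expected time $m^{1+\delta-o(1)}$ for some fixed $\delta>0$. Hence $T(\Theta(m))=\Omega(m^{1+\delta-o(1)})$, which, after renaming the parameter $n=\Theta(m)$, forces $T(n)=n^{1+\delta-o(1)}$ for the same $\delta$. For part (2), under the strong triangle hypothesis, the hard instances have $m$ edges and $\Theta(m^{2/3})$ vertices and require expected time $m^{4/3-o(1)}$; since the reduction produces a subcubic graph on $\Theta(m)$ vertices regardless of the vertex count of~$H$, the same bound transfers as $T(n)=n^{4/3-o(1)}$ with $n=\Theta(m)$.

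There is essentially no technical obstacle: the reduction is linear in the parameter, preserves the parameter up to a constant factor, and is a many-one reduction (not a Turing reduction with multiple calls), so the parameter loss is only a constant inside the $o(1)$ and $\Theta(\cdot)$ asymptotics. The only care needed is to verify that the weight restriction $\{1,2\}$ and the subcubic restriction are both inherited from the reduction's output (which they are by construction) so that the lower bound applies to exactly the class of instances claimed in the statement. With this, the theorem follows directly from the lemma.
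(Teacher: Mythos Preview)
Your proposal is correct and matches the paper's approach exactly: the paper states the theorem as ``a direct consequence of the previous lemma'' without further argument, and your write-up spells out precisely that consequence via the standard contrapositive transfer through the linear-time reduction. If anything, you provide more detail than the paper does.
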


If we use general integral weights and not just $\{1,2\}$, we can show a stronger lower bound, by reducing from \textsc{Negative Edge-Weighted Triangle}.
Again, we can assume that the instance is partitioned into three sets $A$, $B$, $C$, and we look for a triangle $abc$ such that $w'(ab)+w'(bc)+w'(ac) < 0$, where $w'$ gives an integral weight to each edge. 
A truly subcubic (in the number of vertices) algorithm for this problem would imply one for \textsc{All-Pairs Shortest Paths}, which would be considered a major breakthrough.
The assumption that such an algorithm is not possible is called the APSP hypothesis.

We only change the above construction in the weight of the edges $l^-(x,y)$.
Now each edge $l^-(x,y)$ gets weight $-w'(xy)$.
From a \textsc{Negative Edge-Weighted Triangle}-instance with $n$ vertices, we obtain an equivalent instance of \textsc{Subcubic} \probOPTDec{9} with $\Oh(n^2)$ vertices, in time $\Oh(n^2)$.
So we derive the following.
\begin{theorem}\label{thm:9opt}
  \textsc{Subcubic} \probOPTDec{9} requires time $n^{3/2 - o(1)}$, under the APSP hypothesis.
\end{theorem}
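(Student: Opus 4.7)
The plan is to re-use verbatim the graph $G$ and Hamiltonian cycle $\mathcal{C}$ from the construction of the previous lemma, and only modify the weight function so that the gain of each 3-swap $S(x,y)$ faithfully encodes the \textsc{Negative Edge-Weighted Triangle} weight $w'(xy)$ of the corresponding edge of $H$. Concretely, given a tripartite instance $(H,w')$ on $n$ vertices and $m=\Oh(n^2)$ edges, apply the previous construction to obtain $(G,\mathcal{C})$ with $N=\Theta(m)=\Theta(n^2)$ vertices, and define $w$ so that every edge has weight $1$ except each chord $l^-(x,y)$ (for $xy\in E(H)$), which is given weight $-w'(xy)$. The whole construction runs in time $\Oh(n^2)$.

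The next step is to verify correctness. All the structural ingredients of the previous lemma are purely combinatorial and do not depend on the specific values of $w$: the only 3-swaps in $G$ are still those of $\mathcal{S}=\bigcup_{xy\in E(H)}S(x,y)$, they still partition the chords of $G$ into vertex-disjoint triples, and the cycle-reconnection analysis still forces any $k$-move with $k\le 9$ to decompose as $S(a,b)+S(b,c)+S(c,a)$ for some triangle $abc$ of $H$. Under the new weights, each individual 3-swap $S(x,y)$ has gain $(1+(-w'(xy))+1)-3 \;=\; -1-w'(xy)$, so the gain of a valid 9-move assembled from the triangle $abc$ is $-3-(w'(ab)+w'(bc)+w'(ca))$. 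It is therefore positive exactly when $w'(ab)+w'(bc)+w'(ca)<-3$. Shifting every edge weight of $H$ by $+1$ before running the reduction (an equivalent instance of \textsc{Negative Edge-Weighted Triangle}, up to renormalising the threshold) makes this condition coincide with ``the triangle has negative total weight''. Thus $(G,w,\mathcal{C})$ admits an improving $k$-move with $k\le 9$ if and only if $H$ has a negative triangle.

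Finally, I would transfer the hardness. Under the APSP hypothesis, \textsc{Negative Edge-Weighted Triangle} on $n$-vertex graphs does not admit an $\Oh(n^{3-\epsilon'})$-time algorithm for any $\epsilon'>0$. If \textsc{Subcubic} \probOPTDec{9} could be solved in $\Oh(N^{3/2-\epsilon})$ time for some $\epsilon>0$, then composing with the above linear-size reduction would yield a \textsc{Negative Edge-Weighted Triangle} algorithm running in time $\Oh((n^2)^{3/2-\epsilon})=\Oh(n^{3-2\epsilon})$, contradicting the APSP hypothesis.

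The main obstacle is checking that the structural classification of feasible $k$-moves from the previous lemma still goes through once the weights are arbitrary integers. The original proof started from the observation ``being improving, the $k$-move contains at least one improving 3-swap of $S(A,B)$'', which exploited the fact that all $S(A,B)$-swaps had gain $+1$ and the rest were neutral. With general weights this specific starting point is gone, but it can be replaced by the weight-independent fact that every $k$-move with $k\le 9$ is a disjoint union of at most three 3-swaps from $\mathcal{S}$, and reconnecting $\mathcal{C}$ into a single Hamiltonian cycle forces exactly three such 3-swaps, forming a triangle in $H$. Once this weight-free structural step is in place, the gain calculation above finishes the correspondence, and the hardness transfer is immediate.
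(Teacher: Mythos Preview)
Your approach matches the paper's: reuse $G$ and $\mathcal{C}$ from the triangle-detection lemma and only change the weight of each $l^-(x,y)$ to $-w'(xy)$; the paper's proof is literally a one-sentence remark doing exactly this. Two small slips to fix. First, $l^-(x,y)$ is an edge of $\mathcal{C}$ (it lies on the subpath $I_x$), not a chord. Second, your normalising shift has the wrong sign: with $w'' = w'-1$ one gets ``improving $\Leftrightarrow \sum w'' < -3 \Leftrightarrow \sum w' < 0$'', whereas $+1$ moves the threshold to $-6$. A cleaner choice is to set the weight of $l^-(x,y)$ to $1 - w'(xy)$, so that each 3-swap $S(x,y)$ has gain exactly $-w'(xy)$ and no shift is needed.

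You are also right to flag that the paper's backward-direction argument opens with ``being improving, the move contains a swap of $S(A,B)$'', which depends on the original weight pattern and does not carry over verbatim. Your proposed remedy is the correct one: the construction of $G$ is cyclically symmetric in $(A,B,C)$, and the remainder of the reconnection analysis is purely combinatorial, so one may start from any 3-swap in the move (choosing it innermost at its source vertex) and run the same argument. The paper glosses over this point entirely.
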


\ignore{
\begin{theorem}
  \textsc{Subcubic} \probOPTDec{7} can be solved in quasi-linear time.
\end{theorem}

\begin{proof}[sketch]
  Finding a move consisting of at most two sequential swaps of constant size in a bounded-degree graph can be done in $\Oh(n \polylog n)$.
  In particular, we assume that there is no improving sequential move to start with.
  We therefore focus on 7-moves made of three sequential swaps.
  This can only come from one 3-swap and two 2-swaps.
  A first observation is that in a bounded-degree n-vertex graph the total number of sequential swaps is in $\Oh(n)$.
  Furthermore, in subcubic graphs two distinct 2-swaps are vertex-disjoint.
  
  \begin{claim}
    In quasi-linear time, one can preprocess the input to answer in logarithmic time queries of the form:
    \emph{what is the lightest 2-swap bridging the oriented interval $[a,b]$ to the oriented interval $[c,d]$?}
    (where intervals are subpaths of the original cycle $\mathcal C$)
  \end{claim}
  Let $w(S)$ be the value of a swap $S=(E^-,E^+)$, that is, $w(E^+)-w(E^-)$ (opposite of the gain, and generalized to swaps).
  We denote by $v(A,B)$ the smallest value of a 2-swap with two vertices in $A$ and two vertices in $B$.
  Let $v_d(A,B)$ (resp.~$v_c(A,B)$) the smallest value of a disconnecting 2-swap (resp.~2-move) with two vertices in $A$ and two vertices in $B$. 
  In particular $v(A,B) = \min(v_d(A,B),v_c(A,B))$.
  By the previous claim, we assumed that this can be computed in logarithmic time (after preprocessing).
  
  There are three types of 3-swaps, resulting respectively in one, two, or three cycles.
  We run over all the $\Oh(n)$ 3-swaps $S$.
  If $S$ results in three components $A$, $B$, $C$, the best 7-move extended $S$ is of value $v(S)+\min(v(A,B)+v(A,C), v(A,B)+v(B,C), v(A,C)+v(B,C))$.
  If $S$ results in two components $A \cup B$, and $C$, where $B$ has been flipped, then the best $7$-move extended $S$ is of value $v(S)+\min_{\{X,Y\}=\{A,B\}}(v_c(X,C)+\min(v_d(X,C),v_d(X,Y)),v_d(X,C)+\min(v_c(X,C),v_d(X,Y)))$.
  
  If finally $S$ results in one component $A \cup B \cup C$, where $B$ has been flipped, a first remark is that this 3-swap (3-move) is not improving (otherwise we are done).
  So the two 2-swaps have to be improving.
  They cannot cross, otherwise they would be doable without performing the 3-swap (a contradiction).
  As they do not cross, they cannot interact and the best extension to a 7-move can again be computed with a constant number of calls to $v_c$ and $v_d$.
\end{proof}
}

\section{Lower bound for varying $k$} \label{sec:xp-lowerbound}
The goal of this section is to prove that, under the Exponential Time Hypothesis, the exponent of~$n$ must grow near-linearly with the value of~$k$ in the running time of \probKOPTDec, even on graphs of degree at most three. We give a reduction from a variant of the parameterized subgraph isomorphism problem, for which such a lower bound is known.

\defparproblem{\probSubIso}
{An undirected host graph~$G$, an undirected pattern graph~$H$, and a function~$\psi \colon V(G) \to V(H)$.}
{$k := |E(H)|$.}
{Is there a function~$\phi \colon V(H) \to V(G)$ such that~$\psi(\phi(i)) = i$ for all~$i \in V(H)$, and~$\{\phi(i),\phi(j)\} \in E(G)$ for all~$\{i,j\} \in E(H)$?}

The function~$\psi$ indicates for each vertex of~$G$, which role it is allowed to play in the subgraph isomorphism. For~$i \in V(H)$ we denote by~$\psi^{-1}(i)$ the set~$\{v \in V(G) \mid \psi(v) = i\}$. Hence in \probSubIso, the vertex set of the input graph is partitioned into classes~$(\psi^{-1}(i))_{i \in V(H)}$ and the question is whether one can find an occurrence of the pattern~$H$ in~$G$ by picking the image of~$i \in V(H)$ from the set~$\psi^{-1}(i)$. An isomorphism~$\phi$ between~$H$ and a subgraph of~$G$ is said to \emph{respect the partition} of~$G$ if, for all~$i \in V(H)$, the image of~$i$ belongs to~$\psi^{-1}(i)$.

\begin{theorem}[{\cite[\S 6]{Marx10a}}] \label{theorem:marx:subiso}
If \probSubIso for pattern graphs~$H$ on~$k$ edges and $n$-vertex host graphs~$G$ can be solved in time~$f(k) \cdot n^{o(k / \log k)}$ for any function~$f$, then the Exponential Time Hypothesis is false.
\end{theorem}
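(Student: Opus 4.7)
The plan is to follow Marx's original strategy: reduce from 3-SAT under ETH to partitioned subgraph isomorphism, using a pattern graph that is dense enough to have treewidth linear in its edge count. By the Sparsification Lemma, assume the 3-SAT instance $\phi$ has $N$ variables and $\Oh(N)$ clauses, with no $2^{o(N)}$-time algorithm under ETH.

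First I would fix, for each $k$, a pattern graph $H_k$ with $k$ edges whose treewidth is $\Theta(k)$: a natural choice is a 3-regular expander on $\Theta(k)$ vertices, for which treewidth is linear in the number of vertices. Second, partition the $N$ variables of $\phi$ into $t = \Theta(k/\log k)$ blocks of size $\Theta(N \log k / k)$, and associate each block to one of $t$ distinguished vertices of $H_k$. The host graph $G$ then contains, for each vertex $v \in V(H_k)$, a partition class $\psi^{-1}(v)$ whose elements correspond to partial assignments of the block associated to $v$ (or act as ``free'' placeholders for vertices of $H_k$ carrying no block), giving $|\psi^{-1}(v)| \leq 2^{\Oh(N \log k / k)}$ per class. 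For each edge $\{u,w\} \in E(H_k)$, I place an edge in $G$ between partial assignments $\phi_u, \phi_w$ iff the joint assignment is consistent on shared variables and satisfies all clauses of $\phi$ assigned to $\{u,w\}$; clauses are distributed among edges of $H_k$ using the expansion property of $H_k$, so every clause is locally checked by some edge.

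By construction, a partition-respecting embedding of $H_k$ into $G$ corresponds exactly to a satisfying assignment of $\phi$, and the total size of $G$ is $n = \Oh(k \cdot 2^{\Oh(N \log k / k)})$. An algorithm for partitioned subgraph isomorphism running in $f(k) \cdot n^{o(k/\log k)}$ would then solve 3-SAT in time $f(k) \cdot 2^{o(k/\log k) \cdot N \log k / k} = f(k) \cdot 2^{o(N)}$, contradicting ETH regardless of the function $f$.

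The hard part is the combination of Step 1 with the clause assignment: one needs to distribute clauses to the $k$ edges of $H_k$ so that every clause of $\phi$ is locally checkable by some edge while each edge is responsible for only $\Oh(N/k)$ clauses, all while preserving that each variable lies in a unique block routed through adjacent partition classes. This is precisely where the expansion properties of $H_k$ are used, and the unavoidable $\log k$ loss arises because each partition class must encode $\Theta(\log n) = \Theta(N \log k / k)$ bits of information, forcing the block size -- and thus the reduction's ``compression ratio'' -- to scale with $\log k$ rather than being absorbed by the $k$ edges of $H_k$.
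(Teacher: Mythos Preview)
The paper does not prove this theorem; it is quoted from Marx and used as a black box for the subsequent reduction to \probKOPTDec. So there is no ``paper's own proof'' to compare against, and your attempt should be measured against Marx's actual argument.

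Your high-level plan (expander pattern, block partition, partition classes as partial assignments, then the arithmetic $n^{o(k/\log k)} = 2^{o(N)}$) is the right shape, but the step you yourself flag as ``the hard part'' is a genuine gap, not just a detail. If each variable of~$\phi$ sits in a single block, then a clause whose three variables fall into three distinct blocks cannot be checked by any edge of~$H_k$, since an edge joins only two blocks. The phrase ``routed through adjacent partition classes'' is not a construction. The actual content of Marx's proof is an \emph{embedding lemma}: any graph on~$m$ edges can be embedded into a graph~$H$ of treewidth~$t$ so that every vertex of~$H$ hosts only~$\Oh((m/t)\log t)$ vertices of the guest graph, with images connected and adjacencies realised along paths in~$H$. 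One first passes from~$\phi$ to a binary CSP (e.g.\ via 3-colouring), then embeds its constraint graph into~$H_k$; with~$t=\Theta(k)$ this yields block size~$\Oh(N\log k / k)$ honestly, and variables are \emph{copied} along paths rather than living in a unique block.

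Your closing explanation of the~$\log k$ loss is circular: you write that each partition class must encode~$\Theta(\log n)=\Theta(N\log k/k)$ bits, but~$\log n$ equals the block size only because you already set the block size to~$N\log k/k$. The~$\log k$ actually enters through the depth of the embedding lemma (routing in expanders / high-treewidth graphs incurs a logarithmic congestion factor), and proving that lemma is essentially the whole theorem.
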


\begin{figure}[t]
\begin{center}
\includegraphics{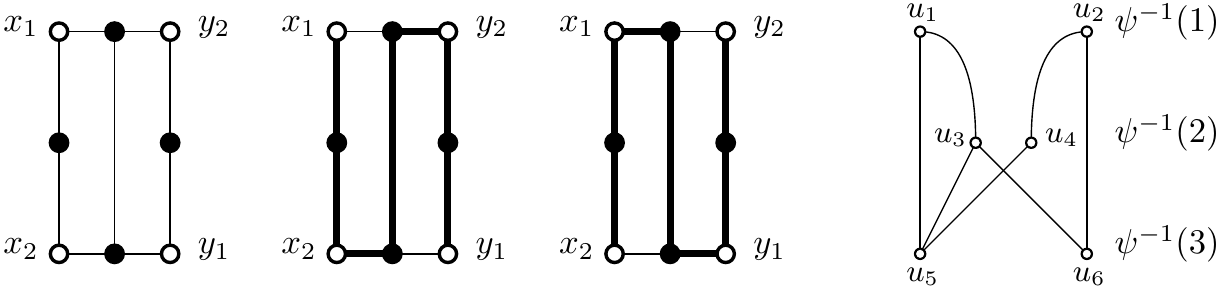}
\caption{Left: The domino gadget, its $(x_1,y_1)$-traversal, and its~$(x_2,y_2)$-traversal. Right: An input graph~$G$ of \probSubIso for~$H$ a triangle on vertex set~$\{1,2,3\}$ with~$k=3$ edges. The mapping~$\phi(1) = u_1, \phi(2) = u_3, \phi(3) = u_5$ is a partition-respecting solution.}\label{fig:domino}
\end{center}
\end{figure}

The following lemma gives the key construction for establishing lower bounds for finding \kOPT improvements. It is the main technical contribution of this section. Marx gave a related construction~\cite{Marx08} to prove that \probKOPTDec is W[1]-hard by a reduction from \textsc{$k$-Clique}, which was refined by Guo et al.~\cite{GuoHNS13} to provide a nearly-tight running time lower bound via \probSubIso. The next lemma refines the lower bound even further, showing it applies even for constant-degree graphs~$G$.

\newcommand{\rk}[0]{\ensuremath{{\rm rk}}}

\begin{lemma} \label{lemma:subiso:to:hamcycle}
There is a linear-time algorithm that, given a host graph~$G$, a pattern graph~$H$ with~$k$ edges, and a function~$\psi \colon V(G) \to V(H)$, outputs a graph~$G'$ of maximum degree~$3$, a weight function~$w \colon E(G) \to \{1,2\}$, an integer~$k' \in \Oh(k)$, and a Hamiltonian cycle~$\Cc^*$ in~$G$, such that:
\begin{enumerate}
	\item If there is a Hamiltonian cycle in~$G'$ whose weight is smaller than that of~$\Cc^*$, then there is a partition-respecting isomorphism between~$H$ and a subgraph of~$G$.\label{pty:cycle:implies:isomorphism}
	\item If there is a partition-respecting isomorphism between~$H$ and a subgraph of~$G$, then there is a Hamiltonian cycle in~$G'$ whose weight is smaller than that of~$\Cc^*$, which can be obtained by a \kprimeOPT move from~$\Cc^*$.\label{pty:isomorphism:implies:cycle}
\end{enumerate}
\end{lemma}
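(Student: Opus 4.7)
The proof is a careful gadget-based reduction from \probSubIso, refining the earlier constructions of Marx and Guo et al.\ so that the resulting graph $G'$ has maximum degree $3$ and weights in $\{1,2\}$, while keeping the swap budget $k' = \Oh(k)$. The basic building block is the \emph{domino gadget} of Figure~\ref{fig:domino}: a constant-size subcubic subgraph with two ``left'' ports $x_1, x_2$ and two ``right'' ports $y_1, y_2$, possessing exactly two canonical traversals as a subpath of a Hamiltonian cycle — the default $(x_1,y_1)$-traversal and the flipped $(x_2,y_2)$-traversal. Both use the same vertex set but different internal edges, and I would set weights so that switching between the two is a constant-size swap.

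The plan is to assemble these dominos into two layers. For every pattern vertex $i \in V(H)$ with $\psi^{-1}(i) = \{v_1, \ldots, v_s\}$, I would wire dominos $D_{v_1}, \ldots, D_{v_s}$ into a chain, identifying the right ports of $D_{v_t}$ with the left ports of $D_{v_{t+1}}$. In $\Cc^*$ every domino is in the default state; flipping exactly one domino $D_{v_t}$ in chain~$i$ encodes the choice $\phi(i) := v_t$. For every pattern edge $\{i,j\} \in E(H)$ I would add a small \emph{edge-checker} gadget, again of constant size and subcubic, whose vertices are linked (via the spare degree at each domino) to $D_u$ in chain~$i$ and $D_v$ in chain~$j$ for every host edge $\{u,v\}\in E(G)$ with $\psi(u)=i,\psi(v)=j$. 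The checker can be traversed in an ``improving'' way only when the two attached dominos are both flipped. Weights are taken in $\{1,2\}$ so that the net gain of flipping a single chain plus activating a single compatible checker is strictly positive, and the total defect is zero only when every pattern edge's checker is consistently activated.

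Counting the budget: there is one flipped domino per pattern vertex and one activated checker per pattern edge, each contributing $\Oh(1)$ to the number of swapped edges, so $k' \in \Oh(k)$. The forward direction (\ref{pty:isomorphism:implies:cycle}) is then a direct construction — given the isomorphism $\phi$, flip the dominos $D_{\phi(i)}$ and activate the checkers for each edge of $H$, then verify that the resulting edge set is a Hamiltonian cycle in $G'$ of strictly smaller weight and differs from $\Cc^*$ in exactly $k'$ edges.

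The main obstacle is the converse direction (\ref{pty:cycle:implies:isomorphism}), namely showing that \emph{any} lighter Hamiltonian cycle encodes a valid partition-respecting isomorphism. The argument I would use is local and combinatorial: because the maximum degree is $3$, at every vertex the new Hamiltonian cycle uses exactly the same two edges as $\Cc^*$ \emph{or} swaps one of them for the unique incident ``alternative'' edge; so any deviation from $\Cc^*$ decomposes canonically into flips of whole dominos and activations of whole checkers. A weight accounting then forces (i) in each vertex chain, at least one domino is flipped, for otherwise the associated edge-checkers cannot contribute positive gain; (ii) in fact \emph{exactly} one domino is flipped in each chain, because two simultaneous flips in the same chain either destroy Hamiltonicity or cancel each other weight-wise; and (iii) every activated checker corresponds to a real host edge between the two selected vertices. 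Reading off the selected vertex in each chain then yields the desired $\phi$. The most delicate point, which I expect to occupy the bulk of the formal proof, is step (ii): ensuring that partial, inconsistent, or duplicated flips are strictly non-improving, which is where the precise choice of the domino internals and the weights $\{1,2\}$ must be calibrated.
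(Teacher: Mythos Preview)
Your high-level strategy---encode vertex selections as domino flips and edge verifications via checker gadgets---is in the right spirit, but the construction as sketched has a genuine gap at the degree constraint. You propose one domino $D_v$ per host vertex $v$, and an edge-checker that is ``linked (via the spare degree at each domino) to $D_u$ in chain $i$ and $D_v$ in chain $j$ for every host edge $\{u,v\}$''. But a host vertex $u$ may have $\Theta(n)$ incident edges in $G$, so $D_u$ would need $\Theta(n)$ outgoing links to checkers, which is impossible in a subcubic graph. Symmetrically, if the checker for a pattern edge $\{i,j\}$ is a single constant-size gadget, it cannot be wired to all $\Theta(m)$ candidate host edges. This is precisely the obstacle that the bounded-degree requirement adds over the earlier Marx and Guo--Hartung--Niedermeier--Such\'y reductions, and your sketch does not address it.

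The paper resolves this quite differently. Rather than one domino per host vertex, it uses one domino per \emph{arc} of $G$ and, crucially, a separate gadget type---\emph{choice gadgets}---which are cycles of length proportional to the number of options being encoded (hence still subcubic). Every selection (which host vertex plays the role of pattern vertex $i$; which host edge realizes pattern edge $\{i,j\}$ at a given host vertex) is made by the terminal pair at which the Hamiltonian cycle enters and exits a choice gadget. All gadgets are strung linearly along the cycle via small forcing gadgets, so no vertex needs more than one non-cycle edge. The weight-$2$ edges sit only on the external edges at $x$-terminals of choice gadgets and on one special edge $e^*$; this yields a clean lower bound of $|V(G')|+L$ on the weight of \emph{any} Hamiltonian cycle (one weight-$2$ edge per choice gadget is unavoidable), attained only when each choice gadget is traversed validly and $e^*$ is avoided---which in turn forces a consistent selection encoding the isomorphism.

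Your plan for the converse direction also leans on a claim that is too optimistic: that any deviation from $\Cc^*$ ``decomposes canonically into flips of whole dominos and activations of whole checkers''. In a subcubic graph a lighter Hamiltonian cycle can in principle take long detours threading through many gadgets at once; gadget-locality of deviations is not automatic from degree three alone. The paper does not argue by local decomposition at all; instead it uses the global weight lower bound above to conclude that a cycle strictly lighter than $\Cc^*$ must have weight exactly $|V(G')|+L$, hence must omit $e^*$ and make a single valid traversal of every choice gadget, from which the isomorphism is read off directly.
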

\begin{proof}
Consider the input~$(G,H,k,\psi)$. We may assume without loss of generality that~$H$ has no isolated vertices, since they are trivial to deal with. Hence~$\hat{k} := |V(H)| \leq 2k$. Identify the vertices of~$H$ with the integers~$\{1,\ldots,{\hat{k}}\}$, let~$n := |V(G)|$ and~$m := |E(G)|$. We assume without loss of generality that for each~$i \in V(H)$, for each~$v \in \psi^{-1}(i)$, vertex~$v$ is adjacent in~$G$ to at least one vertex of~$\psi^{-1}(j)$ for all~$j \in N_H(i)$. If this is not the case, then~$v$ can trivially not be used in an occurrence of the pattern subgraph and can safely be removed. Similarly, we assume without loss of generality that if~$\{u,v\} \in E(G)$, then~$\{\psi(u),\psi(v)\} \in E(H)$, as otherwise the edge can safely be removed. 

Order the vertices of~$G$ arbitrarily as~$u_1, \ldots, u_n$. Any subset of~$V(G)$ inherits a total ordering from the total ordering of~$V(G)$. For a subset~$U \subseteq V(G)$ and element~$u \in U$, we denote by~$\rk(u,U)$ the rank of~$u$ in the ordered set~$U$, i.e., the number of elements of~$U$ equal to~$u$ or preceding~$u$ in this total order. Hence the first element~$u'$ of~$U$ has~$\rk(u',U) = 1$ and the last element~$u''$ of~$U$ has~$\rk(u'',U) = |U|$.

All edges of~$G'$ have weight~$1$ unless stated otherwise. We employ three types of gadgets. Each gadget has a number of \emph{terminal vertices}, which are the only vertices of the gadget that will be adjacent to other vertices of~$G'$.
\begin{itemize}
	\item The domino gadget is depicted in Figure~\ref{fig:domino}. Its terminal vertices are~$x_1,y_1,x_2,y_2$. Observe that a Hamiltonian cycle traverses the domino gadget in exactly one of two ways: via an $x_1y_1$-path that visits all eight vertices, or via an $x_2y_2$-path that visits all eight vertices.
	\item A \emph{forcing gadget} consists of a path on three consecutive vertices~$(x, z, y)$, with terminals~$x$ and~$y$. If the gadget is contained in~$G'$, then the middle vertex~$z$ has degree two. Hence any Hamiltonian cycle of~$G'$ contains both edges incident on~$z$, and therefore contains exactly one non-gadget edge incident on each terminal.
	\item A \emph{choice gadget} of size~$\ell \geq 1$ consists of a cycle on~$3(\ell+1)$ vertices~$(x_0, y_0, z_0, x_1, y_1, z_1, \ldots, \linebreak[1] x_\ell, y_\ell, z_\ell)$, which represents a choice between~$\ell$ alternatives and a dummy choice for index~$0$. The vertices~$\{x_i, y_i \mid 0 \leq i \leq \ell\}$ are the terminals of the gadget. When a choice gadget is embedded in~$G'$, then all vertices~$z_i$ for~$0 \leq i \leq \ell$ have degree exactly two. Hence any Hamiltonian cycle of~$G'$ contains both edges incident on~$z_i$. When using the choice gadget in our construction, we assign weight~$2$ to all edges connecting from $x$-terminals of the gadget to vertices outside the gadget. We assign other edges incident on the gadget weight~$1$. This will enforce that sufficiently cheap Hamiltonian cycles enter the gadget \emph{once} at some terminal~$x_i$, then visit all its vertices through edges of weight~$1$ by going around the cycle, and exit the gadget at the corresponding~$y_i$.
\end{itemize}

Using these gadgets we now give the construction of~$G'$. To simplify the presentation, we use the following shorthand: when~$C$ is the identifier of a gadget that was inserted into~$G'$, then by~$C[x_i]$ (resp.~$C[y_i]$) we refer to the copy of terminal~$x_i$ (resp.~$y_i)$ in gadget~$C$. Refer to Figure~\ref{fig:subiso} for an illustration of the construction, which builds~$G'$ as follows:

\newcommand{\Fa}[0]{F^{\mathrm{\textsc{a}}}}
\newcommand{\Fe}[0]{F^{\mathrm{\textsc{e}}}}
\newcommand{\Fv}[0]{F^{\mathrm{\textsc{v}}}}

\begin{figure}[t]
\begin{center}
\includegraphics{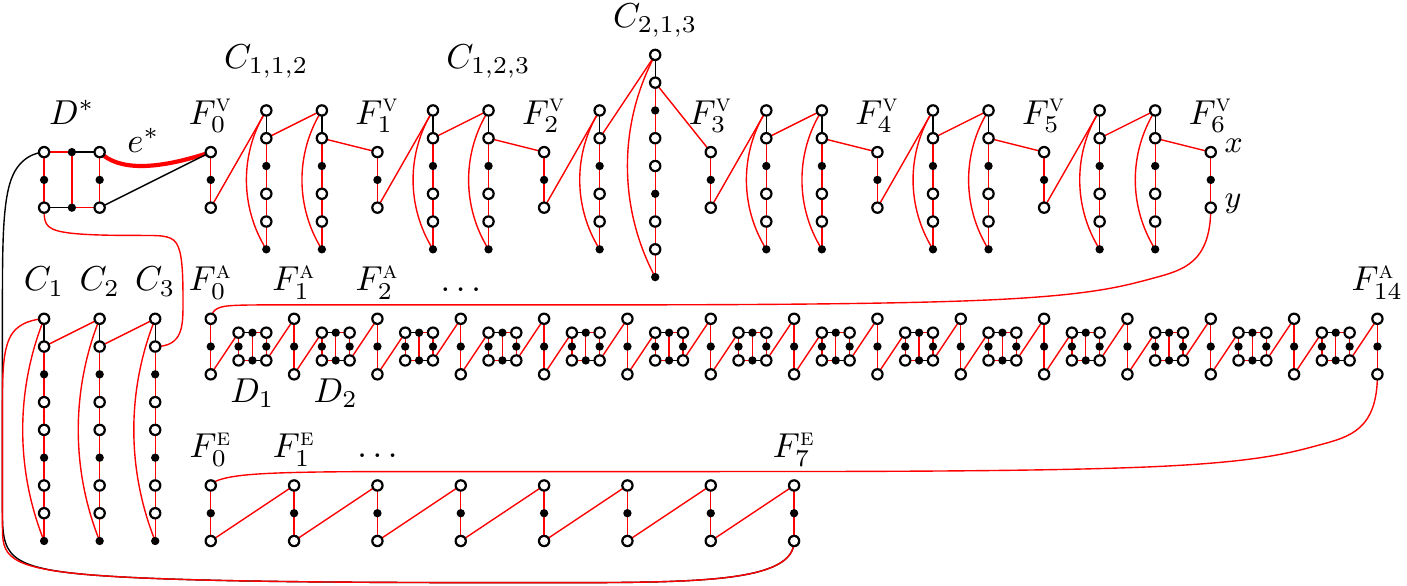}
\caption{Illustration of the reduction from \probSubIso to \probKOPTDec, applied to the input graph~$G$ from Figure~\ref{fig:domino}. Edges connecting domino gadgets $D_1, \ldots, D_{2m}$ to choice gadgets and to forcing gadgets~$\Fe_1, \ldots, \Fe_m$ have been omitted for clarity, as have the edges connecting choice gadgets~$C_1, \ldots, C_{\hat{k}}$ to~$\Fv_0, \ldots, \Fv_n$. The starting Hamiltonian cycle~$\Cc^*$ is highlighted in red; the special weight-2 edge~$e^*$ is bold.}\label{fig:subiso}
\end{center}
\end{figure}



\begin{enumerate}
	\item For each~$i \in [\hat{k}] = V(H)$, insert a choice gadget~$C_i$ of size~$|\psi^{-1}(i)|$ into~$G'$. This gadget controls which vertex of~$\psi^{-1}(i)$ is used in the solution. For each~$i \in [\hat{k} - 1]$, insert an edge between~$C_i[y_0]$ and~$C_{i+1}[x_0]$. Note that such edges have weight~$2$ since they are not internal to a choice gadget and are incident on an $x$-terminal of a choice gadget.\label{connections:successive:choice}
	\item Insert a forcing gadget~$\Fv_0$. For each~$\ell \in [n]$, do as follows for vertex~$u_\ell$. Let~$i := \psi(u_\ell)$ and~$p := \rk(u_\ell, \psi^{-1}(i))$.
	\begin{enumerate}
		\item Insert a forcing gadget~$\Fv_\ell = \Fv_{i,p}$ into~$G'$ to represent~$u_\ell$. We will use two different identifiers to refer to this forcing gadget:
		\begin{itemize}
			\item we refer to it as~$\Fv_\ell$ (since it is the forcing gadget for the~$\ell$-th vertex of~$G$), and
			\item we refer to it as~$\Fv_{i,p}$ (since it is the forcing gadget for the~$p$-th vertex from~$\psi^{-1}(i)$).
		\end{itemize}
		\item For each~$j \in N_H(i)$, insert a choice gadget~$C_{i,p,j}$ of size~$|N_G(u_\ell) \cap \psi^{-1}(j)|$. If~$u_\ell$ is used in a pattern subgraph, this choice gadget controls which edge between~$u_\ell$ and a vertex in~$\psi^{-1}(j)$ is used in the subgraph.
		\item Let~$N_H(i) = \{j_1, \ldots, j_{\deg_H(i)}\} \subseteq V(H) = [\hat{k}]$ be the neighbors of~$i$ in their natural order. We connect the gadgets as follows. \label{connections:Fv}
		\begin{itemize}
			\item Insert an edge between~$\Fv_{\ell-1}[y]$ and~$C_i[x_p]$. Insert an edge between~$C_i[y_p]$ and~$\Fv_\ell[x]$.\label{connections:to:ci}
			\item Insert an edge between~$\Fv_{\ell-1}[y]$ and~$C_{i,p,j_1}[x_0]$. 
			\item For each~$\alpha \in [\deg_H(i) - 1]$, insert an edge between~$C_{i,p,j_{\alpha}}[y_0]$ and~$C_{i,p,j_{\alpha+1}}[x_0]$. 
			\item Insert an edge between~$C_{i,p,j_{\deg_H(i)}}[y_0]$ and~$\Fv_\ell[x]$.
		\end{itemize}
	\end{enumerate}
	\item Insert a forcing gadget~$\Fa_0$. Let~$\vec{G}$ denote the directed graph obtained from~$G$ by replacing each edge~$\{u,v\}$ by two directed arcs~$(u,v)$ and~$(v,u)$. Let~$a_1, \ldots, a_{2m}$ be an arbitrary ordering of the arcs of~$\vec{G}$. 
	
	For each~$\ell \in [2m]$, we do the following. We have~$a_\ell = (u, v)$ with~$\{\psi(u), \psi(v)\} \in E(H)$. Let~$i := \psi(u)$ and~$p := \rk(u, \psi^{-1}(i))$. Let~$j := \psi(v)$ and let~$q := \rk(v, \psi^{-1}(j))$, so that arc~$a_\ell$ connects the $p$-th vertex of~$\psi^{-1}(i)$ to the $q$-th vertex of~$\psi^{-1}(j)$.
	\begin{enumerate}
		\item Insert a domino gadget~$D_\ell = D_{i,p,j,q}$ into~$G'$ for this arc~$a_\ell$.
		\item Insert a forcing gadget~$\Fa_\ell = \Fa_{i,p,j,q}$ into~$G'$ for this arc~$a_\ell$.
		\item We connect these gadgets as follows.
		\begin{enumerate}
			\item Insert an edge between~$\Fa_{\ell-1}[y]$ and~$D_\ell[x_1]$. Insert an edge between~$D_\ell[y_1]$ and~$\Fa_\ell[x]$.\label{connections:fa:domino}
			\item Since~$a_\ell = (u,v)$ is an arc of~$\vec{G}$, we have~$v \in N_G(u) \cap \psi^{-1}(j)$. Let~$r := \rk(v, N_G(u) \cap \psi^{-1}(j))$. Insert edges between~$\Fa_{\ell-1}[y]$ and~$C_{i,p,j}[x_r]$, and between~$C_{i,p,j}[y_r]$ and~$\Fa_\ell[x]$.\label{connections:fa:choice}
		\end{enumerate}
		This construction allows a Hamiltonian cycle to travel from~$\Fa_{\ell-1}[y]$ to~$\Fa_\ell[x]$ in two distinct ways: via the domino gadget~$D_\ell$, or via the choice gadget~$C_{i,p,j}$. The latter encodes that the undirected edge underlying~$a_\ell$ is used in the pattern subgraph.
	\end{enumerate}
	\item Insert a forcing gadget~$\Fe_0$. Let~$e_1, \ldots, e_m$ be an arbitrary ordering of the edges of the undirected graph~$G$. 
	
	For each~$\ell \in [m]$, we do the following. We have~$e_i = \{u, v\}$ for some~$u$ and~$v$ with~$\{\psi(u), \psi(v)\} \in E(H)$ with~$\psi(u) < \psi(v)$. That is, we only carry out this part of the construction for the orientation of the edge from smaller to higher endpoint, which is crucial. Let~$i := \psi(u)$ and~$p := \rk(u, \psi^{-1}(i))$. Let~$j := \psi(v)$ and let~$q := \rk(v, \psi^{-1}(j))$.
	\begin{enumerate}
		\item Insert a forcing gadget~$\Fe_\ell = \Fe_{i,p,j,q}$ into~$G'$.
		\item Insert an edge between~$\Fe_{\ell-1}[y]$ and~$\Fe_\ell[x]$.\label{connections:Fe:direct}
		\item Insert an edge between~$\Fe_{\ell-1}[y]$ and~$D_{i,p,j,q}[x_2]$.\label{connections:Fe:domino:first}
		\item Insert an edge between~$D_{i,p,j,q}[y_2]$ and~$D_{j,q,i,p}[x_2]$.
		\item Insert an edge between~$D_{j,q,i,p}[y_2]$ and~$\Fe_\ell[x]$.\label{connections:Fe:domino:last}
	\end{enumerate}
	This construction allows a Hamiltonian cycle to travel from~$\Fe_{\ell-1}[y]$ to~$\Fe_\ell[x]$ in two distinct ways: it can either go directly using the edge inserted in~\eqref{connections:Fe:direct}, or visit the domino for the~$(i,j)$ orientation of the edge (lower endpoint first), then visit the domino for the~$(j,i)$ orientation of the edge (higher endpoint first), and then continue to~$\Fe_\ell[x]$.
	\item Insert a final domino gadget~$D^*$, and connect as follows.
	\begin{enumerate}
		\item Insert an edge between~$\Fv_n[y]$ and~$\Fa_0[x]$. \label{connections:Fazero}
		\item Insert an edge between~$\Fa_{2m}[y]$ and~$\Fe_0[x]$. \label{connections:Fezero}
		\item Insert an edge between~$\Fe_m[y]$ and~$D^*[x_1]$. Insert an edge between~$D^*[y_1]$ and~$\Fv_0[x]$.\label{connections:Fem:Dstar}
		\item Insert an edge between~$\Fe_m[y]$ and~$C_1[x_0]$. Insert an edge between~$C_{\hat{k}}[y_0]$ and~$D^*[x_2]$. As the final and crucial step, we insert an \emph{edge~$e^*$ of weight~$2$} between~$D^*[y_2]$ and~$\Fv_0[x]$.\label{connections:Fe:via:choice}
	\end{enumerate}
\end{enumerate}

This concludes the construction of~$G'$. We set~$k' := 16k+4 \in \Oh(k)$. We can observe the following from the definition of~$G'$.

\begin{observation}
The maximum degree of~$G'$ is three.
\end{observation}

\begin{observation} \label{obs:weight:two}
The only edges of weight~$2$ in~$G'$ are those incident on the $x$-terminal of a choice gadget but not internal to that choice gadget, and the special edge~$e^*$. No edge of~$G'$ connects two $x$-terminals of distinct choice gadgets.
\end{observation}

\begin{observation} \label{obs:choicegadgets}
The total number of choice gadgets in~$G'$ is~$L := \hat{k} + \sum _{v \in V(G)} \deg_H(\psi(v))$.
\end{observation}

The idea behind the construction, and the special role of~$e^*$, is as follows. Consider a Hamiltonian cycle~$\Cc'$ in~$G'$, oriented so that it starts at~$\Fv_0[x]$ and moves in the direction of~$\Fv_0[y]$ through the forcing gadget. We will show that~$\Cc'$ visits the forcing gadgets in increasing order of index, first visiting the gadgets~$\Fv$, then the gadgets~$\Fa$ and some of the associated dominos, and finally the gadgets~$\Fe$. If~$\Cc'$ has visited all choice gadgets and dominos during this process, then after arriving at~$\Fe_m[y]$ it can simply close the cycle by going to~$D^*[x_1]$, visiting the gadget, and leaving from~$D^*[y_1]$ back to its starting point~$\Fv_0[x]$. Such a Hamiltonian cycle is relatively cheap because it leaves~$D^*$ via an edge of weight~$1$. On the other hand, if there is a choice gadget~$C_i$ for~$i \in V(H)$ that has not been visited when the cycle reaches~$\Fe_m[y]$, then it must be the case that the cycle has not visited \emph{any} choice gadget~$C_i$ up to that point. Then it is forced to move from~$\Fe_m[y]$ to~$C_1[x_0]$, using the edges between $0$-indexed terminals of successive choice gadgets~$C_i$, and returning from~$C_{\hat{k}}[y_0]$ to~$D^*[x_2]$. However, in this case the cycle will have to traverse~$D^*$ and close the loop by using the weight-$2$ edge~$e^*$ from~$D^*[y_2]$ to~$\Fv_0[x]$. Effectively, this construction yields a trivial Hamiltonian cycle in~$G'$ (see Figure~\ref{fig:domino}) that traverses the forcing gadgets linearly and visits all gadgets~$C_i$ at the end, but which incurs an extra cost of one. If there is a partition-respecting subgraph isomorphism, then there is a strictly cheaper Hamiltonian cycle that visits choice gadgets earlier, based on which vertices and edges are used in the solution subgraph. 

We now formalize this intuition, starting with the construction of the Hamiltonian cycle~$\Cc^*$ in~$G'$. To simplify our description of~$\Cc^*$, we will use the following terminology. When the cycle enters a gadget at a terminal~$x_i$, by \emph{traversing the gadget} we mean visiting all vertices internal to the gadget by a Hamiltonian path which ends at the corresponding terminal~$y_i$. For each of the three types of gadget, it is easy to verify that such a path exists. In the remainder, we will sometimes say that a Hamiltonian cycle makes a \emph{valid traversal} of a gadget if its intersection with the gadget consists of a single path between two matching terminals~$x_i$ and~$y_i$. Cycle~$\Cc^*$ is constructed as follows.

\begin{enumerate}
	\item Cycle~$\Cc^*$ starts at vertex~$\Fv_0[x]$ and traverses~$\Fv_0$ to~$\Fv_0[y]$. For each~$\ell \in [n]$ in increasing order, we have~$\Fv_\ell = \Fv_{i,p}$ for some~$i \in V(H)$ and~$p \in [|\psi^{-1}(i)]$, and proceed as follows.
	\begin{enumerate}
		\item The cycle proceeds from~$\Fv_{\ell-1}[y]$ to~$C_{i,p,j_1}[x_0]$, visits gadget~$C_{i,p,j_1}$, and moves from $C_{i,p,j_1}[y_0]$ to~$C_{i,p,j_2}[x_0]$, continuing in this way until arriving at~$C_{i,p,j_{\deg_H(i)}}[y_0]$ from which the cycle moves to~$\Fv_\ell[x]$. This is possible using the edges added in step~\eqref{connections:Fv}.
	\end{enumerate}
	\item After arriving at~$\Fv_n[x]$, the cycle~$\Cc^*$ moves through the gadget to~$\Fv_n[y]$, and goes to~$\Fa_0[x]$ using the edge inserted in step~\eqref{connections:Fazero}. At this point,~$\Cc^*$ has visited all vertices of forcing gadgets~$\Fv$ and choice gadgets of the form~$C_{i,p,j}$. For each~$\ell \in [2m]$ in increasing order, the cycle visits~$\Fa_{\ell-1}$ and continues by going from~$\Fa_{\ell-1}[y]$ to~$D_\ell[x_1]$, visiting the gadget, and going from~$D_\ell[y_1]$ to~$\Fa_\ell[x]$ using the edges of step~\eqref{connections:fa:domino}.
	\item After arriving at~$\Fa_{2m}[x]$, the cycle~$\Cc^*$ moves through the gadget to~$\Fa_{2m}[y]$, and goes to~$\Fe_0[x]$ using the edge inserted in step~\eqref{connections:Fezero}. The cycle now simply visits the gadgets~$\Fe_1, \ldots, \Fe_m$ consecutively, moving via~$\Fe_m[x]$ to~$\Fe_m[y]$. Step~\eqref{connections:Fe:direct} has inserted the edges to facilitate this.
	\item After arriving at~$\Fe_m[y]$, the cycle~$\Cc^*$ goes to~$C_1[x_0]$ via the edge of step~\eqref{connections:Fe:via:choice}, visits~$C_1$ to end up at~$C_1[y_0]$, and uses edges between the $0$-terminals of successive choice gadgets inserted in step~\eqref{connections:successive:choice} to visit all choice gadgets of the form~$C_i$ to arrive at~$C_{\hat{k}}[y]$, from which the cycle goes to~$D^*[x_2]$ via the edge of step~\eqref{connections:Fe:via:choice}.
	\item The cycle visits~$D^*$, ending with the weight-2 edge~$e^*$ from~$D^*[y_2]$ to~$\Fv_0[x]$ of step~\eqref{connections:Fe:via:choice} to close the cycle.
\end{enumerate}

\begin{claim} \label{claim:weight:Cstar}
The weight of~$\Cc^*$ is~$\beta := |V(G')| + 1 + L$, with~$L$ as in Observation~\ref{obs:choicegadgets}.
\end{claim}
\begin{claimproof}
Since~$\Cc^*$ is a Hamiltonian cycle, it has exactly~$|V(G')|$ edges. All edges of~$G'$ have weight one, except for~$e^*$ and the external edge incident to the $x$-terminal of each choice gadget. By Observation~\ref{obs:weight:two}, no edge connects two $x$-terminals of distinct choice gadgets. Hence the weight-2 edges on~$\Cc^*$ are~$e^*$ together with the edges of~$\Cc^*$ used to enter choice gadgets via an $x$-terminal. By Observation~\ref{obs:weight:two}, each choice gadget in~$G^*$ contributes a unique weight-2 edge to~$\Cc^*$. Hence the weight of~$\Cc^*$ is~$|V(G')| + 1 + L$.
\end{claimproof}

\begin{claim}
If there is a partition-respecting isomorphism between~$H$ and a subgraph of~$G$, then there is a Hamiltonian cycle in~$G'$ whose weight is smaller than that of~$\Cc^*$. Moreover, such a Hamiltonian cycle can be obtained by a \kprimeOPT move from~$\Cc^*$.
\end{claim}
\begin{claimproof}
Let~$\phi \colon V(H) \to V(G)$ be a partition-respecting isomorphism. We first construct a Hamiltonian cycle~$\Cc_\phi$ in~$G$ whose weight is smaller than that of~$\Cc^*$, and then argue that its Hamming distance to~$\Cc^*$ is suitably bounded. The cycle~$\Cc_\phi$ is constructed as follows:

\begin{enumerate}
	\item Cycle~$\Cc_\phi$ starts at vertex~$\Fv_0[x]$ and traverses~$\Fv_0$ to end at~$\Fv_0[y]$. For each~$\ell \in [n]$ in increasing order, the cycle then proceeds according to the following case distinction.
	\begin{enumerate}
		\item If~$u_\ell$ is used in the pattern subgraph, i.e., if~$\phi(\psi(u_\ell)) = u_\ell$, then we do as follows. Let~$i := \psi(u_\ell)$ and~$p := \rk(u_\ell, \psi^{-1}(i))$. From~$\Fv_{\ell-1}[y]$, the cycle jumps to~$C_i[x_p]$ via the edges added in step~\eqref{connections:to:ci}, traverses gadget~$C_i$, and exits from~$C_i[y_p]$ to~$\Fv_\ell[x]$.
		\item If~$u_\ell$ is not used in the pattern subgraph, then the cycle proceeds from~$\Fv_{\ell-1}[y]$ to~$C_{i,p,j_1}[x_0]$, visits gadget~$C_{i,p,j_1}$, and moves from $C_{i,p,j_1}[y_0]$ to~$C_{i,p,j_2}[x_0]$, continuing in this way until arriving at~$C_{i,p,j_{\deg_H(i)}}[y_0]$ from which the cycle moves to~$\Fv_\ell[x]$. This is possible using the edges added in step~\eqref{connections:Fv}.
	\end{enumerate}
	\item After arriving at~$\Fv_n[x]$, the cycle~$\Cc_\phi$ moves through the gadget to~$\Fv_n[y]$, and goes to~$\Fa_0[x]$ using the edge inserted in step~\eqref{connections:Fazero}. At this point,~$\Cc_\phi$ has visited all vertices of forcing gadgets~$\Fv$, all choice gadgets of the form~$C_{i,p,j}$ when the $p$-th vertex of~$\psi^{-1}(i)$ is \emph{not} used in the pattern subgraph, and all choice gadgets~$C_i$ for~$i \in V(H)$.
	\item For each~$\ell \in [2m]$ in increasing order, the cycle visits~$\Fa_{\ell-1}$ and continues from~$\Fa_{\ell-1}[y]$ based on a case distinction. Let~$a_\ell = (u,v)$ and recall that~$\{\psi(u), \psi(v)\} \in E(H)$.
	\begin{enumerate}
		\item If the undirected edge~$\{u,v\}$ is used in the pattern subgraph, that is, when~$\phi(\psi(u)) = u$ and~$\phi(\psi(v)) = v$, then the cycle skips domino gadget~$D_\ell$ and does the following instead. Let~$p := \rk(u, \psi^{-1}(\psi(u)))$ and~$r := \rk(v, N_G(u) \cap \psi^{-1}(\psi(v)))$. From~$\Fa_{\ell-1}[y]$, the cycle continues to~$C_{\psi(u), p, \psi(v)}[x_r]$, visits the choice gadget, leaving from~$C_{\psi(u), p, \psi(v)}[y_r]$ to~$\Fa_\ell[x]$ via the edges of step~\eqref{connections:fa:choice}.
		\item If the undirected edge~$\{u,v\}$ is not used in the pattern subgraph, then the cycle continues from~$\Fa_{\ell-1}[y]$ to~$D_\ell[x_1]$, visiting the gadget, and going from~$D_\ell[y_1]$ to~$\Fa_\ell[x]$ using the edges of step~\eqref{connections:fa:domino}. 
	\end{enumerate}
	\item After arriving at~$\Fa_{2m}[x]$, the cycle~$\Cc_\phi$ moves through the gadget to~$\Fa_{2m}[y]$, and goes to~$\Fe_0[x]$ using the edge inserted in step~\eqref{connections:Fezero}. At this point, cycle~$\Cc_\phi$ has visited all choice gadgets, all forcing gadgets~$\Fv$ and~$\Fa$, and all dominos gadgets of the form~$D_{i,p,j}$ \emph{except} for the dominos corresponding to arcs that are orientations of edges of the pattern subgraph encoded by~$\phi$. 
	\item For each~$\ell \in [m]$ in increasing order, the cycle visits~$\Fe_{\ell-1}$ and continues from~$\Fe_{\ell-1}[y]$ based on a case distinction. Let~$e_\ell = \{u,v\}$, ordered so that~$\psi(u) < \psi(v)$. Let~$i := \psi(u)$ and~$p := \rk(u, \psi^{-1}(i))$. Let~$j := \psi(v)$ and let~$q := \rk(v, \psi^{-1}(j))$.
	\begin{enumerate}
		\item If edge~$\{u,v\}$ is used in the pattern subgraph, that is, when~$\phi(\psi(u)) = u$ and~$\phi(\psi(v)) = v$, then the cycle goes from~$\Fe_{\ell-1}[y]$ to~$D_{i,p,j,q}[x_2]$, traversing that gadget to~$D_{i,p,j,q}[y_2]$, moving to~$D_{j,q,i,p}[x_2]$, traversing the gadget, and moving from~$D_{j,q,i,p}[y_2]$ to~$\Fe_\ell[x]$. This is possible using the edges of steps~\eqref{connections:Fe:domino:first}--\eqref{connections:Fe:domino:last}.
		\item If edge~$\{u,v\}$ is not used in the pattern subgraph, then the cycle simply goes from~$\Fe_{\ell-1}[y]$ to~$\Fe_\ell[x]$ using step~\eqref{connections:Fe:direct}.
	\end{enumerate}
	Observe that this process visits the dominos for \emph{both} orientations~$(u,v), (v,u)$ of edges~$\{u,v\}$ used in the pattern subgraph. When arriving at~$\Fe_m[y]$, the only vertices the cycle has not visited yet belong to~$D^*$.
	\item After arriving at~$\Fe_m[y]$, the cycle~$\Cc_\phi$ goes to~$D^*[x_1]$, traverses~$D^*$ to~$D^*[y_1]$, closing the cycle with the edge to~$\Fv_0[x]$ which has \emph{weight one}. The edges exist by step~\eqref{connections:Fem:Dstar}.
\end{enumerate}

It is clear from the construction that~$\Cc_\phi$ is a Hamiltonian cycle in~$G'$. Let us now consider its weight. It consists of~$|V(G')|$ edges. Since~$\Cc_\phi$ does not use the edge~$e^*$, the only edges of weight two on~$\Cc_\phi$ are the edges used to enter choice gadgets via their $x$-terminal. Hence the weight of~$\Cc_\phi$ is~$|V(G')| + L$, which is strictly smaller than the weight of~$\Cc^*$ by Claim~\ref{claim:weight:Cstar}.

It remains to show that~$\Cc_\phi$ can be obtained from~$\Cc^*$ by a~$\kprimeOPT$ move, for some~$k' \in \Oh(k)$. To this end, it is important to note the following: while the size of choice gadgets can be arbitrarily large compared to~$k$, any pair of valid traversals of a choice gadget differs in only~$\Oh(1)$ edges. Each valid traversal misses exactly one edge of the cycle~$(x_0, y_0, z_0, \ldots, x_\ell, y_\ell, z_\ell)$, and there are only two external edges incident on the gadgets in which the traversals can differ. Since~$\Cc^*$ and~$\Cc_\phi$ both only perform valid traversals of choice gadgets, this implies the following: for each choice gadget in~$G'$, the number of edges incident to that gadget that lie on~$\Cc^*$ but not on~$\Cc_\phi$ is at most three.

A similar statement can be made for domino gadgets. Since the two traversals of domino gadgets differ only in two internal edges, and two edges of a Hamiltonian cycle are incident on terminals of such a gadget, we have the following: for any domino gadget~$D$ of~$G'$, the number of edges incident on~$D$ that lie on~$\Cc^*$ but not on~$\Cc_\phi$, is at most four.

Since~$\Cc^*$ and~$\Cc_\phi$ both traverse all forcing gadgets in exactly the same way, we can bound the number of edges that lie on~$\Cc^*$ but not~$\Cc_\phi$ by bounding the number of domino and choice gadgets for which the two cycles differ in their traversal of the gadget. From the construction of~$\Cc_\phi$, it follows that the only gadgets traversed differently by the two cycles are:
\begin{enumerate}
	\item The choice gadgets~$C_i$ for~$i \in V(H)$.
	\item The domino gadget~$D^*$.
	\item For each edge~$\{u,v\}$ used in the solution subgraph~$\phi$, ordered such that~$\psi(u) < \psi(v)$, i.e., for edges with~$\phi(\psi(u)) = u$ and~$\phi(\psi(v)) = v$ and~$\psi(u) < \psi(v)$, the two Hamiltonian cycles differ in the traversals of the dominos~$D_{i,p,j,q}$ and~$D_{j,q,p,j}$, where~$i := \psi(u)$,~$p := \rk(u, \psi^{-1}(u))$,~$j := \psi(v)$, and~$q := \rk(v, \psi^{-1}(v))$. 
	\item For each vertex~$u$ in the image of~$\phi$, we have~$\phi(\psi(u)) = u$. Letting~$i := \psi(u)$ and~$p := \rk(u, \psi^{-1}(i))$, the two Hamiltonian cycles differ in the traversals of all~$\deg_H(i)$ choice gadgets~$C_{i,p,j}$ for~$j \in N_H(i)$.
\end{enumerate}
Hence there are at most~$\hat{k} + \sum _{i \in V(H)} \deg_H(i) = \hat{k} + 2k$ choice gadgets in which the traversal differs between~$\Cc^*$ and~$\Cc_\phi$, and at most~$1 + 2k$ domino gadgets in which the traversal differs. Since all edges on~$\Cc^*$ that do not lie on~$\Cc_\phi$ are accounted for by edges incident to gadgets whose traversal changes, the number of edges on~$\Cc^*$ that do not lie on~$\Cc_\phi$ is bounded by~$3 \cdot (\hat{k} + 2k) + 4(1+2k) \leq 6k + 6k + 8k + 4 = 16k+4 = k'$. The cycle~$\Cc_\phi$ can be obtained from~$\Cc^*$ by removing at most~$k'$ edges and inserting at most~$k'$ replacement edges; it can be obtained by a \kprimeOPT move.
\end{claimproof}

We now consider the other direction, and prove that any Hamiltonian cycle in~$G'$ of weight smaller than that of~$\Cc^*$ yields a partition-respecting subgraph isomorphism, regardless of whether or not that cycle can be reached by a \kprimeOPT move from~$\Cc^*$.

\begin{claim}
If there is a Hamiltonian cycle in~$G'$ whose weight is smaller than that of~$\Cc^*$, then there is a partition-respecting isomorphism between~$H$ and a subgraph of~$G$.
\end{claim}
\begin{claimproof}
We start by analyzing the structure of Hamiltonian cycles in~$G'$, based on the inserted gadgets. For any choice gadget of some size~$\ell$, which is a cycle on~$3(\ell+1)$ vertices~$(x_0, y_0, z_0, \ldots, x_\ell, y_\ell, z_\ell)$, the~$z$-vertices have degree two in~$G'$. Hence any Hamiltonian cycle contains both edges incident on each $z$-vertex. Since a Hamiltonian cycle cannot use \emph{all} edges of the~$3(\ell+1)$ cycle (it would close the cycle too early, and therefore fail to reach some vertices of~$G'$), it must avoid at least one edge. By the previous argument, this is an edge not incident to a $z$-vertex, hence an edge of the form~$(x_i, y_i)$ for~$0 \leq i \leq \ell$. Since~$x_i$ has degree two in any Hamiltonian cycle, the cycle uses an external edge incident on~$x_i$, which has weight two by construction. By Observation~\ref{obs:weight:two}, this implies that for each choice gadget in the graph, any Hamiltonian cycle uses at least one distinct weight-$2$ edge. Hence the weight of any Hamiltonian cycle in~$G'$ is at least~$|V(G')| + L$. 

If a Hamiltonian cycle does not use a valid traversal of a choice gadget, then the cycle avoids two or more edges of the~$3(\ell+1)$ internal edges on the gadget, which means avoiding two distinct edges~$\{x_i, y_i\}$ and~$\{x_j, y_j\}$. But such a Hamiltonian cycle uses at least two distinct external weight-$2$ edges incident on that gadget, implying its total cost is strictly larger than~$|V(G')| + L$. It follows that a Hamiltonian cycle that is cheaper than~$\Cc^*$ uses a valid traversal of each choice gadget. 

Since the only Hamiltonian paths ending at terminals of the domino gadgets are the two paths shown in Figure~\ref{fig:domino}, it follows that any Hamiltonian cycle uses one of these two traversals in each domino gadget. 

Now let~$\Cc$ be a Hamiltonian cycle in~$G'$ of weight smaller than that of~$\Cc^*$. By the previous arguments~$\Cc$ has weight exactly~$|V(G')| + L$. Since each Hamiltonian cycle uses at least~$L$ weight-$2$ edges incident on choice gadgets, if~$\Cc$ has weight~$|V(G')| + L$ then it cannot use the weight-$2$ edge~$e^*$. Hence~$\Cc$ cannot enter choice gadget~$C_1$ at~$C_1[x_0]$, since it would imply leaving~$C_{\hat{k}}$ at~$C_{\hat{k}}[y_0]$ and entering~$D^*$ at~$D^*[x_2]$ and leaving at~$D^*[y_2]$ via~$e^*$. This implies that the traversal of~$C_1$ used is not from~$C_1[x_0]$ to~$C_1[y_0]$, which implies that~$\Cc$ cannot enter~$C_2$ at~$C_2[x_0]$, and so on. So for each~$i \in V(H)$ cycle~$\Cc$ uses a traversal of~$C_i$ corresponding to its~$(x_{p_i}, y_{p_i})$ terminal pair for some~$p_i \in [|\psi^{-1}(i)|]$. Now construct a mapping~$\phi \colon V(H) \to V(G)$ by setting~$\phi(i) = v_i$, where~$v_i \in \psi^{-1}(i)$ is the unique vertex such that~$\rk(v_i, \psi^{-1}(i)) = p_i$. By this definition we trivially satisfy~$\psi(\phi(i)) = i$. It remains to prove that for all edges~$\{i,j\} \in E(H)$ we have~$\{\phi(v_i), \phi(v_j)\} \in E(G)$.

For each~$i \in V(H)$, cycle~$\Cc$ traverses choice gadget~$C_i$ from terminal~$x_{p_i}$ to terminal~$y_{p_i}$. Hence~$\Cc$ uses the unique external edge incident to~$C_i[x_{p_i}]$, whose other endpoint is~$\Fv_{\rk(v_i, V(G))-1}[y]$. (Recall that all orders derive from the ordering~$V(G) = (u_1, \ldots, u_n)$, so that~$\rk(u_\ell, V(G)) = \ell$ for all~$\ell \in [n]$.) Therefore the edge from~$\Fv_{\rk(v_i, V(G))-1}[y]$ to~$C_{i,p_i,j_1}[x_0]$ is not used on~$\Cc$, where~$N_H(i) = \{j_1, \ldots, j_{\deg_H(i)}\}$. As a consequence, cycle~$\Cc$ does not use the $(x_0,y_0)$-traversal of any choice gadget~$C_{i,p_i,j}$ for~$j \in N_H(i)$. Hence for each~$i \in V(H)$, for each~$j \in N_H(i)$, cycle~$\Cc$ uses an external edge~$e_{i,p_i,j}$ incident on an $x$-terminal of~$C_{i,p_i, j}$ with index larger than~$0$. By step~\eqref{connections:fa:choice} of the construction, the other endpoint of~$e_{i,p_i,j}$ is of the form~$\Fa_{\ell-1}[y]$, where~$a_\ell$ is an arc of~$\vec{G}$ starting in~$\phi(i)$ whose underlying undirected edge connects~$\phi(i)$ in~$G$ to a vertex of~$\psi^{-1}(j)$. Consequently, the edge from~$\Fa_{\ell-1}[y]$ to~$D_\ell[x_1]$ is not used on~$\Cc$, which means that domino~$D_\ell$ is not visited by an~$(x_1,y_1)$-traversal, and so must be visited by an~$(x_2,y_2)$-traversal. Hence for each vertex~$i$ of~$H$, for each~$j \in N_H(i)$, there is exactly one domino~$D_\ell$ in~$G'$ that corresponds to an arc from~$\phi(i)$ to a neighbor in~$\psi^{-1}(j)$ for which~$D_\ell$ is visited by an~$(x_2, y_2)$-traversal of~$\Cc$. 

Conversely to the conclusion from the previous paragraph, let us argue that a domino~$D_\ell$ for an arc~$a_\ell = (u,v)$ can only be visited by an~$(x_2,y_2)$-traversal if~$\phi(\psi(u)) = u$, i.e., if vertex~$u$ was selected as an image in the solution subgraph. This follows from the fact that an~$(x_2,y_2)$-traversal of~$D_\ell$ implies that~$\Cc$ does not use the edge from~$\Fa_{\ell-1}[y]$ to~$D_\ell[x_1]$, and therefore~$\Cc$ continues from~$\Fa_{\ell-1}[y]$ by visiting the choice gadget~$C_{\psi(u), \rk(u, \psi^{-1}(\psi(u))), \psi(v)}$ instead. This is only possible if that choice gadget was not visited by an~$(x_0,y_0)$-traversal, which is only possible if~$\Cc$ moves from~$\Fv_{\rk(u, V(G))-1}[y]$ to choice gadget~$C_i$. As each such choice gadget can only be visited once, it follows that for each~$i \in V(H)$ there is a unique vertex~$p$ for which the choice gadgets~$C_{i,p,j}$ for~$j \in N_H(i)$ can be visited by traversals other than the one connecting~$(x_0,y_0)$. By our definition of~$\phi(i)$ based on the traversal of~$C_i$, which determines the value of~$p$, it follows that a domino~$D_\ell$ for~$a_\ell = (u,v)$ can indeed only be visited by an~$(x_2,y_2)$-traversal if~$\phi(\psi(u)) = u$.

We claim that for an arc~$a_\ell = (u,v)$, the domino~$D_\ell$ can only be visited by an~$(x_2, y_2)$-traversal if the domino~$D_{\ell'}$ for the reverse arc~$a_{\ell'} = (v,u)$ is \emph{also} visited by an~$(x_2, y_2)$-traversal. To see this, suppose that~$\phi(u) < \phi(v)$. Then the only external edge leaving~$D_\ell[y_2]$ leads to~$D_{\ell'}[x_2]$, and therefore to leave~$D_\ell[y_2]$ requires the use of the~$(x_2,y_2)$-traversal of~$D_{\ell'}$. Similarly, if~$\phi(v) < \phi(u)$ then the only way to reach~$D_\ell[x_2]$ from an external edge is to use the edge from~$D_{\ell'}[y_2]$, hence requiring an~$(x_2,y_2)$-traversal of the latter. 

Using the claims of the previous two paragraphs, we complete the argument. We have seen that for each~$i \in V(H)$ and~$j \in N_H(i)$, cycle~$\Cc$ selects the domino of exactly one oriented arc from~$\phi(i)$ to a neighbor in~$\psi^{-1}(j)$ to visit by an~$(x_2, y_2)$-traversal. As the preceding argument shows that such a traversal is only possible if the arc for the reverse orientation is also selected via an~$(x_2, y_2)$-traversal, while only arcs incident on vertices in the range of~$\phi$ can be selected in this way, it follows that for each~$\{i,j\} \in E(H)$ we have~$\{\phi(i), \phi(j)\} \in E(G)$. This concludes the proof.
\end{claimproof}

Observe that the construction can easily be performed in time that is linear in the number of vertices and edges of~$G$ and~$H$ combined. This concludes the proof of Lemma~\ref{lemma:subiso:to:hamcycle}.
\end{proof}

As a direct consequence of Lemma~\ref{lemma:subiso:to:hamcycle} and the known ETH lowerbound for \probSubIso, we obtain the following.

\begin{theorem} \label{thm:eth:lowerbound}
There is no function~$f$ for which \probKOPTDec on $n$-vertex graphs of maximum degree~$3$ with edge weights in~$\{1,2\}$ can be solved in time~$f(k) \cdot n^{o(k / \log k)}$, unless ETH fails.
\end{theorem}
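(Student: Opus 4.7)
The plan is to obtain Theorem~\ref{thm:eth:lowerbound} as an immediate contrapositive of Lemma~\ref{lemma:subiso:to:hamcycle} combined with Theorem~\ref{theorem:marx:subiso}. Suppose, towards a contradiction, that some function~$f$ and algorithm~$\mathcal{A}$ solve \probKOPTDec on $n$-vertex subcubic graphs with weights in~$\{1,2\}$ in time~$f(k)\cdot n^{o(k/\log k)}$. Given any instance~$(G,H,\psi)$ of \probSubIso with~$k := |E(H)|$, I would first apply Lemma~\ref{lemma:subiso:to:hamcycle} in linear time to build a subcubic graph~$G'$, weight function~$w\colon E(G')\to\{1,2\}$, Hamiltonian cycle~$\mathcal{C}^*$, and integer~$k'\in\mathcal{O}(k)$. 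Parts~\eqref{pty:cycle:implies:isomorphism} and~\eqref{pty:isomorphism:implies:cycle} of that lemma together give the equivalence: $(G,H,\psi)$ is a yes-instance of \probSubIso if and only if~$(G', w, \mathcal{C}^*, k')$ is a yes-instance of \probKOPTDec, because the existence of any Hamiltonian cycle in~$G'$ cheaper than~$\mathcal{C}^*$ is witnessed, in the interesting direction, by one reachable through a \kprimeOPT move.

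Next I would simply run~$\mathcal{A}$ on~$(G', w, \mathcal{C}^*, k')$ and return its answer. Writing~$N := |V(G')|$, linearity of the reduction yields $N = \mathcal{O}(|V(G)| + |E(G)| + |V(H)| + |E(H)|) = n^{\mathcal{O}(1)}$, and in particular~$\log N = \Theta(\log n)$. Since~$k' = \Theta(k)$, we have~$k'/\log k' = \Theta(k/\log k)$, so the total running time becomes
\[
f(k')\cdot N^{o(k'/\log k')} \;=\; g(k)\cdot n^{o(k/\log k)}
\]
for some function~$g$, where I used that the $o(\cdot)$ estimate in the exponent absorbs both the constant blow-up in~$N$ relative to~$n$ and the constant blow-up in~$k'$ relative to~$k$. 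This contradicts Theorem~\ref{theorem:marx:subiso}, yielding the ETH-based lower bound.

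All the real work has already been done inside Lemma~\ref{lemma:subiso:to:hamcycle}; the only mild subtlety is the manipulation of the~$o(\cdot)$ expression through the polynomial blow-up of the reduction, which is routine. Consequently I do not expect any obstacle here, and the statement follows in a few lines.
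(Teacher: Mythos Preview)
The proposal is correct and takes essentially the same approach as the paper, which presents the theorem without proof as ``a direct consequence of Lemma~\ref{lemma:subiso:to:hamcycle} and the known ETH lower bound for \probSubIso.'' Your write-up merely fills in the routine running-time bookkeeping that the paper leaves implicit.
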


We remark that the lower bound also holds for \emph{permissive} local search algorithms which output an improved Hamiltonian cycle of arbitrarily large Hamming distance to the starting cycle~$\Cc$, if a cheaper cycle exists in the \kOPT neighborhood of~$\Cc$.

\bibliography{bounded-degree-k-opt}

\pagebreak
\appendix





\section{Program to check Lemma~\ref{lem:relax2}}
\label{sec:app-code}
\lstinputlisting[language=Python,showstringspaces=false,columns=fullflexible]{exp/check.py}

\end{document}